\documentclass[11pt]{article}

\usepackage[T1]{fontenc}
\usepackage[utf8]{inputenc}
\usepackage{lmodern}
\usepackage{amsmath,amssymb,amsthm,mathtools}
\usepackage{booktabs,graphicx}
\usepackage[font=small,labelfont=bf,justification=raggedright,singlelinecheck=false]{caption}
\usepackage{flafter}
\usepackage{placeins}
\usepackage{enumitem}
\usepackage{etoolbox}
\usepackage{microtype}
\usepackage[a4paper,margin=27mm]{geometry}
\usepackage[hidelinks]{hyperref}
\usepackage{orcidlink}

\newcommand{\R}{\mathbb R}
\newcommand{\B}{\mathcal B}
\newcommand{\E}{\mathbb E}
\newcommand{\A}{\mathcal A}
\newcommand{\norm}[1]{\left\lVert#1\right\rVert}

\newtheorem{theorem}{Theorem}[section]
\newtheorem{proposition}[theorem]{Proposition}
\newtheorem{lemma}[theorem]{Lemma}
\newtheorem{corollary}[theorem]{Corollary}

\AfterEndEnvironment{proof}{\par\smallskip}

\title{Sequential Euclidean connections with exponential memory:\\
distributional performance and adversarial robustness}

\author{Pedro M. M. de Castro\,\orcidlink{0000-0002-9470-7458}\\
\small Centro de Inform\'atica, Universidade Federal de Pernambuco\\
\small Recife, Pernambuco, Brazil\\
\small \texttt{pmmc@cin.ufpe.br}}
\date{August 29, 2026}

\hypersetup{
  pdftitle={Sequential Euclidean connections with exponential memory: distributional performance and adversarial robustness},
  pdfauthor={Pedro M. M. de Castro}
}

\begin{document}

\maketitle

\begin{abstract}
Points in the unit ball of \(\R^d\) are processed sequentially.
Each new point \(p_i\) is connected to a state \(x_{i-1}\) that summarizes
earlier observations, after which
\(x_i=\gamma x_{i-1}+(1-\gamma)p_i\), with \(0\leq\gamma\leq1\).  The cost is
the sum of the \(\alpha\)-powers of the connection lengths.  This constant-gain
rule interpolates between the input-order path and the star centered at the
initial point.  For independent uniform points, we establish the stationary
insertion-length distribution and prove that it decreases in stochastic order
as \(\gamma\) increases.  If \(d+\alpha>2\), or if
\((d,\alpha)=(1,1)\), the optimal constant parameter satisfies
\(1-\gamma_N^*=\Theta(N^{-1/2})\), with an explicit asymptotic constant and
closed bounds.  For \(\alpha=1\), its leading expected tree length equals that
of the center star and is eventually smaller than the expected lengths of both
endpoint constructions.  For \(\alpha=2\), the optimizer is unique and
characterized exactly.  For the same \(N\), choosing \(1-\gamma_N\) as a fixed
positive multiple of \(N^{-1/2}\) gives a sharp two-term expansion of the
expected uniform-input cost and a maximal adversarial mean cost of
\(1+O(N^{-1/2})\).  For every fixed
\(0\leq\gamma<1\) and \(0<\alpha\leq3\), the exact asymptotic adversarial value
is \((2/(1+\gamma))^\alpha\).  When \(d\geq2\), exponential weighting is within
a factor smaller than \(1.161^\alpha\) of the best fixed nonnegative weighted
rule with the same average look-back, for \(0<\alpha\leq3\).  Comparison with
the running mean highlights its time-homogeneous update, stationary coefficient
profile, and fixed effective memory.
\end{abstract}

\medskip
\noindent\textbf{AI-use disclosure.}
OpenAI's ChatGPT and Codex were used as research tools to assist the author with
literature retrieval, mathematical exploration, symbolic and numerical checks,
and draft preparation and revision.  The author formulated the research
questions and mathematical framework, developed and refined the results and
arguments with this assistance, and critically reviewed the manuscript
throughout.  Responsibility for the mathematical statements and the final
manuscript rests entirely with the author.

\medskip
\noindent\textbf{Keywords.}
Geometric probability, stochastic recursion, exponentially weighted
averages, stochastic orders, online algorithms, worst-case analysis.

\smallskip
\noindent\textbf{Mathematics Subject Classification.}
60D05, 60E15, 60G10, 68W27.

\section{Introduction}

Consider a sequence of points in a Euclidean ball and a tree constructed in
the prescribed order of that sequence.  A central question is what information
about the preceding points should be retained when the next attachment point is
chosen.  The path in insertion order retains only the most recent point.  Under
random input, its expected cost is therefore governed by the distances between
successive points, and an adversary can force successive connections across a
diameter.  The star centered at the center \(O\) of the ball ignores the
observed sequence and keeps every edge within the radius, but its center is a
 prescribed anchor chosen independently of the input.  We ask whether a
 constant-gain state, generated only from the observations, can improve uniform-input
performance while reducing the worst-case scale of the path.  In this paper,
the distributional side means
expected performance for independent uniform points in the ball, while the
worst-case side allows an arbitrary prescribed input sequence.

Starting with \(x_0=p_0\), we attach \(p_i\) to \(x_{i-1}\) and then update
\[
 x_i=\gamma x_{i-1}+(1-\gamma)p_i,
\]
where \(0\leq\gamma\leq1\) is kept constant throughout the construction.  The
state \(x_i\) is the memory retained after processing \(p_i\): it summarizes the
entire observed sequence by giving progressively less weight to older points
and requires only one \(d\)-dimensional point of storage.  The parameter
\(\gamma\) controls how long the past influences subsequent attachments.  The
choice \(\gamma=0\) gives the path in insertion order, and \(\gamma=1\) gives
the star centered at \(p_0\).  For \(0<\gamma<1\), this is the geometric moving
average introduced by Roberts~\cite{Roberts1959}, now usually called an exponentially
weighted moving average.  The results below show that this constant-gain state
retains sequence-dependent attachment points, improves
the expected cost under uniform sampling, and reduces the exact worst-case
scale of the path.  The online rule stores one \(d\)-dimensional point of
 working state.  Retaining the updated points as labelled auxiliary vertices
 gives a caterpillar graph whose spine is \(x_0,x_1,\ldots,x_N\).  Our primary
 objective is the sum of the \(\alpha\)-powers of the
insertion lengths.
When \(\alpha=1\), this objective is also the Euclidean length of the subdivided
tree.  For other values of \(\alpha\), the edge-power cost of the labelled
 tree is related to the insertion cost by the explicit factor in
 Eq.\eqref{eq:tree-power-cost}.  Since this factor depends on \(\gamma\), a
 parameter minimizing the insertion cost need not minimize the edge-power cost.
 The optimization objective throughout is the insertion cost.

One concrete algorithmic motivation comes from sequential point location in a
fixed planar Delaunay triangulation.  Suppose that \(p_1,\ldots,p_N\) are
successive queries and that a triangle containing \(x_{i-1}\) is known.  A
straight walk from \(x_{i-1}\) to \(p_i\) visits the triangles intersected by
the segment \([x_{i-1},p_i]\).  The insertion length is therefore the length of
the geometric trajectory used by this walk.  Moreover,
\[
 x_i=x_{i-1}+(1-\gamma)(p_i-x_{i-1})
\]
lies on the same segment.  While locating \(p_i\), one can record a triangle
containing \(x_i\) and use it as the starting triangle for the next query.  The
same walk then performs two related tasks: it locates the current query and
provides the location state from which the following query can be processed.
In particular, the updated state requires no separate point-location walk.

This connection also explains the role of the memory parameter.  Starting each
walk at the preceding query makes the procedure immediately responsive to the
observed sequence, but alternating queries between distant regions produces
long displacements.  Starting every walk at a prescribed center gives a stable
reference point while discarding the spatial coherence of the queries.  The
state \(x_i\) lies between these two choices: it follows a persistent change in
the query locations and attenuates the effect of an isolated long jump.  Thus
the insertion objective measures the motion of a starting point that adapts to
the observed sequence with a tunable amount of memory.  Walking strategies for
point location are studied in~\cite{DevillersPionTeillaud2002}, and the use of
previous queries in distribution-sensitive point location is developed
in~\cite{CastroDevillers2013}.  Intersections with random geometric objects and
 navigation in random Delaunay triangulations are analyzed under specific
 probabilistic and boundary assumptions in
\cite{BoseDevroye1998,DevroyeMuckeZhu1998,BroutinDevillersHemsley2016}.
For a straight walk, the Euclidean insertion length is the length of its
trajectory.  The number of crossed triangles also depends on the geometry and
spatial distribution of the triangulation, so a traversal-complexity analysis
requires additional hypotheses.

The same recursion also has a restrained facility-location interpretation.  If
\(x_{i-1}\) is the position of a mobile facility and \(p_i\) is the next
request, the update moves the facility a fixed fraction \(1-\gamma\) toward
that request.  Both the travelled distance and the remaining service distance
are fixed multiples of the insertion length.  Related online models charge both
movement and service distances~\cite{FeldkordKnollmannMeyer2022}.  Smoothed
online optimization formalizes the sum of a movement cost and a per-round
hitting cost; Online Balanced Descent chooses each decision by balancing these
two terms~\cite{ChenGoelWierman2018,GoelLinSunWierman2019}.  Convex-function
chasing likewise asks an online decision maker to control function and movement
costs~\cite{ArgueGuptaGuruganesh2020}.  The rule studied here is a specialized
geometric policy: its gain is fixed in advance, and the new request determines
both the insertion vector and the next affine state.  This restricted form
permits exact distributional and adversarial calculations.

The first main result establishes the distributional side of the tradeoff.
For independent points uniformly distributed in the ball, let \(Z_\gamma\) be
the stationary insertion length.  We prove that
\[
 0\leq\gamma_1<\gamma_2<1
 \quad\Longrightarrow\quad
 \Pr(Z_{\gamma_2}>r)\leq\Pr(Z_{\gamma_1}>r)
 \quad(r\geq0).
\]
Thus every positive moment of the insertion length is nonincreasing as the rule gives
more weight to the past.  In the high-weight limit \((\gamma\to1^-)\), the attachment point
concentrates at \(O\), and the expected \(\alpha\)-power approaches
\(d/(d+\alpha)\), the corresponding value for the center star.  The proof uses
majorization and a theorem of Olkin and Tong~\cite{OlkinTong1988} on the peakedness of
weighted sums of independent random vectors with symmetric
log-concave densities.

The same phenomenon leads to a usable parameter choice for a fixed number of
insertions.  We optimize over rules whose value of \(\gamma\) remains constant
during the construction.  When \(d+\alpha>2\), as well as in the separate case
\((d,\alpha)=(1,1)\), every minimizing choice satisfies
\(1-\gamma_N^*=\Theta(N^{-1/2})\), and the expected cost is
\(Nd/(d+\alpha)\) plus an explicitly bounded term of order \(\sqrt N\).  The
constant is given by a one-dimensional integral and bounded above and below by
closed expressions in \(d\) and \(\alpha\).  For \(\alpha=2\), the complete
formula is explicit for every \(N\), and, for \(N\geq2\),
\[
 1-\gamma_N^*=N^{-1/2}-\tfrac12N^{-1}+O(N^{-3/2}).
\]
 The same scale also links the distributional and adversarial criteria for a
 fixed number \(N\) of insertions.  If \(\gamma_N=1-\lambda N^{-1/2}\), the uniform-input cost has
the expansion above with its \(\lambda\)-dependent constant, and the largest
mean cost over length-\(N\) input sequences is \(1+O(N^{-1/2})\).  The latter
bound includes the potential terms that remain when \(\gamma\) varies with
\(N\).
For ordinary Euclidean length, the comparison is especially direct.  The
input-order path and the star centered at \(p_0\) both have expected length
\(N\mu_d\), where \(\mu_d\) is the mean distance between two independent
uniform points.  The optimized exponential rule has leading expected cost
\(Nd/(d+1)\), which is the leading cost of the center star and is strictly
smaller than \(N\mu_d\).

The second main result supplies the worst-case side of the tradeoff.  For every
fixed \(0\leq\gamma<1\) and \(0<\alpha\leq3\), the largest asymptotic mean
insertion cost over arbitrary point sequences is
\[
 \left(\frac2{1+\gamma}\right)^\alpha.
\]
The lower bound is attained by alternating the endpoints of a diameter, and
the upper bound follows from explicit state potentials.  As \(\gamma\)
increases to one, this value approaches the unit worst-case scale of the center
star.  The rule therefore preserves an attachment point computed from the
input while its worst-case mean cost approaches that stable reference value.  For
 every fixed \(0<\gamma<1\) and integer \(m\geq2\), the periodic input that
 repeats one endpoint \(m\) times and then its antipode \(m\) times eventually
 has a strictly larger adversarial mean cost than antipodal alternation as
 \(\alpha\) tends to infinity.

 We also compare exponential weighting with all fixed nonnegative weighted
 averages for which the contributing points lie, on average, the same number of
 insertion steps before the most recent one.  In
dimension at least two and for \(0<\alpha\leq3\), its adversarial cost is less
than \(1.161^\alpha\) times the smallest cost in this class.  The ratio tends
 to one as this average grows.  This gives a direct justification
for exponential weights under that precise constraint.  The broader class
may have infinite support and need not admit a constant-state implementation.

Keeping \(\gamma\) constant is an essential restriction.  The arithmetic mean
of all observed points also stores one point, but its coefficients change at
 every insertion and its average number of insertion steps into the past grows
 with the sequence.  For uniform
 input, the difference between its expected cost and \(Nd/(d+\alpha)\) is of
 order \(\log N\).  Under the hypotheses of
 Theorem~\ref{thm:global-finite-optimizer}, this is smaller than the
 \(\sqrt N\) correction of the best constant \(\gamma\).  For
 \(0<\alpha\leq2\), its asymptotic adversarial mean cost is also one.  These
 facts delimit the role of exponential weighting precisely.  The fixed
 exponential rule is distinguished by its stationary behavior,
 time-homogeneous constant-gain update, fixed effective memory, and exact
 adversarial guarantees through \(\alpha=3\).

 The stationary monotonicity and second-moment formulas used in the central
 comparison are proved in the main text.  Further properties are organized by
 function in the appendices: a stochastic comparison for the running mean,
 exact transforms and fourth moments, high-memory distributional limits, and
 the periodic obstruction beyond \(\alpha=3\).  The stationary analysis belongs
 to the theory of iterated random functions~\cite{DiaconisFreedman1999}.

Sequential connection in Euclidean space provides the geometric background
\cite{Steele1989,CastroDevillers2011}.  Online Steiner tree problems study
related incremental connections in graphs and metrics
\cite{ImaseWaxman1991,GuptaKumar2014}, including recent variants with
 predictions~\cite{XuMoseley2022}.  Our earlier work
\cite{CastroDevillers2011} studies power-weighted Euclidean minimum
\(k\)-insertion trees under worst-case and uniform input, together with star
comparisons under uniform input; its case \(k=1\) is the input-order path used
here as an endpoint.  The present model replaces the terminal
attachment point by one affine summary of the observed sequence.  Exponentially
 weighted moving averages provide the recursion
\cite{Roberts1959}, while iterated random functions provide
the stationary framework~\cite{DiaconisFreedman1999}.  The results below join
these ingredients through uniform-input optimization and exact adversarial
analysis.  Section~\ref{sec:model} defines the rule
and its geometric realization.  Section~\ref{sec:uniform} treats uniform input,
the optimal constant parameter, and the time-varying comparison.
 Section~\ref{sec:adversarial} proves the worst-case results and compares fixed
 weighted averages.  The appendices collect the logically separate supporting
  results described above.

\paragraph{\textbf{Contributions.}}
The established ingredients used here include the exponential recursion,
stationary affine series, multivariate majorization, characteristic-function
products, and triangular-array central limit theory.  The contributions of the
paper are the radial finite-sample bounds for the insertion cost; the sharp
square-root scale and explicit constant for the optimal fixed gain; the exact
 quadratic optimizer for every \(N\); stochastic monotonicity of the stationary
 insertion length; a distributional and adversarial guarantee for the same \(N\); the
exact adversarial value through cubic costs and its periodic obstruction at
higher powers; and the approximation guarantee among fixed nonnegative
weighted averages.  The numerical section evaluates the explicit constants,
 \(N\)-dependent calibration, and finite-support weight comparison with a fully
reproducible implementation.

\section{The insertion strategy}
\label{sec:model}

Let
\[
 \B=\{p\in\R^d:\norm p\leq1\}
\]
be the unit ball, with center \(O\).  Throughout,
\(d\in\{1,2,\ldots\}\).  Let
\(p_0,p_1,\ldots,p_N\in\B\) be given in their insertion order.  The state is
required to be generated from the observed points, so the main model starts
with \(x_0=p_0\).  For a fixed value of \(\gamma\), define
\begin{equation}
 x_i=\gamma x_{i-1}+(1-\gamma)p_i,
 \qquad 0\leq\gamma\leq1.
 \label{eq:recursion}
\end{equation}
The insertion cost of the next \(N\) points is
\begin{equation}
 L_{\alpha,N}^{(\gamma)}
 =\sum_{i=1}^N\norm{p_i-x_{i-1}}^\alpha ,
 \qquad \alpha>0.
 \label{eq:insertion-cost}
\end{equation}
We call the construction defined by Eq.\eqref{eq:recursion} the
 \(\gamma\)-strategy.  Its state is a geometrically weighted average of the
 observed points and is stored as one point of \(\B\).  For fixed \(d\), this
 is constant-dimensional state in the real-RAM model.  The insertion index and
 the fixed parameter \(\gamma\) require only scalar storage.
 This statement concerns the state needed to compute the next attachment point.

\begin{proposition}
\label{prop:geometry}
For every input sequence in \(\B\), the recursion keeps \(x_i\) in \(\B\).
More precisely, \(x_i\) belongs to
\(\operatorname{conv}\{p_0,\ldots,p_i\}\).
With \(x_0=p_0\), the choice \(\gamma=1\) gives the star centered at
\(p_0\), and \(\gamma=0\) gives the path
\(p_0,p_1,\ldots,p_N\) defined by the insertion order.
\end{proposition}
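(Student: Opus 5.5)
The plan is a simple induction on \(i\) for the convex-hull claim, followed by direct substitution of the endpoint values of \(\gamma\).

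For the base case, \(x_0=p_0\in\operatorname{conv}\{p_0\}\). For the inductive step, assume \(x_{i-1}\in\operatorname{conv}\{p_0,\ldots,p_{i-1}\}\subseteq\operatorname{conv}\{p_0,\ldots,p_i\}\). Because \(0\leq\gamma\leq1\), the update \eqref{eq:recursion} writes \(x_i\) as a convex combination of \(x_{i-1}\) and \(p_i\), and both lie in the convex set \(\operatorname{conv}\{p_0,\ldots,p_i\}\). Hence \(x_i\) lies in it too.

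To make this explicit, we can unroll the recursion:
\[
 x_i=\gamma^i p_0+\sum_{j=1}^i(1-\gamma)\gamma^{i-j}p_j .
\]
The coefficients are nonnegative. Their sum is \(\gamma^i+(1-\gamma)\sum_{k=0}^{i-1}\gamma^k=\gamma^i+(1-\gamma^i)=1\), with the convention \(0^0=1\). Membership in \(\B\) then follows because \(\B\) is convex and contains every \(p_j\), so it contains their convex hull.

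For the endpoint cases, take \(\gamma=1\) first. Induction gives \(x_i=x_{i-1}=\cdots=p_0\), so every \(p_i\) is attached to \(p_0\), which produces the star centered at \(p_0\). Next take \(\gamma=0\). Then \(x_i=p_i\) for every \(i\geq0\), since this holds at \(i=0\) by definition. So \(p_i\) is attached to \(x_{i-1}=p_{i-1}\), and the edges \(p_{i-1}p_i\) form the path \(p_0,p_1,\ldots,p_N\) in insertion order.

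There is no real obstacle here. The only points needing care are the convention \(\gamma^0=1\) when \(\gamma=0\), and the observation that the convex hull of the first \(i\) points is contained in that of the first \(i+1\).
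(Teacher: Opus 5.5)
Your proposal is correct and follows essentially the same route as the paper: induction on the convex-combination update gives membership in the convex hull, hence in \(\B\), and direct substitution of \(\gamma=1\) and \(\gamma=0\) gives the star and the path. The unrolled formula with its coefficient check is a harmless explicit version of the same induction.
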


\begin{proof}
Eq.\eqref{eq:recursion} is a convex combination of \(x_{i-1}\) and \(p_i\).
Induction gives the stated convex-hull property and hence \(x_i\in\B\).
For \(\gamma=1\), \(x_i=p_0\) at every step, so the segments are
\([p_0,p_i]\).  For \(\gamma=0\), \(x_i=p_i\), so the segments are
\([p_{i-1},p_i]\).
\end{proof}

The initialization rule excludes an exogenous anchor.  If the center \(O\) is
made available as an additional initial vertex, then \(x_0=O\) and
\(\gamma=1\) join \(p_1,\ldots,p_N\) to \(O\).  Connecting the point \(p_0\)
in that different model requires the additional segment \([O,p_0]\).  Under
uniform input, the known center is useful as a reference construction, while all
optimization results below retain the admissible initialization \(x_0=p_0\).

 For a precise geometric realization, retain a labelled auxiliary vertex \(x_i\)
 after the \(i\)-th update and subdivide \([x_{i-1},p_i]\) at \(x_i\).  It is a
 Steiner vertex of the geometric realization whenever it is distinct from the
 input points.  The abstract
graph has the edges
\[
 [x_{i-1},x_i]\quad\hbox{and}\quad[x_i,p_i],
 \qquad 1\leq i\leq N.
\]
Thus \(x_0,x_1,\ldots,x_N\) form a path, and each input point \(p_i\),
\(1\leq i\leq N\), is joined to \(x_i\).  This labelled abstract graph is the
tree used throughout the paper.  Geometric coincidence does not identify
distinct labels, and crossings or overlaps of straight segments do not create
additional vertices.  Edge lengths are counted with multiplicity.  A
zero-length edge may be contracted only by identifying its two adjacent
 endpoints.  Contracting the zero-length leaf edges when \(\gamma=0\) gives the
input-order path, while contracting the zero-length spine edges when
\(\gamma=1\) gives the star centered at \(p_0\).

Figure~\ref{fig:strategy-geometry} shows the two endpoint constructions and an
intermediate value of \(\gamma\) for the same ordered points.  The intermediate
 construction retains the spine \(x_0,x_1,\ldots,x_N\) of the caterpillar graph
 and joins each input point \(p_i\) to the auxiliary vertex \(x_i\) obtained
 after its insertion.

\begin{figure}[tbp]
\centering
\includegraphics[width=0.98\textwidth]{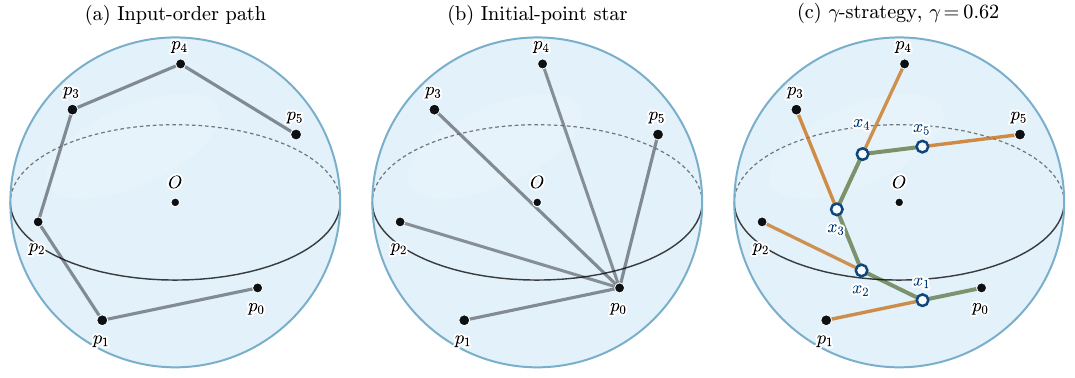}
\caption{Geometric constructions for the same ordered input.  Each panel uses
an orthographic projection of a three-dimensional scene.  The translucent
light-blue shell is the unit sphere, the thin black ellipse is a great circle,
and \(O\) is its center.  In panel (c),
black disks are the input points, open dark-blue circles are the auxiliary
vertices \(x_i\), green segments form the spine
\(x_0,x_1,\ldots,x_N\) of the caterpillar graph, and orange segments are the
leaf edges \([x_i,p_i]\).  Here \(x_0=p_0\).  The zero-length edges at
\(\gamma=0\) and \(\gamma=1\) are contracted in panels (a) and (b),
respectively.}
\label{fig:strategy-geometry}
\end{figure}

Since
\[
 \norm{x_i-x_{i-1}}=(1-\gamma)\norm{p_i-x_{i-1}},
 \qquad
 \norm{p_i-x_i}=\gamma\norm{p_i-x_{i-1}},
\]
the sum of the \(\alpha\)-powers of all edge lengths in the subdivided tree is
\begin{equation}
 \sum_{i=1}^N
 \left(\norm{x_i-x_{i-1}}^\alpha+\norm{p_i-x_i}^\alpha\right)
 =\bigl((1-\gamma)^\alpha+\gamma^\alpha\bigr)
  L_{\alpha,N}^{(\gamma)}.
 \label{eq:tree-power-cost}
\end{equation}
Thus the Euclidean tree length is \(L_{1,N}^{(\gamma)}\).  For
\(\alpha\neq1\), Eq.\eqref{eq:tree-power-cost} gives the edge-power cost of the
tree defined above for each fixed \(\gamma\).  This quantity
depends on the subdivision: dividing a segment of length \(r\) into \(m\)
equal parts changes its contribution from \(r^\alpha\) to
\(m^{1-\alpha}r^\alpha\).  The vertices \(x_i\) are specified by the
construction, so Eq.\eqref{eq:tree-power-cost} describes the tree obtained from
those specified subdivisions.  The optimization objective throughout is the
insertion cost \(L_{\alpha,N}^{(\gamma)}\).

All results are stated for the unit ball.  Scaling the ball by a factor \(R\)
multiplies every insertion cost of order \(\alpha\) by \(R^\alpha\).

The endpoint \(\gamma=1\) is singular for asymptotic statements.  For each
finite \(N\), Eq.\eqref{eq:recursion} and Eq.\eqref{eq:insertion-cost} remain
well defined and give the star centered at \(p_0\).  When \(0\leq\gamma<1\),
the influence of the initial state decays geometrically and the recursion has
a unique stationary law under independent uniform input.  At \(\gamma=1\),
the state stays at its initial value and no unique stationary law exists.  The
same distinction appears under adversarial input: if that model is extended
to \(\gamma=1\), then
\[
 \lim_{n\to\infty}\frac1n
 \sup_{p_1,\ldots,p_n\in\B}
 \sum_{i=1}^n\norm{p_i-x_0}^\alpha
 =(1+\norm{x_0})^\alpha.
\]
The values for fixed \(\gamma<1\) obtained in
Theorem~\ref{thm:through-three} tend to one as \(\gamma\) increases to one.
Thus taking the long-run limit before the endpoint limit gives a different
answer from fixing \(\gamma=1\) first.

\section{Uniform points in the ball}
\label{sec:uniform}

Throughout this section, \(p_0,p_1,\ldots\) are independent and uniformly
distributed in \(\B\), and \(0\leq\gamma<1\).

\subsection{Expected cost for a fixed number of insertions}
\label{sec:finite-sample}

Put
\[
 s_d=\E\norm{p_0}^2=\frac{d}{d+2},
 \qquad
 c_{d,\alpha}=\E\norm{p_0}^\alpha=\frac{d}{d+\alpha}.
\]
Fix a unit vector \(v\) and, for \(0\leq r\leq1\), define the radial mean
cost
\begin{equation}
 g_{d,\alpha}(r)
 =\frac1{\operatorname{vol}(\B)}
  \int_{\B}\norm{p-rv}^\alpha\,dp.
 \label{eq:radial-mean-cost}
\end{equation}
Rotational invariance makes this definition independent of the choice of
\(v\).  For \(x\in\R^d\) and \(t>0\), let
\[
 B(x,t)=\{y\in\R^d:\norm{y-x}\leq t\}.
\]
The next overlap comparison is a direct ball version of the translated-set
inequalities originating with Anderson~\cite{Anderson1955}.  The slice
proof also gives the strictness needed below.

\begin{lemma}
\label{lem:translated-ball-overlap}
For \(t>0\) and a unit vector \(v\), the function
\[
 r\longmapsto \operatorname{vol}\bigl(\B\cap B(rv,t)\bigr)
\]
is nonincreasing on \([0,\infty)\).  If
\(0\leq r_1<r_2\leq1\), then its value at \(r_2\) is strictly smaller than its
value at \(r_1\) for every \(t\) in a nonempty open interval containing one.
\end{lemma}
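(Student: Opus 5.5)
The plan is a slice decomposition along the direction \(v\). Write points as \(y=(s,w)\) with \(s=\langle y,v\rangle\in\R\) and \(w\in v^\perp\). The sections of \(\B\) and of \(B(rv,t)\) at height \(s\) are centered \((d-1)\)-dimensional balls in \(v^\perp\) with radii
\[
 a(s)=\sqrt{(1-s^2)_+},
 \qquad
 b_r(s)=\sqrt{(t^2-(s-r)^2)_+}.
\]
The intersection of two centered balls is the smaller one. By Fubini,
\[
 V(r):=\operatorname{vol}\bigl(\B\cap B(rv,t)\bigr)
 =\kappa_{d-1}\int_{\R}\min\bigl(a(s),b_r(s)\bigr)^{d-1}\,ds,
\]
with \(\kappa_{d-1}\) the volume of the unit \((d-1)\)-ball. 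When \(d=1\) the integrand is simply the indicator of \(\{a>0\}\cap\{b_r>0\}\), i.e.\ the length of \([-1,1]\cap[r-t,r+t]\).

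Both \(f=a^{d-1}\) and \(g=b_0^{d-1}\) are even, nonnegative, and unimodal, being nonincreasing in \(|s|\). Put
\[
 h(r)=\int\min\bigl(f(s),g(s-r)\bigr)\,ds .
\]
Each superlevel set of \(\min(f,g(\cdot-r))\) is the intersection of a symmetric interval \([-\alpha,\alpha]\) with a translated symmetric interval \([r-\beta,r+\beta]\). By the layer-cake formula,
\[
 h(r)=\int_0^\infty \bigl|\{f>u\}\cap\{g(\cdot-r)>u\}\bigr|\,du .
\]
The length of \([-\alpha,\alpha]\cap[r-\beta,r+\beta]\) is an elementary function of \(r\) that is nonincreasing on \(r\geq0\), which gives monotonicity. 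This is the one-dimensional Anderson argument, and it avoids invoking Anderson's theorem in full.

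For strictness, fix \(0\le r_1<r_2\le1\). A strict drop \(h(r_2)<h(r_1)\) occurs as soon as a set of levels \(u\) of positive measure has strictly decreasing interval overlap between \(r_1\) and \(r_2\). The overlap length for intervals of half-lengths \(\alpha,\beta\) is
\[
 \min\bigl(2\alpha,\;2\beta,\;(\alpha+\beta-r)_+\bigr).
\]
It is strictly decreasing in \(r\) on \([r_1,r_2]\) exactly when \(|\alpha-\beta|<r\) and \(r<\alpha+\beta\) hold on a subinterval; the condition \(\alpha+\beta>r_1\) suffices at the left end. So it suffices to exhibit levels where \(|\alpha(u)-\beta(u)|<r_2\) and \(\alpha(u)+\beta(u)>r_1\). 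Take \(t\) near \(1\), say \(|t-1|<\varepsilon\). For \(d\ge2\) the level sets are \(\alpha(u)=\sqrt{1-u^{2/(d-1)}}\) and \(\beta(u)=\sqrt{t^2-u^{2/(d-1)}}\). For \(u\) small, \(|\alpha-\beta|\) is at most of order \(|t-1|\), hence \(<r_2\), while \(\alpha+\beta\approx2>r_1\). For \(d=1\) there is a single level with \(\alpha=1\), \(\beta=t\), and the conditions \(|1-t|<r_2\), \(1+t>r_1\) hold on an open interval of \(t\) around one.

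The main obstacle is the case \(r_2>0\) but \(r_1=0\), with \(r_2\) tiny. Here strictness needs \(|\alpha-\beta|<r_2\), so the interval of \(t\) must shrink with \(r_2\). This is acceptable, since the statement only asks for some nonempty open interval containing one, depending on \(r_1,r_2\). I would make that dependence explicit, choosing \(|t-1|<r_2/2\), and then bound \(|\alpha(u)-\beta(u)|\le|t^2-1|/(\alpha+\beta)\) uniformly for small \(u\). Finally, \(\kappa_{d-1}>0\) transfers strictness from \(h\) to \(V\).
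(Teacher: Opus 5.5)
Your proof is correct and is essentially the paper's argument. The layer-cake levels \(u\) in your hyperplane slicing correspond exactly to the paper's lines parallel to \(v\) at distance \(u^{1/(d-1)}\) from the axis, so both proofs reduce to the fact that the overlap of two intervals with fixed half-lengths is nonincreasing in the distance between their centers. Your explicit choice \(|t-1|<r_2/2\) makes quantitative the paper's continuity argument from the central slice at \(t=1\).
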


\begin{proof}
Slice both balls by lines parallel to \(v\).  On a line indexed by
\(z\in v^\perp\), the two slices, when nonempty, are intervals centered at
zero and \(r\), with half-lengths
\[
 \bigl(1-\norm z^2\bigr)^{1/2}
 \quad\hbox{and}\quad
 \bigl(t^2-\norm z^2\bigr)^{1/2}.
\]
The length of the intersection of two intervals with fixed half-lengths is
nonincreasing in the distance between their centers.  Integration over
\(v^\perp\) proves the first assertion.  For \(t=1\), the two central slices
have the same positive half-length and their overlap is strictly decreasing as
the separation increases from \(r_1\) to \(r_2\).  The same strict inequality
holds for all \(z\) in a neighborhood of zero and, by continuity, for all
\(t\) in a neighborhood of one.  Integration over this positive-measure set of
slices proves the second assertion.
\end{proof}

\begin{proposition}
\label{prop:conditional-kernel}
For a fixed \(x\in\B\) and an independent uniform point \(p\),
\begin{equation}
 \E\norm{p-x}^\alpha=g_{d,\alpha}(\norm x).
 \label{eq:conditional-moment}
\end{equation}
Moreover,
\begin{equation}
 g_{d,\alpha}(r)
 =c_{d,\alpha}+\frac\alpha2r^2+o(r^2)
 \quad\hbox{as }r\to0^+,
 \label{eq:radial-local-expansion}
\end{equation}
and \(g_{d,\alpha}(r)>c_{d,\alpha}\) for \(r>0\).
\end{proposition}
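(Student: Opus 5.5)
The identity \eqref{eq:conditional-moment} is immediate from rotational invariance. Choose a rotation \(Q\) with \(Qx=\norm x v\). Since the uniform law on \(\B\) is invariant under \(Q\), we have
\[
 \E\norm{p-x}^\alpha=\E\norm{Qp-\norm x v}^\alpha=g_{d,\alpha}(\norm x).
\]
This is just Eq.\eqref{eq:radial-mean-cost} with \(r=\norm x\).

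\textbf{Strict inequality.} Write
\[
 g_{d,\alpha}(r)=\frac1{\operatorname{vol}(\B)}\int_0^\infty \alpha t^{\alpha-1}\Bigl(\operatorname{vol}(\B)-\operatorname{vol}\bigl(\B\cap B(rv,t)\bigr)\Bigr)\,dt
\]
by the layer-cake formula; note that \(\norm{p-rv}>t\) exactly when \(p\notin B(rv,t)\). By Lemma~\ref{lem:translated-ball-overlap}, the integrand is nondecreasing in \(r\). For \(0=r_1<r_2=r\le1\), it is strictly larger at \(r\) than at \(0\) on an open \(t\)-interval around one. Hence \(g_{d,\alpha}(r)>g_{d,\alpha}(0)=c_{d,\alpha}\). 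As a side product, \(g_{d,\alpha}\) is nondecreasing on \([0,1]\).

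\textbf{Expansion as \(r\to0^+\).} The plan is to differentiate under the integral. Set \(f(x)=\norm{p-x}^\alpha\), so that
\[
 \nabla f=-\alpha\norm{p-x}^{\alpha-2}(p-x),
 \qquad
 D^2f=\alpha\norm{p-x}^{\alpha-2}\Bigl(I+(\alpha-2)\frac{(p-x)(p-x)^{\top}}{\norm{p-x}^2}\Bigr).
\]
Averaging over \(p\) uniform in \(\B\), and evaluating at \(x=0\), the first-order term vanishes by symmetry \(p\mapsto-p\). For the second-order term, take the trace of \(D^2f\) against \(vv^{\top}\) and average over directions. Since \(\E[(p\cdot v)^2/\norm p^2]=1/d\), this gives
\[
 \tfrac12 v^{\top}\E[D^2f(0)]v=\tfrac{\alpha}{2}\bigl(1+(\alpha-2)/d\bigr)\E\norm p^{\alpha-2}.
\]
With \(\E\norm p^{\alpha-2}=d/(d+\alpha-2)\), the coefficient of \(r^2\) is
\[
 \frac\alpha2\cdot\frac{d+\alpha-2}{d}\cdot\frac{d}{d+\alpha-2}=\frac\alpha2,
\]
as claimed. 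A cleaner equivalent route is to use \(\Delta\norm y^\alpha=\alpha(d+\alpha-2)\norm y^{\alpha-2}\), noting that \(g\) is radial in \(x\) so the second-order coefficient is \(\Delta g(0)/(2d)\).

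\textbf{Main obstacle.} The difficulty is justifying the expansion when \(\alpha\le2\). There the Hessian is singular at \(p=x\), with size \(\norm{p-x}^{\alpha-2}\), and it is integrable only when \(d+\alpha-2>0\). In particular, for \(d=1\) and \(\alpha\le1\) the formal computation fails.

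To avoid differentiating under the integral, I would use a change of variables instead. Write
\[
 g_{d,\alpha}(r)=\frac1{\operatorname{vol}(\B)}\int_{\B-rv}\norm y^\alpha\,dy,
\]
and compare \(\B-rv\) with \(\B\) through the symmetric difference near the boundary sphere. The boundary-flux first variation vanishes by symmetry. The second variation gives the \(r^2\) term directly, through a surface integral of \(\norm y^\alpha\) together with a boundary term, and no singularity appears because \(\norm y^\alpha\) is smooth near the unit sphere. I would carry out this computation, possibly in the symmetric form \(g(r)+g(-r)-2g(0)\), to confirm that the coefficient \(\alpha/2\) holds uniformly for all \(d\ge1\) and \(\alpha>0\).
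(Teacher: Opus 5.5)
Your treatment of the identity \eqref{eq:conditional-moment} and of the strict inequality is the paper's argument: rotation invariance, then the layer-cake formula combined with Lemma~\ref{lem:translated-ball-overlap}. Your isotropic Hessian computation at the origin is also the same one the paper uses. The routes differ in how the expansion is justified. The paper proves that \(x\mapsto\E\norm{p-x}^\alpha\) is \(C^2\) near \(\B\) when \(d+\alpha>2\): the distributional Hessian of \(\norm{\cdot}^\alpha\) is then locally integrable, and convolution with the indicator of \(\B\) makes it continuous. The leftover cases \(d=1\), \(\alpha\leq1\) are settled by the closed form for \(g_{1,\alpha}\). Your moving-domain argument needs no case split, but you leave it as a plan, and it closes in one line. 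The function \(\norm y^\alpha\) is continuous, and smooth away from the origin, which stays at distance \(1-r\) from the moving sphere \(\partial\B-rv\). The first variation is therefore
\[
 g_{d,\alpha}'(r)=-d\,\E\bigl[\tau(1+r^2-2r\tau)^{\alpha/2}\bigr],
 \qquad 0\leq r<1,
\]
where \(\tau\) is the first coordinate of a uniform point on the unit sphere (\(\tau=\pm1\) when \(d=1\)). This integrand is smooth in \(r\), uniformly in \(\tau\), so \(g_{d,\alpha}'(0)=0\) and \(g_{d,\alpha}''(0)=\alpha d\,\E\tau^2=\alpha\) for every \(d\geq1\) and \(\alpha>0\). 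No boundary term is needed beyond differentiating this fixed-sphere integral once more. This formula is exactly the first line of Eq.~\eqref{eq:radial-derivative-sphere}, which the paper derives only later, for Proposition~\ref{prop:radial-quotient}. Your route is therefore more uniform and more elementary for the statement at hand; it even gives smoothness of \(g_{d,\alpha}\) on \([0,1)\). The paper's route, by contrast, delivers as a by-product the global facts that Lemma~\ref{lem:innovation-estimates} cites from this proof: when \(d+\alpha>2\), the mean-cost function is \(C^2\) with bounded Hessian on all of \(\B\), and \(D^2f(O)=\alpha I\). Recovering the bounded Hessian up to \(\norm x=1\) from your representation would require a separate estimate of \(\int_{\partial\B}\norm{\omega-x}^{\alpha-1}\,dS\) as \(x\) approaches the sphere.
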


\begin{proof}
Rotational invariance shows that the conditional expectation depends on
\(x\) only through \(r=\norm x\), which proves
Eq.\eqref{eq:conditional-moment}.  Consider the same mean cost as a function of
the full state:
\[
 \frac1{\operatorname{vol}(\B)}
 \int_{\B}\norm{p-x}^\alpha\,dp.
\]
When \(d+\alpha>2\), the distributional second derivatives of
\(\norm{\cdot}^\alpha\) are locally integrable.  Fix \(\delta>0\).  If
\(p\in\B\) and \(x\) lies within distance \(\delta\) of \(\B\), then \(p-x\)
belongs to the ball of radius \(2+\delta\) centered at \(O\).  Each second
derivative is therefore integrable on the compact domain relevant to the
convolution.  Convolving it with the \(L^1\) indicator of \(\B\) gives a
continuous second derivative.  The resulting Hessian is continuous, and hence
bounded, on \(\B\).  Thus the mean-cost function is twice continuously
differentiable in a neighborhood of \(\B\).  Symmetry gives zero
first derivative at the origin.  The Hessian of the integrand at \(x=O\) is
\[
 \alpha\norm p^{\alpha-2}I
 +\alpha(\alpha-2)\norm p^{\alpha-4}pp^{\mathsf T}.
\]
Since the uniform distribution is isotropic and
\[
 \E\norm p^{\alpha-2}=\frac{d}{d+\alpha-2},
 \qquad
 \E\bigl(\norm p^{\alpha-4}pp^{\mathsf T}\bigr)
 =\frac1{d+\alpha-2}I,
\]
the expected Hessian is \(\alpha I\).  This proves
Eq.\eqref{eq:radial-local-expansion} whenever \(d+\alpha>2\).  In the remaining
one-dimensional cases, direct integration gives
\[
 g_{1,\alpha}(r)
 =\frac{(1+r)^{\alpha+1}+(1-r)^{\alpha+1}}
       {2(\alpha+1)},
\]
which has the same expansion.

For the strict inequality, the layer-cake representation gives
\[
 \int_{\B}\norm{p-rv}^\alpha\,dp
 =\int_0^\infty \alpha t^{\alpha-1}
   \left\{\operatorname{vol}(\B)
   -\operatorname{vol}\bigl(\B\cap B(rv,t)\bigr)\right\}\,dt.
\]
For \(0\leq r_1<r_2\leq1\),
Lemma~\ref{lem:translated-ball-overlap} makes the integrand at \(r_2\) at least
the integrand at \(r_1\) for every \(t\), with strict inequality on a nonempty
interval.  Integration in \(t\) therefore gives
\(g_{d,\alpha}(r_2)>g_{d,\alpha}(r_1)\).  Taking \(r_1=0\) proves the stated
inequality.
\end{proof}

For \(d\geq1\), put
\begin{equation}
 b_{d,\alpha}
 =\frac{d\,2^{d+\alpha-1}\Gamma(d/2)
          \Gamma((d+\alpha+1)/2)}
        {(d+\alpha)\sqrt\pi\,\Gamma(d+\alpha/2)}
  -\frac{d}{d+\alpha}.
 \label{eq:boundary-quadratic-constant}
\end{equation}

\begin{proposition}
\label{prop:radial-quotient}
Assume \(d\geq1\) and \(\alpha>0\).  Extend
\[
 q_{d,\alpha}(r)
 =\frac{g_{d,\alpha}(r)-c_{d,\alpha}}{r^2}
\]
to \(r=0\) by setting \(q_{d,\alpha}(0)=\alpha/2\).  If \(d\geq2\), this
function is strictly decreasing when \(0<\alpha<2\), is identically one when
\(\alpha=2\), and is strictly increasing when \(\alpha>2\).  If \(d=1\), it
is strictly increasing when \(0<\alpha<1\), is identically \(1/2\) when
\(\alpha=1\), is strictly decreasing when \(1<\alpha<2\), is identically one
when \(\alpha=2\), and is strictly increasing when \(\alpha>2\).  In every
case, \(q_{d,\alpha}(1)=b_{d,\alpha}\).
\end{proposition}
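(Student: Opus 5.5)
The plan is to view the radial mean cost as a radial function on \(\R^d\), namely \(G(x)=\operatorname{vol}(\B)^{-1}\int_{\B}\norm{p-x}^\alpha\,dp\), so that \(g_{d,\alpha}(r)=G(rv)\). Then I will express \(G(rv)-G(O)\) through the Laplacian of \(G\) using the spherical-mean (Green/Pizzetti) identity. Since \(G\) is radial, \(G(rv)\) equals its own spherical mean over \(\partial B(O,r)\). The spherical mean \(M(s)\) satisfies \(M'(s)=\frac{s}{d}\,A(s)\), where \(A(s)\) is the average of \(\Delta G\) over the ball \(B(O,s)\). Hence
\[
 q_{d,\alpha}(r)=\frac1{r^2}\int_0^r\frac sd\,A(s)\,ds
 =\frac1{2d}\int_0^r A(s)\,\frac{2s}{r^2}\,ds .
\]
So \(q_{d,\alpha}(r)\) is \(\frac1{2d}\) times an average of \(A\) against the probability density \(2s/r^2\) on \([0,r]\). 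Rescaling \(s=ru\) shows that \(q_{d,\alpha}\) is monotone in \(r\) in the same direction as \(A\), and strictly so if \(A\) is strictly monotone. The value at \(0\) is \(A(0)/(2d)=\alpha/2\), consistent with Proposition~\ref{prop:conditional-kernel}.

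Next I compute \(\Delta G\). Distributionally, \(\Delta\norm{\cdot}^\alpha=\alpha(d+\alpha-2)\norm{\cdot}^{\alpha-2}\) whenever \(d+\alpha>2\), since then \(\norm{\cdot}^{\alpha-2}\) is locally integrable. Convolving with the indicator of \(\B\) gives \(\Delta G=\alpha(d+\alpha-2)\,h\), where \(h(x)=\E\norm{p-x}^{\alpha-2}\). I then show that \(h\) is radially monotone. For \(\alpha>2\), the layer-cake argument of Proposition~\ref{prop:conditional-kernel}, combined with Lemma~\ref{lem:translated-ball-overlap}, makes \(h(rv)\) strictly increasing on \([0,1]\). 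For \(\beta=\alpha-2<0\), write \(\norm{y}^\beta=\int_0^\infty \Pr\)-type layers \(\{\norm y<t\}\); the same lemma then makes \(h(rv)\) strictly decreasing. For \(\alpha=2\), \(h\equiv1\).

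Since ball averages of a radially monotone function are monotone in the radius, \(A\) inherits the direction of \(h\), multiplied by the sign of \(\alpha(d+\alpha-2)\). This yields the cases as follows.
\begin{itemize}
 \item When \(d\geq2\), the factor is positive, so \(q_{d,\alpha}\) is decreasing for \(\alpha<2\), constant (equal to \(2d/(2d)=1\)) for \(\alpha=2\), and increasing for \(\alpha>2\).
 \item When \(d=1\) and \(1<\alpha<2\), the factor \(\alpha(\alpha-1)\) is positive and \(h\) is decreasing, so \(q_{1,\alpha}\) is decreasing.
\end{itemize}

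For \(d=1\), the Laplacian route breaks down when \(\alpha\leq1\), because \(\norm{\cdot}^{\alpha-2}\) is not locally integrable. Here I will instead use the explicit formula from Proposition~\ref{prop:conditional-kernel} directly for all \(\alpha\):
\[
 q_{1,\alpha}(r)=\sum_{k\geq1}\frac{1}{\alpha+1}\binom{\alpha+1}{2k}r^{2k-2}.
\]
For \(k\geq2\), the coefficient \(\binom{\alpha+1}{2k}\) has numerator \((\alpha+1)\alpha(\alpha-1)\cdots(\alpha+2-2k)\). When \(0<\alpha<1\), exactly \(2k-2\) of these factors are negative, so every coefficient is positive and \(q_{1,\alpha}\) is strictly increasing. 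When \(\alpha=1\), the coefficients vanish for \(k\geq2\), leaving \(q_{1,1}\equiv\frac12\). Analogous sign counts handle \(1<\alpha<2\) and \(\alpha>2\), which gives a second check on those cases.

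Finally, \(q_{d,\alpha}(1)=g_{d,\alpha}(1)-c_{d,\alpha}\), so I need \(\E\norm{p-v}^\alpha\). I compute it in polar coordinates centered at the boundary point \(v\). The ray making angle \(\theta\) with \(-v\) stays in \(\B\) for length \(2\cos\theta\), so the integral is
\[
 \int \int_0^{2\cos\theta}\rho^{d+\alpha-1}\,d\rho\,d\sigma
\]
over the inward hemisphere of directions. This reduces to a Beta integral \(\int_0^{\pi/2}\cos^{d+\alpha}\theta\,\sin^{d-2}\theta\,d\theta\). Dividing by \(\operatorname{vol}(\B)\) and applying the duplication formula should reproduce the Gamma expression in Eq.~\eqref{eq:boundary-quadratic-constant}.

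I expect the main obstacle to be justifying the spherical-mean identity rigorously when \(\Delta G\) is only given distributionally and may be unbounded near \(\partial\B\). This matters for \(d+\alpha\) close to \(2\) and for \(\alpha<2\), where \(h\) blows up like a negative power as \(x\) approaches the sphere only when \(\alpha-2\leq-1\); the case \(d=1\) is already handled separately. My first attempt will be to mollify \(\norm{\cdot}^\alpha\), apply the classical identity to the smooth approximants, and pass to the limit by monotone convergence in \(h\). Strictness then follows from the strict part of Lemma~\ref{lem:translated-ball-overlap} carried through the layer-cake integral.
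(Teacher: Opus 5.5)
Your argument is correct, and it takes a genuinely different route from the paper's. The paper differentiates $g_{d,\alpha}$ using the divergence theorem on $\partial\B$ and an integration by parts in the first-coordinate density. This lifts the derivative to the unit sphere of $\R^{d+2}$: $g_{d,\alpha}'(r)=\alpha r\,m(r)$ with $m(r)=\E\norm{u-rv}^{\alpha-2}$. Inside the unit ball of $\R^{d+2}$ that kernel is smooth, with Laplacian $(\alpha-2)(d+\alpha-2)\E\norm{u-x}^{\alpha-4}$. The radial equation for $r^{d+1}m'(r)$ then fixes the sign of $m'$, and $q_{d,\alpha}(r)=\alpha\int_0^1 t\,m(rt)\,dt$ concludes, with dominated convergence at $r=1$. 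You stay in $\R^d$ instead. Green's identity for spherical means gives $q_{d,\alpha}(r)=d^{-1}\int_0^1 u\,A(ru)\,du$, where $A$ is the ball average of $\Delta G=\alpha(d+\alpha-2)h$. You obtain the radial monotonicity of $h=\E\norm{p-\cdot}^{\alpha-2}$ from the layer-cake formula and Lemma~\ref{lem:translated-ball-overlap}. Your route reuses tools the paper already has and needs no boundary passage, since $G$ is $C^2$ on all of $\R^d$. Its cost is the hypothesis $d+\alpha>2$, which you need for local integrability of $\norm{\cdot}^{\alpha-2}$ and for the absence of a point mass in $\Delta\norm{\cdot}^\alpha$. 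So $d=1$, $0<\alpha\le1$ requires a separate argument, and your binomial sign count supplies it correctly. The paper's lift treats all $(d,\alpha)$ uniformly, and its factor $(\alpha-2)(d+\alpha-2)$ produces the reversal for $d=1$, $\alpha<1$ automatically.

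Three points need correcting.

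\emph{The obstacle you anticipate is not there.} Since $\alpha-2>-d$, $h$ is finite and continuous everywhere, and for $\alpha<2$ your own layer-cake argument gives $h\le h(O)=d/(d+\alpha-2)$. Together with the $C^2$ regularity of $G$ from the proof of Proposition~\ref{prop:conditional-kernel}, the classical divergence theorem applies, so no mollification is needed.

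\emph{The sign count does not settle $d=1$, $\alpha>2$.} For $2j+1<\alpha<2j+2$ with $j\ge1$, the coefficient $\binom{\alpha+1}{2k}$ has $2k-2j-3$ negative factors once $k\ge j+2$. It is therefore negative for those $k$, while it is positive for $2\le k\le j+1$. For example, $\alpha=7/2$ gives a positive coefficient of $r^2$ and negative coefficients of $r^{2k-2}$ for all $k\ge3$. The cases $d=1$, $\alpha\ge2$ must therefore rest on your Laplacian argument. That argument does apply, since $d+\alpha>2$ and $\alpha(\alpha-1)>0$, but you should state these cases explicitly rather than listing only $1<\alpha<2$.

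\emph{The endpoint integral needs $d\ge2$.} Your Beta integral with $\sin^{d-2}\theta$ is valid only for $d\ge2$. For $d=1$, the value $g_{1,\alpha}(1)=2^\alpha/(\alpha+1)$ follows directly from the explicit formula for $g_{1,\alpha}$.
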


\begin{proof}
Take \(v\) to be the first coordinate vector.  First assume \(d\geq2\).  Let
\(\tau\) be the first coordinate of a uniform point on the unit sphere in
\(\R^d\), and let \(u\) be uniform on the unit sphere in \(\R^{d+2}\).  The
map \(p\mapsto\norm{p-rv}^{\alpha}\) belongs to \(W^{1,1}(\B)\): near its
only possible singularity the norm of its weak gradient is a constant times
\(\rho^{\alpha-1}\), whose radial integral is a constant times
\(\int_0^\varepsilon\rho^{d+\alpha-2}\,d\rho<\infty\).  The weak divergence
theorem is therefore applicable.  Integration by parts in the
first-coordinate density on the sphere gives, for \(0<r<1\),
\begin{align}
 g'_{d,\alpha}(r)
 &=-d\,\E\left[\tau(1+r^2-2r\tau)^{\alpha/2}\right] \notag\\
 &=\frac{d\alpha r}{d-1}\,
   \E\left[(1-\tau^2)(1+r^2-2r\tau)^{(\alpha-2)/2}\right] \notag\\
 &=\alpha r\,\E\norm{u-rv}^{\alpha-2}.
 \label{eq:radial-derivative-sphere}
\end{align}
Indeed, \(\E(1-\tau^2)=(d-1)/d\), and multiplication of the first-coordinate
density in dimension \(d\) by \(1-\tau^2\), followed by division by this
normalizing constant, gives the corresponding density in dimension \(d+2\).

If \(d=1\), direct integration gives
\[
 g'_{1,\alpha}(r)
 =\frac{(1+r)^\alpha-(1-r)^\alpha}{2}.
\]
The first coordinate of a uniform point \(u\) on the unit sphere in \(\R^3\)
is uniform on \([-1,1]\).  Direct integration in that coordinate therefore
gives
\[
 \alpha r\,\E\norm{u-rv}^{\alpha-2}
 =\frac{(1+r)^\alpha-(1-r)^\alpha}{2}.
\]
Thus Eq.\eqref{eq:radial-derivative-sphere} holds for every \(d\geq1\).

Let
\[
 m(r)=\E\norm{u-rv}^{\alpha-2}.
\]
For \(x\) in the open unit ball of \(\R^{d+2}\), the distance from \(x\) to
the unit sphere is positive.  The kernel and all its derivatives are then
uniformly bounded in a neighborhood of \(x\), which justifies differentiation
under the expectation and gives
\[
 \Delta_x\E\norm{u-x}^{\alpha-2}
 =(\alpha-2)(d+\alpha-2)
   \E\norm{u-x}^{\alpha-4}.
\]
This potential is radial.  Hence
\[
 \bigl(r^{d+1}m'(r)\bigr)'
 =r^{d+1}\Delta_x\E\norm{u-x}^{\alpha-2}\big|_{x=rv}.
\]
Since \(m'(0)=0\), the sign of \(m'(r)\) for \(r>0\) is the sign of
\((\alpha-2)(d+\alpha-2)\).  This gives the stated alternatives for both
\(d\geq2\) and \(d=1\).  Integrating
Eq.\eqref{eq:radial-derivative-sphere} and substituting \(s=rt\) yield
\begin{equation}
 q_{d,\alpha}(r)=\alpha\int_0^1 tm(rt)\,dt.
 \label{eq:radial-quotient-sphere}
\end{equation}
Thus \(q_{d,\alpha}\) has the asserted monotonicity.  Near \(u=v\) on the
sphere in \(\R^{d+2}\), the surface element and the kernel in \(m(1)\) give a
radial integrand of order \(\rho^{d+\alpha-2}\).  Since
\(d+\alpha-2>-1\), dominated convergence extends the conclusion to \(r=1\).
To evaluate the endpoint, use polar coordinates
\(p-v=\rho\omega\).  The condition \(p\in\B\) becomes
\(\omega_1\leq0\) and \(0\leq\rho\leq-2\omega_1\), so
\begin{align*}
 g_{d,\alpha}(1)
 &=\frac{1}{\operatorname{vol}(\B)}
   \int_{\omega_1\leq0}\int_0^{-2\omega_1}
        \rho^{d+\alpha-1}\,d\rho\,d\omega\\
 &=\frac{d\,2^{d+\alpha-1}\Gamma(d/2)
          \Gamma((d+\alpha+1)/2)}
        {(d+\alpha)\sqrt\pi\,\Gamma(d+\alpha/2)}.
\end{align*}
For \(d=1\), the same calculation uses the counting measure on the
zero-dimensional sphere.  Subtracting \(c_{d,\alpha}\) proves the last
assertion in every dimension.
\end{proof}

It follows that the best constants in a quadratic envelope for the conditional
cost are
\begin{equation}
\begin{aligned}
 k^-_{d,\alpha}
 &=\min_{0\leq r\leq1}q_{d,\alpha}(r)
   =\min\left\{\frac\alpha2,b_{d,\alpha}\right\},\\
 k^+_{d,\alpha}
 &=\max_{0\leq r\leq1}q_{d,\alpha}(r)
   =\max\left\{\frac\alpha2,b_{d,\alpha}\right\}.
\end{aligned}
 \label{eq:best-envelope-constants}
\end{equation}
In particular, \(0<k^-_{d,\alpha}\leq k^+_{d,\alpha}<\infty\), and
\begin{equation}
 c_{d,\alpha}+k^-_{d,\alpha}\norm x^2
 \leq \E\norm{p-x}^\alpha
 \leq c_{d,\alpha}+k^+_{d,\alpha}\norm x^2.
 \label{eq:best-quadratic-envelope}
\end{equation}
Neither constant in Eq.\eqref{eq:best-quadratic-envelope} can be improved
uniformly over \(x\in\B\).

\begin{lemma}
\label{lem:finite-state-second}
For \(t\geq0\), with \(x_0=p_0\),
\begin{equation}
 \E\norm{x_t}^2
 =s_d\left[
 \frac{1-\gamma}{1+\gamma}
 +\frac{2\gamma}{1+\gamma}\gamma^{2t}
 \right].
 \label{eq:finite-state-second}
\end{equation}
\end{lemma}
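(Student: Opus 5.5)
Unrolling the recursion \eqref{eq:recursion} with \(x_0=p_0\) gives the explicit weighted sum
\[
 x_t=\gamma^t p_0+(1-\gamma)\sum_{i=1}^t\gamma^{t-i}p_i .
\]
We will compute \(\E\norm{x_t}^2\) directly from this representation rather than iterating a recursion for the second moment, although the recursion gives an equivalent check.

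The points \(p_0,\ldots,p_t\) are independent and uniform in \(\B\). Each has mean zero by the symmetry \(p\mapsto -p\), so all cross terms \(\E\langle p_i,p_j\rangle\) with \(i\neq j\) vanish. Only the diagonal terms survive, each equal to \(s_d=\E\norm{p_0}^2\), and therefore
\[
 \E\norm{x_t}^2=s_d\Bigl[\gamma^{2t}+(1-\gamma)^2\sum_{k=0}^{t-1}\gamma^{2k}\Bigr].
\]

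Summing the geometric series gives
\[
 (1-\gamma)^2\,\frac{1-\gamma^{2t}}{1-\gamma^2}
 =\frac{1-\gamma}{1+\gamma}\bigl(1-\gamma^{2t}\bigr),
\]
valid for \(0\leq\gamma<1\), which is the standing assumption of this section. Combining,
\[
 \gamma^{2t}\Bigl(1-\frac{1-\gamma}{1+\gamma}\Bigr)=\frac{2\gamma}{1+\gamma}\,\gamma^{2t},
\]
and this yields Eq.~\eqref{eq:finite-state-second}.

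As a sanity check, \(t=0\) gives \(s_d\), and the conventions \(\gamma^0=1\), \(0^0=1\) handle the case \(\gamma=0\), where \(x_t=p_t\) and the formula again returns \(s_d\). One could alternatively argue by induction: \(\E\norm{x_t}^2=\gamma^2\E\norm{x_{t-1}}^2+(1-\gamma)^2s_d\), since \(p_t\) is independent of \(x_{t-1}\) and has mean zero, and one then verifies that the stated closed form satisfies this affine recursion.

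The argument presents no real obstacle. The only points needing care are the vanishing of the cross terms, which rests on independence and the centering of the uniform law, and the algebraic simplification of the geometric sum.
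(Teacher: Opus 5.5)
Your proof is correct and follows the paper's argument: unroll the recursion to \(x_t=\gamma^tp_0+(1-\gamma)\sum_{j=1}^t\gamma^{t-j}p_j\), use independence and centering to drop the cross terms, and sum the geometric series. Your sum form \(s_d[\gamma^{2t}+(1-\gamma)^2\sum_{k=0}^{t-1}\gamma^{2k}]\) also gives the formula directly at \(\gamma=1\), where \(x_t=p_0\); Theorem~\ref{thm:finite-envelope} relies on that case, so you do not need to restrict to \(\gamma<1\).
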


\begin{proof}
Iteration of the recursion gives
\[
 x_t=\gamma^tp_0+(1-\gamma)
      \sum_{j=1}^t\gamma^{t-j}p_j.
\]
The summands are independent and centered.  Hence
\[
 \E\norm{x_t}^2
 =s_d\left[\gamma^{2t}
 +(1-\gamma)^2\sum_{j=0}^{t-1}\gamma^{2j}\right],
\]
which reduces to Eq.\eqref{eq:finite-state-second}.
\end{proof}

\begin{theorem}
\label{thm:finite-envelope}
For \(\alpha>0\) and \(0\leq\gamma\leq1\),
\begin{equation}
 Nc_{d,\alpha}+k^-_{d,\alpha}
   \sum_{t=0}^{N-1}\E\norm{x_t}^2
 \leq \E L_{\alpha,N}^{(\gamma)}
 \leq Nc_{d,\alpha}+k^+_{d,\alpha}
   \sum_{t=0}^{N-1}\E\norm{x_t}^2.
 \label{eq:finite-envelope}
\end{equation}
The accumulated state second moment is
\begin{equation}
 \sum_{t=0}^{N-1}\E\norm{x_t}^2
 =s_d\left[
 N\frac{1-\gamma}{1+\gamma}
 +\frac{2\gamma}{1+\gamma}
  \frac{1-\gamma^{2N}}{1-\gamma^2}
 \right],
 \label{eq:finite-state-sum}
\end{equation}
Eq.\eqref{eq:finite-state-sum} extends continuously to \(\gamma=1\).
For \(\widehat\gamma_N=1-N^{-1/2}\), define the correction factor
\begin{equation}
 C_N=\frac{1}{2-N^{-1/2}}
 +\frac{2(1-N^{-1/2})
       [1-(1-N^{-1/2})^{2N}]}
       {(2-N^{-1/2})^2}.
 \label{eq:finite-CN}
\end{equation}
Then \(C_N\to1\), and the bounds for \(\widehat\gamma_N\) become
\begin{equation}
 Nc_{d,\alpha}+k^-_{d,\alpha}s_dC_N\sqrt N
 \leq \E L_{\alpha,N}^{(\widehat\gamma_N)}
 \leq Nc_{d,\alpha}+k^+_{d,\alpha}s_dC_N\sqrt N.
 \label{eq:finite-square-root-envelope}
\end{equation}
\end{theorem}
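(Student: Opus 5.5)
The plan is to condition on the state and sum. Since \(x_{i-1}\) is a function of \(p_0,\ldots,p_{i-1}\) only, it is independent of \(p_i\). Proposition~\ref{prop:conditional-kernel} then gives
\[
 \E\bigl[\norm{p_i-x_{i-1}}^\alpha\,\big|\,p_0,\ldots,p_{i-1}\bigr]
 =g_{d,\alpha}(\norm{x_{i-1}}).
\]
By Proposition~\ref{prop:geometry}, \(\norm{x_{i-1}}\leq1\). The quadratic envelope Eq.\eqref{eq:best-quadratic-envelope} therefore applies pointwise:
\[
 c_{d,\alpha}+k^-_{d,\alpha}\norm{x_{i-1}}^2
 \leq g_{d,\alpha}(\norm{x_{i-1}})
 \leq c_{d,\alpha}+k^+_{d,\alpha}\norm{x_{i-1}}^2 .
\]
Taking expectations and summing over \(i=1,\ldots,N\), i.e.\ over \(t=i-1\) from \(0\) to \(N-1\), gives Eq.\eqref{eq:finite-envelope}. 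This argument is uniform in \(\gamma\in[0,1]\), including \(\gamma=1\), where \(x_t=p_0\).

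Next, sum Lemma~\ref{lem:finite-state-second} over \(t=0,\ldots,N-1\). The constant term contributes \(N s_d(1-\gamma)/(1+\gamma)\). The geometric term contributes \(s_d\,\frac{2\gamma}{1+\gamma}\sum_{t<N}\gamma^{2t}=s_d\frac{2\gamma}{1+\gamma}\frac{1-\gamma^{2N}}{1-\gamma^2}\), which is Eq.\eqref{eq:finite-state-sum}. The lemma's proof does not actually use \(\gamma<1\): at \(\gamma=1\) the same computation gives \(\E\norm{x_t}^2=s_d\). For continuity at \(\gamma=1\), note that \((1-\gamma^{2N})/(1-\gamma^2)=\sum_{t<N}\gamma^{2t}\) is a polynomial, equal to \(N\) at \(\gamma=1\). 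The right side of Eq.\eqref{eq:finite-state-sum} therefore extends continuously with value \(s_dN\), in agreement with the star.

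For \(\widehat\gamma_N\), put \(\varepsilon=N^{-1/2}\) and \(\gamma=1-\varepsilon\). Then
\[
 1+\gamma=2-\varepsilon,
 \qquad
 1-\gamma^2=\varepsilon(2-\varepsilon),
 \qquad
 N\varepsilon=\sqrt N,
 \qquad
 1/\varepsilon=\sqrt N .
\]
The first term becomes \(N\varepsilon/(2-\varepsilon)=\sqrt N/(2-\varepsilon)\). The second becomes
\[
 \frac{2(1-\varepsilon)[1-(1-\varepsilon)^{2N}]}{(2-\varepsilon)\,\varepsilon(2-\varepsilon)}
 =\sqrt N\,\frac{2(1-\varepsilon)[1-(1-\varepsilon)^{2N}]}{(2-\varepsilon)^2}.
\]
The sum is thus \(s_dC_N\sqrt N\), and substituting into Eq.\eqref{eq:finite-envelope} gives Eq.\eqref{eq:finite-square-root-envelope}.

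Finally, \((1-N^{-1/2})^{2N}\leq e^{-2\sqrt N}\to0\). Hence \(C_N\to\tfrac12+\tfrac{2}{4}=1\).

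The only delicate point is the conditioning step. One must use the independence of \(p_i\) from \(x_{i-1}\) together with the fact that the envelope holds for every \(x\in\B\); both are supplied by the earlier results. The rest is bookkeeping.
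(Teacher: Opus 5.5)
Your proposal is correct and follows the paper's own proof. You use the independence of \(p_{t+1}\) from \(x_t\), apply the pointwise quadratic envelope, sum Lemma~\ref{lem:finite-state-second}, and substitute \(1-\widehat\gamma_N=N^{-1/2}\). Your remarks on the continuous extension at \(\gamma=1\) (via the polynomial \(\sum_{t<N}\gamma^{2t}\)) and on the limit \(C_N\to\tfrac12+\tfrac12=1\) spell out details the paper leaves implicit.
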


\begin{proof}
At insertion \(t+1\), the point \(p_{t+1}\) is independent of \(x_t\).
Apply Eq.\eqref{eq:best-quadratic-envelope}, take expectations, and sum
Eq.\eqref{eq:finite-state-second} for \(t=0,\ldots,N-1\).  Evaluating the two
geometric sums gives Eq.\eqref{eq:finite-state-sum} and proves
Eq.\eqref{eq:finite-envelope}.  Substituting
\(1-\widehat\gamma_N=N^{-1/2}\) into Eq.\eqref{eq:finite-state-sum} gives
\[
 \sum_{t=0}^{N-1}\E\norm{x_t}^2=s_dC_N\sqrt N.
\]
Eq.\eqref{eq:finite-CN} also gives \(C_N\to1\).
\end{proof}

The constants in Theorem~\ref{thm:finite-envelope} are obtained from the
one-variable radial mean cost and are optimal for a pointwise quadratic
envelope.  The exact second-order constant also has a geometric
one-dimensional representation.  Define
\begin{equation}
 H_{d,\alpha}
 =2\int_0^1\frac{1-r^d}{r}
   \left(g_{d,\alpha}(r)-c_{d,\alpha}\right)\,dr.
 \label{eq:H-constant}
\end{equation}
The local expansion Eq.\eqref{eq:radial-local-expansion} proves integrability at
the origin, and Proposition~\ref{prop:conditional-kernel} proves
\(H_{d,\alpha}>0\).  Eq.\eqref{eq:H-constant} involves only the
original uniform-ball mean cost.

\begin{proposition}
\label{prop:closed-square-root-bounds}
Assume \(d\geq1\) and \(\alpha>0\).  Then
\begin{equation}
 s_d\min\left\{\frac\alpha2,b_{d,\alpha}\right\}
 \leq H_{d,\alpha}\leq
 s_d\max\left\{\frac\alpha2,b_{d,\alpha}\right\}.
 \label{eq:closed-H-bounds}
\end{equation}
If \(d+\alpha>2\), the following weaker bounds involve only powers and
rational functions:
\begin{equation}
\begin{split}
 s_d\min\left\{\alpha2^{\alpha-3},
       \frac{\alpha d}{2(d+\alpha-2)}\right\}
 &\leq H_{d,\alpha}\\
 &\leq s_d\max\left\{\alpha2^{\alpha-3},
       \frac{\alpha d}{2(d+\alpha-2)}\right\},
\end{split}
\label{eq:power-H-bounds}
\end{equation}
For \((d,\alpha)=(1,1)\), one has \(H_{1,1}=1/6\).
\end{proposition}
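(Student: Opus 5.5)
The plan is to write \(H_{d,\alpha}\) as an average of the radial quotient \(q_{d,\alpha}\) against a probability weight whose total mass is exactly \(s_d\). Then every bound follows from Proposition~\ref{prop:radial-quotient}.

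\textbf{Averaging representation.} Since \(g_{d,\alpha}(r)-c_{d,\alpha}=r^2q_{d,\alpha}(r)\), Eq.\eqref{eq:H-constant} becomes
\[
 H_{d,\alpha}=2\int_0^1 r(1-r^d)\,q_{d,\alpha}(r)\,dr .
\]
The weight is nonnegative, and a direct computation gives
\[
 2\int_0^1 r(1-r^d)\,dr=1-\frac{2}{d+2}=s_d .
\]
Hence \(H_{d,\alpha}/s_d\) is a weighted average of \(q_{d,\alpha}\) over \([0,1]\). It therefore lies between \(\min_{[0,1]}q_{d,\alpha}\) and \(\max_{[0,1]}q_{d,\alpha}\). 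By Proposition~\ref{prop:radial-quotient}, \(q_{d,\alpha}\) is monotone (or constant) on \([0,1]\) in every case, with endpoint values \(q_{d,\alpha}(0)=\alpha/2\) and \(q_{d,\alpha}(1)=b_{d,\alpha}\). These endpoint values are exactly the constants in Eq.\eqref{eq:best-envelope-constants}, which yields Eq.\eqref{eq:closed-H-bounds}.

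\textbf{Power bounds.} For Eq.\eqref{eq:power-H-bounds} it suffices to show, when \(d+\alpha>2\), that the interval with endpoints \(\alpha/2\) and \(b_{d,\alpha}\) is contained in the interval with endpoints \(\alpha2^{\alpha-3}\) and \(\alpha d/(2(d+\alpha-2))\). The tool is Eq.\eqref{eq:radial-quotient-sphere} at \(r=1\):
\[
 b_{d,\alpha}=\alpha\int_0^1 t\,m(t)\,dt ,
\]
where \(m(s)=\E\norm{u-sv}^{\alpha-2}\), \(m(0)=1\), and \(m\) is monotone in the direction given by the sign of \((\alpha-2)(d+\alpha-2)\).

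\emph{Case \(\alpha<2\).} With \(d+\alpha>2\), the function \(m\) decreases, so \(b_{d,\alpha}\ge \alpha m(1)/2\). Since \(\norm{u-v}\le2\) and \(\alpha-2<0\), we have \(m(1)\ge2^{\alpha-2}\), hence \(b_{d,\alpha}\ge\alpha2^{\alpha-3}\). For the upper end, \(d/(d+\alpha-2)>1\), so \(\alpha d/(2(d+\alpha-2))>\alpha/2\ge b_{d,\alpha}\).

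\emph{Case \(\alpha>2\).} Now \(m\) increases, so \(b_{d,\alpha}\le\alpha m(1)/2\le\alpha2^{\alpha-3}\), using \(\norm{u-v}^{\alpha-2}\le 2^{\alpha-2}\). For the lower end, \(\alpha d/(2(d+\alpha-2))<\alpha/2\le b_{d,\alpha}\).

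\emph{Case \(\alpha=2\).} All four quantities equal \(1\).

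Note that the hypothesis \(d+\alpha>2\) rules out \(d=1\) with \(\alpha\le1\). So the only monotonicity regimes needed are the ones just used, and the \(d=1\), \(1<\alpha<2\) case behaves like \(\alpha<2\) above.

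\textbf{The case \((d,\alpha)=(1,1)\).} Here the formula \(g_{1,1}(r)=\bigl((1+r)^2+(1-r)^2\bigr)/4=(1+r^2)/2\) from Proposition~\ref{prop:conditional-kernel} gives \(q_{1,1}\equiv1/2\). The averaging identity then yields \(H_{1,1}=s_1/2=1/6\).

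\textbf{Main obstacle.} The main point needing care is the power-bound step. There one must confirm that the crude estimate \(m(1)\gtrless2^{\alpha-2}\), combined with \(\int_0^1t\,m(t)\,dt\) lying between \(m(0)/2\) and \(m(1)/2\), points in the correct direction in each sign regime. One must also check that the rational endpoint \(\alpha d/(2(d+\alpha-2))\) lies on the correct side of \(\alpha/2\) in each regime.
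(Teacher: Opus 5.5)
Your proof is correct. The averaging step and the $(1,1)$ computation coincide with the paper's, but you obtain Eq.~\eqref{eq:power-H-bounds} by a different route.

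The paper does not return to the sphere representation. It sets $\Delta(t)=\Pr(\norm{p-v}>t)-\Pr(\norm p>t)$, which is nonnegative by Lemma~\ref{lem:translated-ball-overlap}. The layer-cake formula, together with the normalization $b_{d,2}=1$, then shows that $b_{d,\alpha}$ is the average of $\frac{\alpha}{2}t^{\alpha-2}$ over $t\in[0,2]$ against the probability density $2t\Delta(t)$. Comparing with the value at $t=2$ gives $b_{d,\alpha}\gtrless\alpha2^{\alpha-3}$ in every dimension. This uses neither $m$ nor the sign of $d+\alpha-2$. The comparison of $b_{d,\alpha}$ with $\alpha/2$, needed to combine with Eq.~\eqref{eq:closed-H-bounds}, is left implicit there and is supplied by Proposition~\ref{prop:radial-quotient}.

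Your route reuses Eq.~\eqref{eq:radial-quotient-sphere} at $r=1$ and gets both comparisons from the single function $m$. The value $m(0)=1$ together with the monotonicity of $m$ gives the $\alpha/2$ side, and $\norm{u-v}\leq2$ gives the $2^{\alpha-3}$ side. This makes the case analysis fully explicit and needs no new identity. The cost is reliance on the endpoint case $r=1$ of that representation, which the paper justifies.

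You could also drop monotonicity from the $2^{\alpha-3}$ side. Since $\norm{u-sv}\leq2$, one has $m(s)\gtrless2^{\alpha-2}$ pointwise for every $s\in[0,1]$. This bound passes to $q_{d,\alpha}(r)$ for $r<1$, and by continuity of $q_{d,\alpha}$ to $b_{d,\alpha}$, which also sidesteps the endpoint issue. The paper's layer-cake identity has the advantage of expressing $b_{d,\alpha}$ directly as a weighted power mean of distances at most $2$, which is where the constant $2^{\alpha-3}$ visibly comes from.
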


\begin{proof}
Since \(g_{d,\alpha}(r)-c_{d,\alpha}=r^2q_{d,\alpha}(r)\),
\[
 H_{d,\alpha}
 =2\int_0^1r(1-r^d)q_{d,\alpha}(r)\,dr,
 \qquad
 2\int_0^1r(1-r^d)\,dr=s_d.
\]
Proposition~\ref{prop:radial-quotient} shows that
\(q_{d,\alpha}(r)\) lies between its endpoint values
\(q_{d,\alpha}(0)=\alpha/2\) and
\(q_{d,\alpha}(1)=b_{d,\alpha}\) throughout \([0,1]\).  The integral defining
\(H_{d,\alpha}\) is a weighted average of these values with total weight
\(s_d\), which proves
Eq.\eqref{eq:closed-H-bounds}.

For \((d,\alpha)=(1,1)\), Proposition~\ref{prop:radial-quotient} gives
\(q_{1,1}(r)=1/2\), and the preceding integral gives \(H_{1,1}=1/6\).
The remaining assertion concerns the case \(d+\alpha>2\).  For its proof, let
\[
 \Delta(t)=\Pr(\norm{p-v}>t)-\Pr(\norm p>t),\qquad 0\leq t\leq2.
\]
Lemma~\ref{lem:translated-ball-overlap} gives \(\Delta(t)\geq0\).  The
layer-cake formula and the identity \(b_{d,2}=1\) give
\[
 b_{d,\alpha}=\int_0^2\alpha t^{\alpha-1}\Delta(t)\,dt,
 \qquad
 1=\int_0^2 2t\Delta(t)\,dt.
\]
Comparing the ratio \((\alpha/2)t^{\alpha-2}\) with its value at \(t=2\)
shows that \(b_{d,\alpha}\geq\alpha2^{\alpha-3}\) for \(0<\alpha\leq2\)
and \(b_{d,\alpha}\leq\alpha2^{\alpha-3}\) for \(\alpha\geq2\).  Also,
\[
 \frac{\alpha d}{2(d+\alpha-2)}\geq\frac\alpha2
 \quad\hbox{for }\alpha\leq2,
 \qquad
 \frac{\alpha d}{2(d+\alpha-2)}\leq\frac\alpha2
 \quad\hbox{for }\alpha\geq2.
\]
Combining these comparisons with Eq.\eqref{eq:closed-H-bounds} proves
Eq.\eqref{eq:power-H-bounds}.
\end{proof}

\begin{lemma}
\label{lem:innovation-estimates}
Assume \(d+\alpha>2\), or \((d,\alpha)=(1,1)\), and let
\(\Lambda\) be a compact subset of \((0,\infty)\).  For
\(\lambda\in\Lambda\), put
\[
 \delta_N=\lambda N^{-1/2},\qquad
 \gamma=1-\delta_N,\qquad
 \eta_{t,N}=\delta_N\sum_{j=1}^t\gamma^{t-j}p_j .
\]
Uniformly for \(\lambda\in\Lambda\) and \(0\leq t<N\),
\begin{align}
 \E\norm{\eta_{t,N}}^2
 &=s_d\frac{\delta_N}{1+\gamma}(1-\gamma^{2t}),
 \label{eq:innovation-second-moment}\\
 \E\norm{\eta_{t,N}}^4&\leq C\delta_N^2.
 \label{eq:innovation-fourth-moment}
\end{align}
Let \(f(x)=g_{d,\alpha}(\norm x)\).  There is a constant \(C\), independent of
\(t,N\), and \(\lambda\in\Lambda\), such that
\begin{equation}
 \left|\E\{f(\gamma^tp_0+\eta_{t,N})-f(\gamma^tp_0)\}\right|
 \leq C\E\norm{\eta_{t,N}}^2.
 \label{eq:innovation-global-taylor}
\end{equation}
For every \(\varepsilon>0\), there are \(M>0\) and \(N_0\) such that, for
\(N\geq N_0\) and \(M\sqrt N\leq t<N\),
\begin{align}
 &\left|\E\{f(\gamma^tp_0+\eta_{t,N})-f(\gamma^tp_0)\}
 -\frac{\alpha}{2}\E\norm{\eta_{t,N}}^2\right|\notag\\
 &\hspace{35mm}\leq
 \frac{\varepsilon}{2}\E\norm{\eta_{t,N}}^2+C\delta_N^2 .
 \label{eq:innovation-local-taylor}
\end{align}
\end{lemma}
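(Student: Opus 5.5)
The second-moment identity \eqref{eq:innovation-second-moment} follows from independence and centering, as in Lemma~\ref{lem:finite-state-second}:
\[
\E\norm{\eta_{t,N}}^2=s_d\delta_N^2\sum_{j=0}^{t-1}\gamma^{2j}=s_d\delta_N^2\,\frac{1-\gamma^{2t}}{1-\gamma^2},
\]
and \(1-\gamma^2=\delta_N(1+\gamma)\).

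For the fourth moment \eqref{eq:innovation-fourth-moment}, write \(\eta=\sum_j a_jp_j\) with \(a_j=\delta_N\gamma^{t-j}\). Expanding \(\norm{\eta}^4=(\sum_{j,k}a_ja_k\langle p_j,p_k\rangle)^2\) and using independence and symmetry, only the index patterns \(jjkk\), \(jkjk\) and \(jkkj\) survive. This gives the bound
\[
\E\norm\eta^4\leq 3\Bigl(\sum_j a_j^2\Bigr)^2\leq 3\bigl(\delta_N^2/(1-\gamma^2)\bigr)^2\leq 3\delta_N^2,
\]
since \(\norm{p_j}\leq1\). The same constants are uniform in \(\lambda\) because \(\delta_N/(1+\gamma)\leq\delta_N\).

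For \eqref{eq:innovation-global-taylor} I will use that \(f\) is \(C^2\) on a neighborhood of \(\B\) with bounded Hessian when \(d+\alpha>2\), as shown in the proof of Proposition~\ref{prop:conditional-kernel}. When \((d,\alpha)=(1,1)\), I use instead the explicit formula \(f(x)=(1+x^2)/2\) on \([-1,1]\), which follows from \(g_{1,1}(r)=((1+r)^2+(1-r)^2)/4\). In both cases, Taylor's formula along the segment from \(y=\gamma^tp_0\) to \(y+\eta\) (which stays in \(\B\) by convexity, since \(\gamma^t+\delta_N\sum\gamma^{t-j}\leq1\)) gives
\[
f(y+\eta)-f(y)=\nabla f(y)\cdot\eta+R,\qquad |R|\leq\tfrac12\sup\norm{D^2f}\,\norm\eta^2 .
\]
The first-order term has zero expectation because \(\eta\) is independent of \(p_0\) and centered. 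This proves the global bound with \(C=\frac12\sup_{\B}\norm{D^2f}\).

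The local estimate \eqref{eq:innovation-local-taylor} is where care is needed; it is the main obstacle. I will use a second-order expansion with the Hessian at \(y\):
\[
\E\bigl[f(y+\eta)-f(y)\bigr]=\tfrac12\E\,\eta^{\mathsf T}D^2f(y)\eta+\E\bigl[\omega(\norm y+\norm\eta)\norm\eta^2\bigr],
\]
where \(\omega\) is the modulus of continuity of \(D^2f\) at the origin, so that \(\norm{D^2f(z)-D^2f(O)}\leq\omega(\norm z)\to0\). The leading term is compared with \(D^2f(O)=\alpha I\), which is the expected Hessian computed in Proposition~\ref{prop:conditional-kernel}. By isotropy and independence, \(\E\,\eta^{\mathsf T}\alpha I\eta=\alpha\E\norm\eta^2\), which produces the main term \(\frac\alpha2\E\norm\eta^2\). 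The error from replacing \(D^2f(y)\) by \(\alpha I\) and the Taylor remainder are both bounded by \(\E[\omega(\norm y+\norm\eta)\norm\eta^2]\).

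To control this error, fix \(\rho\) with \(\omega(2\rho)\leq\varepsilon/2\). Choose \(M\) so that \(e^{-\lambda M}\leq\rho\) for all \(\lambda\in\Lambda\) (using \(\min\Lambda>0\)). Then \(t\geq M\sqrt N\) gives \(\norm y\leq\gamma^t\leq\rho\) for large \(N\), because \(\gamma^t\leq e^{-\lambda t N^{-1/2}}\). On the event \(\norm\eta\leq\rho\), the contribution is at most \(\frac\varepsilon2\E\norm\eta^2\). On the complement, \(\omega\) is bounded, and the contribution is at most
\[
C\,\E\bigl[\norm\eta^2\mathbf 1\{\norm\eta>\rho\}\bigr]\leq C\rho^{-2}\E\norm\eta^4\leq C'\delta_N^2
\]
by the fourth-moment bound. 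For \((d,\alpha)=(1,1)\), the function \(f\) is exactly quadratic with Hessian \(1=\alpha\), so the estimate is immediate.
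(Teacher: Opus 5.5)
Your proposal is correct and follows the paper's proof essentially step for step. Both use independence and centering for the two moment bounds, and a Taylor expansion with vanishing conditional linear term and bounded Hessian for the global estimate. For the local estimate, both use smallness of \(\norm{D^2f-\alpha I}\) near \(O\) on \(\{\norm{\eta_{t,N}}\le\rho\}\), and control the complement by \(\E[\norm{\eta_{t,N}}^2\mathbf 1\{\norm{\eta_{t,N}}>\rho\}]\le\rho^{-2}\E\norm{\eta_{t,N}}^4\).

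The only slip is bookkeeping. The two error terms you bound separately (Hessian replacement and Taylor remainder) add up to a fixed multiple of \(\omega(2\rho)\E\norm{\eta_{t,N}}^2\), not \(\tfrac{\varepsilon}{2}\E\norm{\eta_{t,N}}^2\) itself. This is fixed by taking \(\omega(2\rho)\le\varepsilon/4\), or, as the paper does, by comparing \(D^2f(y+s\eta_{t,N})\) with \(\alpha I\) directly inside the integral remainder.
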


\begin{proof}
The summands defining \(\eta_{t,N}\) are independent and centered.  Their
covariance matrices are \(I/(d+2)\), and summing the squared coefficients gives
Eq.\eqref{eq:innovation-second-moment}.  For a partial sum
\(\sum_jb_jp_j\), expansion of the fourth power and independence give an upper
bound proportional to \((\sum_jb_j^2)^2\).  Here
\[
 \sum_jb_j^2
 =\delta_N^2\sum_{j=0}^{t-1}\gamma^{2j}
 \leq\frac{\delta_N}{1+\gamma}.
\]
This proves Eq.\eqref{eq:innovation-fourth-moment}, with a constant uniform on
\(\Lambda\).

The proof of Proposition~\ref{prop:conditional-kernel} shows that \(f\) is twice
continuously differentiable on \(\B\), has bounded Hessian there, and satisfies
\(D^2f(O)=\alpha I\).  In the case \((d,\alpha)=(1,1)\), these properties follow
 from \(f(x)=(1+x^2)/2\).  Condition on \(p_0\).  The vector
 \(\eta_{t,N}\) is independent of \(p_0\) and has conditional mean zero, so the
 linear term in Taylor's formula has conditional expectation zero.  The bounded
 Hessian and the integral remainder then prove
 Eq.\eqref{eq:innovation-global-taylor}.

 Choose \(r_0>0\) so that
\[
 \norm{D^2f(z)-\alpha I}\leq\varepsilon
 \quad\hbox{when }\norm z\leq r_0,
\]
and then choose \(M\) so that
\(\exp(-M\min\Lambda)\leq r_0/2\).  If \(t\geq M\sqrt N\), then
 \(\norm{\gamma^tp_0}\leq r_0/2\).  Put \(a_t=\gamma^tp_0\) and
 \(G=\{\norm{\eta_{t,N}}\leq r_0/2\}\).  Taylor's formula gives
 \[
  f(a_t+\eta_{t,N})-f(a_t)
  =\langle\nabla f(a_t),\eta_{t,N}\rangle
   +\int_0^1(1-s)\eta_{t,N}^{\mathsf T}
      D^2f(a_t+s\eta_{t,N})\eta_{t,N}\,ds.
 \]
 Conditioning again on \(p_0\), independence and centering cancel the linear
 term.  Subtracting
 \(\frac{\alpha}{2}\E\norm{\eta_{t,N}}^2\) leaves the expectation of the
 same integral with \(D^2f(a_t+s\eta_{t,N})-\alpha I\) in place of the
 Hessian.  On \(G\), the whole segment from \(a_t\) to
 \(a_t+\eta_{t,N}\) lies in the ball of radius \(r_0\), and hence the absolute
 value of this integral is at most
 \(\frac{\varepsilon}{2}\norm{\eta_{t,N}}^2\).  On \(G^c\), boundedness of
 the Hessian gives a constant multiple of \(\norm{\eta_{t,N}}^2\), while
\[
 \E\left[\norm{\eta_{t,N}}^2
 \mathbf1_{\{\norm{\eta_{t,N}}>r_0/2\}}\right]
 \leq\frac4{r_0^2}\E\norm{\eta_{t,N}}^4
 \leq C\delta_N^2.
\]
 Thus the contribution from \(G^c\) is at most \(C\delta_N^2\).  Combining
 the estimates on \(G\) and \(G^c\) proves
 Eq.\eqref{eq:innovation-local-taylor}.
\end{proof}

\begin{theorem}
\label{thm:square-root-constant}
Assume \(d+\alpha>2\), or \((d,\alpha)=(1,1)\).  Fix \(\lambda>0\) and set
\(\gamma_N=1-\lambda N^{-1/2}\) for all sufficiently large \(N\).  Then
\begin{equation}
 \lim_{N\to\infty}
 \frac{\E L_{\alpha,N}^{(\gamma_N)}-Nc_{d,\alpha}}{\sqrt N}
 =K_{d,\alpha}(\lambda)
 =\frac{\alpha s_d}{4}\lambda+\frac{H_{d,\alpha}}{2\lambda}.
 \label{eq:tight-square-root-constant}
\end{equation}
The convergence in Eq.\eqref{eq:tight-square-root-constant} is locally uniform
for \(\lambda\in(0,\infty)\).
The unique minimizing value and the corresponding constant are
\begin{equation}
 \lambda^*_{d,\alpha}
 =\left(\frac{2H_{d,\alpha}}{\alpha s_d}\right)^{1/2},
 \qquad
 K_{d,\alpha}(\lambda^*_{d,\alpha})
 =\left(\frac{\alpha s_dH_{d,\alpha}}{2}\right)^{1/2}.
 \label{eq:optimal-square-root-constant}
\end{equation}
For \(\alpha=2\), \(H_{d,2}=s_d\), so
\(\lambda^*_{d,2}=1\) and \(K_{d,2}(\lambda^*_{d,2})=s_d\).
Under the hypotheses of the theorem, the closed bounds are
\begin{equation}
 \min\left\{1,\sqrt{\frac{2b_{d,\alpha}}\alpha}\right\}
 \leq \lambda^*_{d,\alpha}\leq
 \max\left\{1,\sqrt{\frac{2b_{d,\alpha}}\alpha}\right\},
 \label{eq:closed-lambda-bounds}
\end{equation}
If \(d+\alpha>2\), the bounds are
\begin{equation}
 \min\left\{2^{(\alpha-2)/2},
       \sqrt{\frac d{d+\alpha-2}}\right\}
 \leq \lambda^*_{d,\alpha}\leq
 \max\left\{2^{(\alpha-2)/2},
       \sqrt{\frac d{d+\alpha-2}}\right\}.
 \label{eq:power-lambda-bounds}
\end{equation}
For \((d,\alpha)=(1,1)\), one has
\(\lambda^*_{1,1}=1\) and \(K_{1,1}(\lambda^*_{1,1})=1/6\).
Multiplication of the endpoints in Eq.\eqref{eq:closed-lambda-bounds}, and in
Eq.\eqref{eq:power-lambda-bounds} when applicable, by \(\alpha s_d/2\) gives the
corresponding bounds for \(K_{d,\alpha}(\lambda^*_{d,\alpha})\).
\end{theorem}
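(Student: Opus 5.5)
The plan is to write \(\E L_{\alpha,N}^{(\gamma_N)}-Nc_{d,\alpha}=\sum_{t=0}^{N-1}\E\{f(x_t)-c_{d,\alpha}\}\) with \(f(x)=g_{d,\alpha}(\norm x)\). This uses Eq.\eqref{eq:conditional-moment} and the independence of \(p_{t+1}\) and \(x_t\). Then decompose \(x_t=\gamma^tp_0+\eta_{t,N}\), so that each summand splits as
\[
 \E\{f(\gamma^tp_0)-c_{d,\alpha}\}
 +\E\{f(\gamma^tp_0+\eta_{t,N})-f(\gamma^tp_0)\}.
\]
The first piece carries the dependence on the initial point. The second piece carries the innovations.

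For the innovation sum, apply Lemma~\ref{lem:innovation-estimates}. For \(t\geq M\sqrt N\), Eq.\eqref{eq:innovation-local-taylor} replaces the term by \(\frac\alpha2\E\norm{\eta_{t,N}}^2\). The error is at most \(\frac\varepsilon2\E\norm{\eta_{t,N}}^2+C\delta_N^2\), and summing \(C\delta_N^2\) over \(N\) terms gives \(O(1)\). For \(t<M\sqrt N\), Eq.\eqref{eq:innovation-global-taylor} bounds both the term and \(\frac\alpha2\E\norm{\eta_{t,N}}^2\) by \(C\E\norm{\eta_{t,N}}^2\). By Eq.\eqref{eq:innovation-second-moment} this is at most \(C s_d\delta_N\cdot 2t\delta_N\), whose sum over \(t<M\sqrt N\) is \(O(M^2)=o(\sqrt N)\). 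Next, sum Eq.\eqref{eq:innovation-second-moment} exactly. The main part is \(N s_d\delta_N/(1+\gamma)\sim s_d\lambda\sqrt N/2\), and the correction \(\sum\gamma^{2t}\) is \(O(1/\delta_N)\) times \(\delta_N\), hence \(O(1)\). The innovation sum is therefore \(\frac{\alpha s_d}{4}\lambda\sqrt N+\varepsilon O(\sqrt N)+o(\sqrt N)\). Letting \(\varepsilon\to0\) gives the first term of \(K_{d,\alpha}(\lambda)\).

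For the initial-point sum, use radial integration. Since \(\norm{p_0}\) has density \(dr^{d-1}\), we get
\[
 \sum_{t=0}^{N-1}\E\{f(\gamma^tp_0)-c_{d,\alpha}\}
 =\int_0^1 dr^{d-1}\sum_{t=0}^{N-1}\phi(\gamma^tr)\,dr,
 \qquad \phi=g_{d,\alpha}-c_{d,\alpha}\geq0 .
\]
This sum is compared with a Riemann integral in \(t\). Substitute \(u=\gamma^tr\), so that \(dt=du/(u\log(1/\gamma))\) and \(\log(1/\gamma)\sim\delta_N=\lambda N^{-1/2}\). The sum is then approximately \(\delta_N^{-1}\int_{\gamma^Nr}^r\phi(u)\,du/u\). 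Here \(\gamma^N=e^{-\lambda\sqrt N(1+o(1))}\to0\), and \(\phi(u)/u=O(u)\) by Eq.\eqref{eq:radial-local-expansion}, so the lower limit may be dropped. Since \(\phi(\gamma^tr)\) is monotone in \(t\) (by Proposition~\ref{prop:conditional-kernel}, \(g\) increases), the sum-versus-integral error is at most \(\phi(r)\leq C\), hence \(O(1)\). Fubini then gives
\[
 \frac{\sqrt N}{\lambda}\int_0^1\frac{\phi(u)}{u}(1-u^d)\,du
 =\frac{H_{d,\alpha}}{2\lambda}\sqrt N ,
\]
using the definition in Eq.\eqref{eq:H-constant}. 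Every estimate depends continuously on \(\lambda\) through the constants of Lemma~\ref{lem:innovation-estimates} on compact \(\Lambda\), which gives locally uniform convergence. I expect the main obstacle to be the careful bookkeeping of the \(t<M\sqrt N\) range and the uniformity in \(\lambda\); the cross term is already absorbed by the exact decomposition, so nothing else should be delicate.

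The remaining claims are elementary. \(K_{d,\alpha}(\lambda)=a\lambda+b/\lambda\) is strictly convex on \((0,\infty)\), with unique minimizer \(\sqrt{b/a}\) and minimum \(2\sqrt{ab}\), which yields Eq.\eqref{eq:optimal-square-root-constant}. For \(\alpha=2\), \(q_{d,2}\equiv1\) by Proposition~\ref{prop:radial-quotient}, so \(H_{d,2}=s_d\), giving \(\lambda^*=1\) and \(K=s_d\). Substituting Eq.\eqref{eq:closed-H-bounds} into \(\lambda^*=(2H/(\alpha s_d))^{1/2}\) gives Eq.\eqref{eq:closed-lambda-bounds}, and Eq.\eqref{eq:power-H-bounds} gives Eq.\eqref{eq:power-lambda-bounds}. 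The value \(H_{1,1}=1/6=s_1/2\) gives \(\lambda^*_{1,1}=1\) and \(K=\sqrt{\frac12\cdot\frac13\cdot\frac16}=1/6\). Finally, \(K(\lambda^*)=\frac{\alpha s_d}{2}\lambda^*\), which accounts for the last sentence of the statement.
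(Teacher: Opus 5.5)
Your proof is correct and follows the paper's argument. It uses the same split \(x_t=\gamma^tp_0+\eta_{t,N}\), the same use of Lemma~\ref{lem:innovation-estimates} on the ranges \(t<M\sqrt N\) and \(t\geq M\sqrt N\), and the same final Fubini computation giving \(H_{d,\alpha}/(2\lambda)\). The only difference is in the initial-point term: you compare \(\sum_t\phi(\gamma^tr)\) directly with \(\int_0^N\phi(\gamma^tr)\,dt\), substitute \(u=\gamma^tr\) exactly, and get an explicit \(O(1)\) error that makes uniformity in \(\lambda\) immediate, whereas the paper uses Riemann sums in \(t/\sqrt N\) with a geometric tail majorant. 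Note that the monotonicity of \(g_{d,\alpha}\) you rely on is established in the proof of Proposition~\ref{prop:conditional-kernel}, not in its statement.
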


\begin{proof}
Let \(\delta_N=\lambda N^{-1/2}\), \(\gamma=1-\delta_N\), and let
\[
 x_t=\gamma^tp_0+\eta_{t,N},
 \qquad
 \eta_{t,N}=\delta_N\sum_{j=1}^t\gamma^{t-j}p_j.
\]
 Let \(f(x)=g_{d,\alpha}(\norm x)\) be the conditional expected cost at state
 \(x\).  Lemma~\ref{lem:innovation-estimates} supplies the moment estimates and
 the Taylor remainder needed below.  Its constants are uniform when
 \(\lambda\) ranges over a compact subset of \((0,\infty)\).

We next prove
\begin{equation}
 \lim_{N\to\infty}\frac1{\sqrt N}
 \sum_{t=0}^{N-1}
 \left\{\E f(x_t)-\E f(\gamma^tp_0)\right\}
 =\frac{\alpha s_d}{4}\lambda.
 \label{eq:innovation-contribution}
\end{equation}
 For fixed \(M\), the contribution of \(t<M\sqrt N\), divided by
 \(\sqrt N\), tends to zero by
 Eq.\eqref{eq:innovation-global-taylor} and
 Eq.\eqref{eq:innovation-second-moment}.  For the remaining indices,
 Eq.\eqref{eq:innovation-local-taylor} shows that the accumulated difference
 from \((\alpha/2)\sum_t\E\norm{\eta_{t,N}}^2\), divided by \(\sqrt N\), is
 \(O(\varepsilon)+o(1)\).  Finally, exact summation of
 Eq.\eqref{eq:innovation-second-moment} gives
\[
 \frac1{\sqrt N}\sum_{t=0}^{N-1}\E\norm{\eta_{t,N}}^2
 =\frac{s_d\delta_N}{(1+\gamma)\sqrt N}
 \left(N-\frac{1-\gamma^{2N}}{1-\gamma^2}\right)
 \longrightarrow\frac{s_d\lambda}{2}.
\]
Letting \(\varepsilon\to0^+\) proves
Eq.\eqref{eq:innovation-contribution}.

For completeness, the quadratic envelope gives
\[
 0\leq\E\left[g_{d,\alpha}(r\norm{p_0})-c_{d,\alpha}\right]
 \leq k^+_{d,\alpha}s_dr^2,
 \qquad 0\leq r\leq1.
\]
Moreover, \(-\sqrt N\log(1-\lambda N^{-1/2})\to\lambda\) uniformly when
\(\lambda\) ranges
over a compact subset of \((0,\infty)\), and
\[
 \gamma^t
 =\exp\left[\sqrt N\log(1-\lambda N^{-1/2})\frac{t}{\sqrt N}\right].
\]
On a bounded interval in
\(u=t/\sqrt N\), ordinary Riemann-sum convergence is therefore uniform in
such \(\lambda\).  On the remaining tail, the preceding quadratic bound supplies a
uniformly summable geometric majorant.  Consequently, the contribution
retained from the initial point satisfies
\begin{align*}
 &\lim_{N\to\infty}\frac1{\sqrt N}
 \sum_{t=0}^{N-1}
 \left\{\E f(\gamma^tp_0)-c_{d,\alpha}\right\}\\
 &\quad=\int_0^\infty
 \left\{\E g_{d,\alpha}(e^{-\lambda u}\norm{p_0})
                  -c_{d,\alpha}\right\}\,du
 =\frac{H_{d,\alpha}}{2\lambda}.
\end{align*}
For the last equality, use that \(\norm{p_0}\) has density
\(d r^{d-1}\) on \([0,1]\).  Tonelli's theorem
\cite[Sec.~18]{Billingsley1995} applies because the integrand
is nonnegative.  Substitution of \(s=e^{-\lambda u}r\), followed by interchange of
the two radial integrals, gives
\begin{align*}
 \int_0^\infty
   \left\{\E g_{d,\alpha}(e^{-\lambda u}\norm{p_0})-c_{d,\alpha}\right\}\,du
 &=\frac d\lambda\int_0^1r^{d-1}
     \int_0^r\frac{g_{d,\alpha}(s)-c_{d,\alpha}}s\,ds\,dr\\
 &=\frac1\lambda\int_0^1\frac{1-s^d}s
       \left(g_{d,\alpha}(s)-c_{d,\alpha}\right)\,ds
 =\frac{H_{d,\alpha}}{2\lambda}.
\end{align*}
Combining the two contributions proves Eq.\eqref{eq:tight-square-root-constant}.
Minimizing
\(\alpha s_d\lambda/4+H_{d,\alpha}/(2\lambda)\) proves
Eq.\eqref{eq:optimal-square-root-constant}.  If \(\alpha=2\), then
\(g_{d,2}(r)=s_d+r^2\).  Direct substitution in
Eq.\eqref{eq:H-constant} gives \(H_{d,2}=s_d\).
Substituting Eq.\eqref{eq:closed-H-bounds} into
Eq.\eqref{eq:optimal-square-root-constant} proves
Eq.\eqref{eq:closed-lambda-bounds}.  When \(d+\alpha>2\), substitution of
Eq.\eqref{eq:power-H-bounds} proves Eq.\eqref{eq:power-lambda-bounds}.  For
\((d,\alpha)=(1,1)\), the values follow from \(H_{1,1}=1/6\) and \(s_1=1/3\).
The bounds for the minimum of \(K_{d,\alpha}\) follow from
\(K_{d,\alpha}(\lambda^*_{d,\alpha})=\alpha s_d\lambda^*_{d,\alpha}/2\).

All estimates above are uniform when \(\lambda\) ranges over a compact subset of
\((0,\infty)\).  In the initial-state term, uniformity follows first on a
bounded \(u\)-interval from uniform Riemann-sum convergence and then on its
tail from the lower endpoint of the compact set and the quadratic envelope.
This proves the stated local uniformity.
\end{proof}

\begin{theorem}
\label{thm:global-finite-optimizer}
Under the hypotheses of Theorem~\ref{thm:square-root-constant}, let
\[
 \gamma_N^{\mathrm{opt}}\in
 \operatorname*{arg\,min}_{0\leq\gamma\leq1}
 \E L_{\alpha,N}^{(\gamma)}.
\]
Then
\begin{equation}
 \sqrt N\bigl(1-\gamma_N^{\mathrm{opt}}\bigr)
 \longrightarrow \lambda^*_{d,\alpha},
 \qquad
 \frac{\E L_{\alpha,N}^{(\gamma_N^{\mathrm{opt}})}
       -Nc_{d,\alpha}}{\sqrt N}
 \longrightarrow K_{d,\alpha}(\lambda^*_{d,\alpha}).
 \label{eq:global-finite-optimizer}
\end{equation}
\end{theorem}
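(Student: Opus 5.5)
The argument localizes any minimizer to the scale \(1-\gamma\asymp N^{-1/2}\) by combining the finite envelope of Theorem~\ref{thm:finite-envelope} with the exact second-moment sum Eq.\eqref{eq:finite-state-sum}. It then applies the locally uniform limit of Theorem~\ref{thm:square-root-constant} on that scale. Write \(\delta=1-\gamma\) and
\[
 S_N(\gamma)=\sum_{t=0}^{N-1}\E\norm{x_t}^2
 =s_d\left[N\frac{\delta}{2-\delta}+\frac{2\gamma}{1+\gamma}\frac{1-\gamma^{2N}}{1-\gamma^2}\right].
\]
Both terms are nonnegative, so \(S_N(\gamma)\geq s_dN\delta/2\). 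The second term is at least a positive multiple of \(\min\{N,1/\delta\}\) whenever \(\gamma\geq1/2\), and for \(\gamma\leq1/2\) the first term alone is of order \(N\).

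\textbf{Step 1 (upper bound at the comparison point).} By Theorem~\ref{thm:square-root-constant} with \(\lambda=\lambda^*_{d,\alpha}\), we have \(\E L^{(\gamma_N^{\mathrm{opt}})}_{\alpha,N}-Nc_{d,\alpha}\leq\E L^{(1-\lambda^*N^{-1/2})}_{\alpha,N}-Nc_{d,\alpha}=(K^*+o(1))\sqrt N\), where \(K^*=K_{d,\alpha}(\lambda^*_{d,\alpha})\). Hence
\[
 \limsup_{N\to\infty}\frac{\E L^{(\gamma_N^{\mathrm{opt}})}_{\alpha,N}-Nc_{d,\alpha}}{\sqrt N}\leq K^*.
\]

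\textbf{Step 2 (a priori localization).} The lower envelope gives \(\E L^{(\gamma)}_{\alpha,N}-Nc_{d,\alpha}\geq k^-_{d,\alpha}S_N(\gamma)\). Combining this with Step 1 forces \(S_N(\gamma_N^{\mathrm{opt}})=O(\sqrt N)\).
\begin{itemize}
\item From \(S_N\geq s_dN\delta/2\) we get \(\delta_N=O(N^{-1/2})\), which also excludes \(\gamma\leq1/2\).
\item From the second term, \(\min\{N,1/\delta_N\}=O(\sqrt N)\), so \(\delta_N\) is at least a constant times \(N^{-1/2}\). In particular \(\gamma=1\), whose sum equals \(s_dN\), is excluded.
\end{itemize}
Thus \(\lambda_N:=\sqrt N(1-\gamma_N^{\mathrm{opt}})\) lies in a fixed compact set \([a,A]\subset(0,\infty)\) for all large \(N\).

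\textbf{Step 3 (passage to the limit).} Local uniformity in Theorem~\ref{thm:square-root-constant} on \([a,A]\) gives
\[
 \frac{\E L^{(\gamma_N^{\mathrm{opt}})}_{\alpha,N}-Nc_{d,\alpha}}{\sqrt N}=K_{d,\alpha}(\lambda_N)+o(1).
\]
Since \(K_{d,\alpha}\geq K^*\), this together with Step 1 yields the second limit in Eq.\eqref{eq:global-finite-optimizer}. It also gives \(K_{d,\alpha}(\lambda_N)\to K^*\). Because \(K_{d,\alpha}\) is continuous and strictly convex with unique minimizer \(\lambda^*\) on \([a,A]\), every subsequential limit of \(\lambda_N\) equals \(\lambda^*\). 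This proves the first limit.

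The main obstacle is Step 2. It needs constants in the localization that are uniform in \(N\), including near \(\gamma=1\) and \(\gamma=0\). The case \((d,\alpha)=(1,1)\) also needs \(k^->0\), which holds here since \(k^-_{1,1}=1/2\). The explicit formula for \(S_N\) handles all of these cases directly, so no further analytic input beyond the earlier results is needed.
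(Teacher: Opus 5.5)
Your proposal is correct and follows the paper's proof essentially step for step. Comparing with \(1-\lambda^*_{d,\alpha}N^{-1/2}\) and applying the lower envelope makes the accumulated state second moment \(O(\sqrt N)\). The explicit formula for that sum then traps \(\sqrt N(1-\gamma_N^{\mathrm{opt}})\) in a compact subset of \((0,\infty)\), which excludes \(\gamma\leq 1/2\) and \(\gamma=1\). Local uniformity in Theorem~\ref{thm:square-root-constant}, together with the unique minimizer of \(K_{d,\alpha}\), then finishes the argument. The only difference is cosmetic: you get the cost limit first by squeezing \(K_{d,\alpha}(\lambda_N)+o(1)\) between \(K_{d,\alpha}(\lambda^*_{d,\alpha})\) and \(K_{d,\alpha}(\lambda^*_{d,\alpha})+o(1)\), whereas the paper extracts convergent subsequences and compares with every fixed competitor \(\mu\).
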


\begin{proof}
Continuity in \(\gamma\) gives a minimizer.  Comparing it with
\(1-\lambda^*_{d,\alpha}N^{-1/2}\), and using
Theorem~\ref{thm:finite-envelope}, shows that
the accumulated state second moment at \(\gamma_N^{\mathrm{opt}}\) is
\(O(\sqrt N)\).
Put \(\beta_N=1-\gamma_N^{\mathrm{opt}}\).  The first term in
Eq.\eqref{eq:finite-state-sum} is at least \(N\beta_N/2\), so
\(\beta_N\sqrt N\) is bounded above.  If
\(\beta_N=0\), then the sum in Eq.\eqref{eq:finite-state-sum} equals \(Ns_d\),
so this endpoint cannot minimize
for large \(N\).  For \(0<\beta_N\leq1/2\),
 \[
  1-(1-\beta_N)^{2N}
  \geq1-e^{-2N\beta_N}
  \geq(1-e^{-1})\min\{2N\beta_N,1\}.
 \]
 The preceding upper bound gives \(\beta_N=O(N^{-1/2})\).  Consequently,
 \(\gamma_N^{\mathrm{opt}}\geq1/2\) for all sufficiently large \(N\).  For
 those \(N\), the contribution of the second term
 in Eq.\eqref{eq:finite-state-sum} is at least
 \[
  s_d\frac{1-e^{-1}}3
  \min\{2N,\beta_N^{-1}\}
  \geq s_d\frac{1-e^{-1}}3
  \min\{N,\beta_N^{-1}\}.
 \]
 If \(\beta_N\sqrt N\) tended to zero
along a subsequence, this lower bound would be larger than order \(\sqrt N\),
contradicting the preceding estimate.  Thus
\(\beta_N\sqrt N\) also stays bounded away from zero.

Every subsequence therefore has a further subsequence on which
\(\beta_N\sqrt N\to\lambda\in(0,\infty)\).  Local uniformity in
Theorem~\ref{thm:square-root-constant} and comparison with each fixed
competitor give \(K_{d,\alpha}(\lambda)\leq K_{d,\alpha}(\mu)\) for every
\(\mu>0\).  The unique minimizer of \(K_{d,\alpha}\) is
\(\lambda^*_{d,\alpha}\), so every subsequential limit equals this value.  The
cost convergence follows from the same local uniformity.
\end{proof}

\begin{theorem}
\label{thm:finite-quadratic}
For \(N\geq1\), with \(x_0=p_0\),
\begin{equation}
 \E L_{2,N}^{(\gamma)}
 =\frac{2s_d}{1+\gamma}
 \left[
 N+\gamma\frac{1-\gamma^{2N}}{1-\gamma^2}
 \right].
 \label{eq:finite-quadratic}
\end{equation}
At \(\gamma=1\), Eq.\eqref{eq:finite-quadratic} denotes its continuous
extension.  Both endpoints
satisfy
\[
 \E L_{2,N}^{(0)}=\E L_{2,N}^{(1)}=2Ns_d.
\]
For \(N\geq2\), the expression has a unique minimizer
\(\gamma_N^*\in(0,1)\).  If
\[
 P_N(\gamma)=\sum_{j=0}^{N-1}\gamma^{2j},
\]
then \(\gamma_N^*\) is the unique solution of
\begin{equation}
 P_N(\gamma)+\gamma(1+\gamma)P_N'(\gamma)=N.
 \label{eq:optimal-gamma}
\end{equation}
Moreover,
\begin{align}
 1-\gamma_N^*
 &=N^{-1/2}-\frac12N^{-1}+O(N^{-3/2}),
 \label{eq:optimal-gamma-asymptotic}\\
 \E L_{2,N}^{(\gamma_N^*)}
 &=s_dN+s_d\sqrt N+O(1).
 \label{eq:optimal-cost-asymptotic}
\end{align}
\end{theorem}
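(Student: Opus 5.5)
By Proposition~\ref{prop:conditional-kernel} and the case \(\alpha=2\) of Proposition~\ref{prop:radial-quotient}, the radial mean cost is exactly \(g_{d,2}(r)=s_d+r^2\). Hence \(k^-_{d,2}=k^+_{d,2}=1\), and both sides of Eq.\eqref{eq:finite-envelope} in Theorem~\ref{thm:finite-envelope} coincide:
\[
\E L_{2,N}^{(\gamma)}=Ns_d+\sum_{t=0}^{N-1}\E\norm{x_t}^2 .
\]
Substituting Eq.\eqref{eq:finite-state-sum} and combining \(N s_d\bigl(1+\tfrac{1-\gamma}{1+\gamma}\bigr)=\tfrac{2Ns_d}{1+\gamma}\) gives Eq.\eqref{eq:finite-quadratic}. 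The term \((1-\gamma^{2N})/(1-\gamma^2)\) is \(P_N(\gamma)\), which equals \(N\) at \(\gamma=1\). This gives the endpoint values directly: \(\gamma=0\) yields \(2Ns_d\), and \(\gamma=1\) yields \((2s_d/2)(2N)=2Ns_d\).

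For the minimizer, I will write \(\E L_{2,N}^{(\gamma)}=2s_dF(\gamma)\) with
\[
F(\gamma)=\frac{N+\gamma P_N(\gamma)}{1+\gamma},
\]
a rational function smooth on \([0,1]\). Differentiating,
\[
(1+\gamma)^2F'(\gamma)=\Phi(\gamma):=P_N(\gamma)+\gamma(1+\gamma)P_N'(\gamma)-N.
\]
The key observation is that \(\Phi\) is a polynomial whose only negative coefficient is the constant term \(1-N\). All other coefficients come from \(P_N\) and \(\gamma(1+\gamma)P_N'\), which have nonnegative coefficients, and they are not all zero when \(N\geq2\). So \(\Phi\) is strictly increasing on \([0,1]\). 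Moreover \(\Phi(0)=1-N<0\) and \(\Phi(1)=N+2\cdot2\binom N2-N>0\). Hence \(\Phi\) has a unique root \(\gamma_N^*\in(0,1)\), \(F\) decreases before it and increases after it, and this proves uniqueness and Eq.\eqref{eq:optimal-gamma}.

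For the asymptotics, I set \(\beta=1-\gamma\) and first localize the root. I will evaluate the sign of \(\Phi\) at \(\beta=\tfrac12N^{-1/2}\) and at \(\beta=2N^{-1/2}\), using the closed form \(P_N=(1-\gamma^{2N})/(1-\gamma^2)\) and its derivative. On this window, \(\gamma^{2N}\leq e^{-N^{1/2}}\), and \(N\gamma^{2N}\) is also exponentially small, so these terms are \(O(N^{-k})\) for every \(k\). After this localization, the equation \(\Phi=0\) becomes, up to exponentially small errors, the minimization of
\[
G(\beta)=\frac{1}{2-\beta}\left(N+\frac{1-\beta}{\beta(2-\beta)}\right).
\]
Expanding gives \(G=\tfrac N2+\tfrac{N\beta}4+\tfrac1{4\beta}+O(N\beta^2+1)\), whose stationary point is \(\beta\approx N^{-1/2}\). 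I will then substitute \(\beta=N^{-1/2}+cN^{-1}+\rho\) into the cleared-denominator form of \(G'(\beta)=0\) and match the orders \(N^{1/2}\) and \(N^0\). This should give \(c=-\tfrac12\). A monotonicity or implicit-function bound on the remainder, using that \(\Phi'\) is of order \(N^{3/2}\) on the window, then gives \(\rho=O(N^{-3/2})\), which is Eq.\eqref{eq:optimal-gamma-asymptotic}.

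Finally, I will insert this \(\beta\) into Eq.\eqref{eq:finite-quadratic}. The factor \(2N/(1+\gamma)=N(1+\beta/2+O(\beta^2))\) contributes \(N+\tfrac12\sqrt N+O(1)\). The term \(2\gamma P_N/(1+\gamma)\approx 1/(2\beta)\) contributes \(\tfrac12\sqrt N+O(1)\). Together these give Eq.\eqref{eq:optimal-cost-asymptotic}.

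The main obstacle is the second-order term \(-\tfrac12N^{-1}\). It requires careful bookkeeping of the \(O(\beta)\) corrections in \(1/(2-\beta)\) and \((1-\beta)/(2-\beta)\), together with a rigorous argument that the exponentially small \(\gamma^{2N}\) terms and their derivatives cannot shift the root at order \(N^{-3/2}\). I will handle the latter by bounding \(\Phi\) on the window and using the lower bound on \(\Phi'\).
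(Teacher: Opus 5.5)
Your proposal is correct and follows essentially the paper's own route: the exact identity $\E\norm{p-x}^2=s_d+\norm x^2$ summed against the state second moments, and the sign function $P_N+\gamma(1+\gamma)P_N'-N$ shown to be strictly increasing (your nonnegative-coefficient observation is equivalent to the paper's positive-derivative check). The asymptotics then come, as in the paper, from localizing $\beta_N\sqrt N$ near $1$ with exponentially small $\gamma^{2N}$ terms and expanding to $N\beta^2=1-\beta+O(\beta^2)$; the only difference is that the paper obtains the $-\tfrac12N^{-1}$ term by taking a square root and bootstrapping, whereas you use an ansatz plus a lower bound on $\Phi'$, and both give the result.
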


\begin{proof}
The point \(p_{t+1}\) is independent of \(x_t\), so
\(\E\norm{p_{t+1}-x_t}^2=s_d+\E\norm{x_t}^2\).
Summing Eq.\eqref{eq:finite-state-second} for \(t=0,\ldots,N-1\) proves
Eq.\eqref{eq:finite-quadratic}.  Direct substitution and passage to the limit at
\(\gamma=1\) give the endpoint values.

Differentiating Eq.\eqref{eq:finite-quadratic} shows that its derivative has the
sign of
\[
 P_N(\gamma)+\gamma(1+\gamma)P_N'(\gamma)-N.
\]
 For \(N\geq2\), this expression is negative at zero and positive at one.  Its
 derivative is
\[
 2(1+\gamma)P_N'(\gamma)
 +\gamma(1+\gamma)P_N''(\gamma)>0
\]
on \((0,1)\), which proves existence and uniqueness.

Let \(\beta_N=1-\gamma_N^*\) and
\(q_N=(1-\beta_N)^{2N}\).  In these variables, the left-hand side of
Eq.\eqref{eq:optimal-gamma} is
\begin{equation}
 (1-q_N)\frac{2-3\beta_N+2\beta_N^2}
 {\beta_N^2(2-\beta_N)}
 -\frac{2Nq_N}{\beta_N}.
 \label{eq:optimal-beta}
\end{equation}
For \(\beta=cN^{-1/2}\), the ratio of
Eq.\eqref{eq:optimal-beta} to \(N\) tends to \(c^{-2}\).  Evaluating at any
fixed \(c<1\) and \(c>1\), and using the strict monotonicity already proved,
gives \(\beta_N\sqrt N\to1\).  Consequently,
\[
 q_N=\exp\bigl(-(2+o(1))\sqrt N\bigr),
\]
so every term containing \(q_N\) in the optimality equation is \(o(1)\).
It follows that
\[
 \frac{2-3\beta_N+2\beta_N^2}
 {\beta_N^2(2-\beta_N)}=N+o(1).
\]
The rational function has the expansion
\[
 \frac{2-3\beta+2\beta^2}{\beta^2(2-\beta)}
 =\frac1{\beta^2}-\frac1\beta+\frac12+O(\beta).
\]
Multiplication by \(\beta_N^2\) gives
\[
 N\beta_N^2=1-\beta_N+O(\beta_N^2).
\]
Taking the positive square root gives
\[
 \sqrt N\,\beta_N
 =1-\frac{\beta_N}{2}+O(\beta_N^2).
\]
This first gives \(\beta_N=N^{-1/2}+O(N^{-1})\).  Substitution back gives
\[
 \beta_N=N^{-1/2}-\frac12N^{-1}+O(N^{-3/2}),
\]
which proves Eq.\eqref{eq:optimal-gamma-asymptotic}.  Finally,
\[
 \frac{\E L_{2,N}^{(\gamma_N^*)}}{s_d}
 =\frac{2}{2-\beta_N}
 \left[N+
 \frac{(1-\beta_N)(1-(1-\beta_N)^{2N})}
 {\beta_N(2-\beta_N)}\right]
 =N+\sqrt N+O(1),
\]
which proves Eq.\eqref{eq:optimal-cost-asymptotic}.
\end{proof}

The optimum has a simple interpretation.  The accumulated stationary excess
above the center-star cost grows on the scale \(N(1-\gamma)\), while
forgetting the random initial center takes scale \((1-\gamma)^{-1}\).  Their
balance selects \(1-\gamma\asymp N^{-1/2}\).
Theorem~\ref{thm:square-root-constant} determines the
balance constant for general \(\alpha\), and
Theorem~\ref{thm:finite-quadratic} supplies the complete formula when
\(\alpha=2\).

\subsection{A time-varying comparison}

The preceding optimization keeps the same value of \(\gamma\) throughout the
construction.  A useful comparison is obtained by attaching the next point to
the mean of all points already observed.  Let
\begin{equation}
 \bar x_t=\frac1{t+1}\sum_{j=0}^t p_j,
 \qquad
 \bar L_{\alpha,N}
 =\sum_{t=0}^{N-1}\norm{p_{t+1}-\bar x_t}^\alpha.
 \label{eq:running-average}
\end{equation}

\begin{proposition}
\label{prop:running-average}
For every \(d\geq1\), \(\alpha>0\), and \(N\geq1\),
\begin{equation}
 Nc_{d,\alpha}+k^-_{d,\alpha}s_d\sum_{j=1}^N\frac1j
 \leq \E\bar L_{\alpha,N}
 \leq Nc_{d,\alpha}+k^+_{d,\alpha}s_d\sum_{j=1}^N\frac1j.
 \label{eq:running-average-envelope}
\end{equation}
For \(\alpha=2\),
\begin{equation}
 \E\bar L_{2,N}=Ns_d+s_d\sum_{j=1}^N\frac1j.
 \label{eq:running-average-quadratic}
\end{equation}
\end{proposition}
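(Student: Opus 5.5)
The proof follows the template of Theorem~\ref{thm:finite-envelope}: condition on the current state, apply the quadratic envelope, and sum an exactly computable second moment.

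\emph{Conditioning step.} At insertion \(t+1\), the point \(p_{t+1}\) is independent of \(\bar x_t\), and \(\bar x_t\in\B\) because it is a convex combination of points of \(\B\). Eq.\eqref{eq:conditional-moment} and the envelope Eq.\eqref{eq:best-quadratic-envelope} then give
\[
 c_{d,\alpha}+k^-_{d,\alpha}\norm{\bar x_t}^2
 \leq \E\bigl[\norm{p_{t+1}-\bar x_t}^\alpha\,\big|\,\bar x_t\bigr]
 \leq c_{d,\alpha}+k^+_{d,\alpha}\norm{\bar x_t}^2 .
\]
Taking expectations and summing over \(t=0,\ldots,N-1\) reduces Eq.\eqref{eq:running-average-envelope} to computing \(\sum_{t=0}^{N-1}\E\norm{\bar x_t}^2\).

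\emph{Second moment of the running mean.} The points \(p_j\) are independent and centered, with \(\E\norm{p_j}^2=s_d\). All cross terms therefore vanish, as in the proof of Lemma~\ref{lem:finite-state-second}, and
\[
 \E\norm{\bar x_t}^2=\frac{1}{(t+1)^2}\sum_{j=0}^{t}\E\norm{p_j}^2=\frac{s_d}{t+1}.
\]
Summing over \(t=0,\ldots,N-1\) gives \(s_d\sum_{j=1}^N 1/j\), and substituting this into the two bounds proves Eq.\eqref{eq:running-average-envelope}.

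\emph{The case \(\alpha=2\).} No envelope is needed here. Since \(p_{t+1}\) is centered and independent of \(\bar x_t\), expanding the square gives \(\E\norm{p_{t+1}-\bar x_t}^2=s_d+\E\norm{\bar x_t}^2\). Summing yields Eq.\eqref{eq:running-average-quadratic}. This is consistent with the envelope bounds, since \(q_{d,2}\equiv1\) by Proposition~\ref{prop:radial-quotient}, so \(k^\pm_{d,2}=1\).

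There is no real obstacle in this argument. The only points to verify are the independence of \(p_{t+1}\) from \(\bar x_t\), which holds because \(\bar x_t\) depends only on \(p_0,\ldots,p_t\), and that \(\bar x_t\) stays in \(\B\), so that the envelope, valid for \(x\in\B\), applies.
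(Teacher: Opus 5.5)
Your proposal is correct and follows essentially the same route as the paper: independence and centering give \(\E\norm{\bar x_t}^2=s_d/(t+1)\), the quadratic envelope applied conditionally on \(\bar x_t\) and summed over \(t\) gives the bounds, and for \(\alpha=2\) the paper uses \(g_{d,2}(r)=s_d+r^2\), which is exactly the direct expansion of the square that you perform. Your explicit check that \(\bar x_t\in\B\), so that the envelope applies, is a small detail the paper leaves implicit.
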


\begin{proof}
The points are centered and independent, so
\[
 \E\norm{\bar x_t}^2=\frac{s_d}{t+1}.
\]
The point \(p_{t+1}\) is independent of \(\bar x_t\).  Apply
Eq.\eqref{eq:best-quadratic-envelope}, take expectations, and sum over
\(t=0,\ldots,N-1\).  For \(\alpha=2\), the identity
\(g_{d,2}(r)=s_d+r^2\) gives Eq.\eqref{eq:running-average-quadratic}.
\end{proof}

For arbitrary points in \(\B\), let
\[
 \overline M_{\alpha,N}^{(d)}
 =\sup_{p_0,\ldots,p_N\in\B}\bar L_{\alpha,N}.
\]

\begin{proposition}
\label{prop:running-average-adversarial}
For every input sequence and \(N\geq1\),
\begin{equation}
 \sum_{t=1}^N\norm{p_t-\bar x_{t-1}}^2
 \leq N+\sum_{t=1}^N\frac1t+2.
 \label{eq:running-average-adversarial-bound}
\end{equation}
Consequently, for every \(d\geq1\) and \(0<\alpha\leq2\),
\begin{equation}
 \lim_{N\to\infty}\frac1N\overline M_{\alpha,N}^{(d)}=1.
 \label{eq:running-average-adversarial-value}
\end{equation}
For \(\alpha>2\),
\begin{equation}
 \limsup_{N\to\infty}\frac1N\overline M_{\alpha,N}^{(d)}
 \geq 2^\alpha\alpha^{-\alpha/(\alpha-1)}>1.
 \label{eq:running-average-block-lower}
\end{equation}
\end{proposition}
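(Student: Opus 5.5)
The plan has three parts, one for each display.

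\textbf{The bound Eq.\eqref{eq:running-average-adversarial-bound}.} I would prove it with an exact potential, namely Welford's identity for the running scatter. Put
\[
 W_t=\sum_{j=0}^t\norm{p_j-\bar x_t}^2=\sum_{j=0}^t\norm{p_j}^2-(t+1)\norm{\bar x_t}^2 .
\]
Then \(W_0=0\), \(0\leq W_t\leq t+1\), and
\[
 W_{t+1}-W_t=\frac{t+1}{t+2}\norm{p_{t+1}-\bar x_t}^2 .
\]
Hence the \(t\)-th insertion cost equals \(\bigl(1+\tfrac1{t+1}\bigr)\Delta W_t\), with \(\Delta W_t=W_{t+1}-W_t\geq0\). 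Summing over \(t=0,\ldots,N-1\), the unweighted part telescopes to \(W_N\leq N+1\). For the weighted part \(\sum_t\Delta W_t/(t+1)\), I would use Abel summation. This gives \(W_N/N\) plus \(\sum_{t=1}^{N-1}W_t\bigl(\tfrac1t-\tfrac1{t+1}\bigr)\). With \(W_t\leq t+1\), the weighted part is at most \((N+1)/N+\sum_{t=1}^{N-1}1/t\). The total is then at most \(N+1+1+1/N+\sum_{t=1}^{N-1}1/t=N+\sum_{t=1}^N1/t+2\), which is exactly the claim. Using the crude bound \(\Delta W_t\leq4\) instead would lose a factor of four on the harmonic term, so the summation by parts is the essential step.

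\textbf{The value Eq.\eqref{eq:running-average-adversarial-value} for \(0<\alpha\leq2\).} For the upper bound, the power-mean inequality gives
\[
 \frac1N\sum a_t^\alpha\leq\Bigl(\frac1N\sum a_t^2\Bigr)^{\alpha/2},
\]
and the first part shows the right side is \(1+O(\log N/N)\). For the lower bound, the adversary chooses \(p_t\) to be the boundary point antipodal to the direction of \(\bar x_{t-1}\), or any boundary point if \(\bar x_{t-1}=O\). Each insertion then costs \((1+\norm{\bar x_{t-1}})^\alpha\geq1\). Together these give the limit one.

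\textbf{The lower bound Eq.\eqref{eq:running-average-block-lower} for \(\alpha>2\).} I would use a single two-block input. Fix a unit vector \(v\) and take \(p_0=\cdots=p_{n-1}=v\), followed by \(cn\) copies of \(-v\). The first block costs nothing, except for the harmless index \(p_0\). During the second block, after \(k\) antipodal points the mean is \(\frac{n-k}{n+k}v\), so the next distance is \(2n/(n+k)\). A Riemann-sum argument gives total cost \(n2^\alpha\int_0^c(1+s)^{-\alpha}\,ds+o(n)\) over \(N\approx n(1+c)\) insertions. Writing \(x=1+c\), the mean cost tends to
\[
 2^\alpha\,\frac{x^{-1}-x^{-\alpha}}{\alpha-1}.
\]
Maximizing in \(x\) gives \(x^{\alpha-1}=\alpha\), and the value is \(2^\alpha\alpha^{-\alpha/(\alpha-1)}\). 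To realize this along every large \(N\), I would take \(n=\lfloor N/x\rfloor\), so the statement even holds as a \(\liminf\). Finally, \(2^\alpha\alpha^{-\alpha/(\alpha-1)}>1\) is equivalent to \(\alpha<2^{\alpha-1}\), which holds for \(\alpha>2\) because equality holds at \(\alpha=2\) and the log-derivative comparison \(1/\alpha<\log2\) holds for \(\alpha>2\).

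The main obstacle is the sharp Abel-summation bookkeeping in the first part. The rest is Jensen's inequality and an explicit one-variable optimization.
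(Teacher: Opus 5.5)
Your proposal is correct and follows the paper's proof: your Welford scatter \(W_t\) has exactly the paper's increment identity \(t\norm{\bar x_{t-1}}^2+\norm{p_t}^2-(t+1)\norm{\bar x_t}^2=\frac{t}{t+1}\norm{p_t-\bar x_{t-1}}^2\), and your Abel summation reproduces the paper's bookkeeping (coefficients \(-1/t\) on \(\norm{\bar x_t}^2\) and \(1+1/t\) on \(\norm{p_t}^2\)). Your two-block input is the paper's construction with \(x=1/\theta\). The only departures are inessential: you get the lower bound for \(0<\alpha\leq2\) from adaptive antipodal choices instead of diameter alternation, and you verify \(2^\alpha\alpha^{-\alpha/(\alpha-1)}>1\), which the paper only asserts.
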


\begin{proof}
Let \(m_t=\bar x_t\).  The update
\(m_t=(t m_{t-1}+p_t)/(t+1)\) gives the exact identity
\[
 t\norm{m_{t-1}}^2+\norm{p_t}^2-(t+1)\norm{m_t}^2
 =\frac{t}{t+1}\norm{p_t-m_{t-1}}^2.
\]
After multiplication by \((t+1)/t\) and summation, the coefficients of
\(\norm{m_t}^2\) for \(1\leq t<N\) equal \(-1/t\), and the final coefficient
is negative.  Since all input points have norm at most one,
\[
 \sum_{t=1}^N\norm{p_t-m_{t-1}}^2
 \leq 2\norm{p_0}^2+
 \sum_{t=1}^N\left(1+\frac1t\right),
\]
which proves Eq.\eqref{eq:running-average-adversarial-bound}.  Power-mean
monotonicity gives
\[
 \frac1N\sum_{t=1}^N\norm{p_t-m_{t-1}}^\alpha
 \leq\left(1+\frac{2+\sum_{t=1}^N1/t}{N}\right)^{\alpha/2}
\]
for \(0<\alpha\leq2\).  Alternating the endpoints of a diameter makes the
left side converge to one, proving Eq.\eqref{eq:running-average-adversarial-value}.

For the last assertion, choose a unit vector \(u\), put
\(m=\lfloor\theta N\rfloor\), take the first \(m\) points equal to \(u\), and
take the remaining points equal to \(-u\).  During the second block, the
insertion at time \(t\) has length \(2m/t\).  Hence its limiting mean cost is
\[
 \frac{2^\alpha\theta(1-\theta^{\alpha-1})}{\alpha-1}.
\]
The maximum occurs at \(\theta=\alpha^{-1/(\alpha-1)}\) and equals the right
side of Eq.\eqref{eq:running-average-block-lower}.
\end{proof}

The update
\[
 \bar x_t=\frac{t}{t+1}\bar x_{t-1}+\frac1{t+1}p_t
\]
 stores one point and the insertion index, and its coefficients vary with
 \(t\).  Proposition~\ref{prop:running-average} gives an excess of order
 \(\log N\) over \(Nc_{d,\alpha}\).  Under the hypotheses of
 Theorem~\ref{thm:global-finite-optimizer}, the best rule with one constant value
 of \(\gamma\) has an excess of order \(\sqrt N\).  This comparison identifies the
scope of the finite-sample optimization: constant exponential weights give a
stationary one-point recursion with a fixed coefficient profile, while the
 running mean assigns equal weight to all \(t+1\) observed points.  Their
 insertion times are, on average, \(t/2\) steps before the most recent point.
 Proposition~\ref{prop:running-average-adversarial} shows that the running mean
 also has asymptotic adversarial value one for \(0<\alpha\leq2\), while a
 two-block input gives a value strictly larger than one for \(\alpha>2\).

Appendix~\ref{app:running-mean} records a sharper stochastic comparison:
among deterministic convex averages of the observed points, the running mean
minimizes the next insertion length in stochastic order.

For ordinary Euclidean length, let \(\mu_d\) be the mean distance between two
independent uniform points in \(\B\).  The input-order path and the star
centered at \(p_0\) satisfy
\begin{equation}
 \E L_{1,N}^{(0)}=\E L_{1,N}^{(1)}=N\mu_d.
 \label{eq:length-endpoints}
\end{equation}
Since \(g_{d,1}(r)>c_{d,1}\) for \(r>0\), we have
\(\mu_d=\E g_{d,1}(\norm{p_0})>c_{d,1}\).  The optimized exponential rule
therefore has a strictly smaller leading constant than either endpoint, and
Theorem~\ref{thm:square-root-constant} identifies its optimal square-root
correction.

\subsection{Stationary law and moments}
\label{sec:stationary-main}

We use the peakedness order for symmetric random vectors: \(y_1\) is more
peaked than \(y_2\) if
\[
 \Pr(y_1\in K)\geq\Pr(y_2\in K)
\]
for every compact convex set \(K\subset\R^d\) symmetric about the origin.  The
majorization theorem needed below is the following.

\begin{lemma}
\label{lem:olkin-tong}
Let \(z_0,\ldots,z_t\) be independent random vectors in \(\R^d\) with a common
symmetric log-concave density.  Let \(a,b\in[0,\infty)^{t+1}\) have the same
sum.  If \(a\) majorizes \(b\), then \(\sum_j b_jz_j\) is more peaked than
\(\sum_j a_jz_j\).
\end{lemma}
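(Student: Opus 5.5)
This is the Olkin--Tong theorem~\cite{OlkinTong1988}, so one option is to cite it and check that the hypotheses match: independence, a common symmetric log-concave density, and equal coefficient sums. I would nevertheless sketch the standard self-contained route, which combines two facts: a reduction to a single transfer between two coordinates, and Anderson's theorem~\cite{Anderson1955}.

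\textbf{Step 1: reduction to two coordinates.} By the Hardy--Littlewood--P\'olya characterization, if \(a\) majorizes \(b\) (with equal sums), then \(b\) is reachable from \(a\) by finitely many T-transforms. Each T-transform replaces a pair \((a_j,a_k)\) by a convex combination of that pair and its transpose, and leaves the other coordinates unchanged. Peakedness is transitive, since it is defined by inequalities over the same class of sets. It therefore suffices to treat one T-transform, where only two coordinates change.

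\textbf{Step 2: conditioning on the other summands.} Condition on all \(z_i\) with \(i\neq j,k\), and write \(w\) for their weighted sum, which has a symmetric log-concave law as well. If \(K\) is symmetric convex, then the translates \(K-w\) are convex but not symmetric, so a naive conditioning argument fails. Instead I would put the symmetric part together with the pair. The variable \(W=\sum_{i\neq j,k}a_iz_i\) has a symmetric log-concave density, or is degenerate, since convolutions of log-concave laws are log-concave by Prékopa. So it suffices to show the following: if \(S\) is symmetric log-concave and independent of the pair, then
\[
\Pr(S+a_jz_j+a_kz_k\in K)\le \Pr(S+b_jz_j+b_kz_k\in K).
\]

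\textbf{Step 3: the core two-variable statement.} Write \(b_j=\theta a_j+(1-\theta)a_k\) and \(b_k=\theta a_k+(1-\theta)a_j\). Consider the function
\[
\phi(s,t)=\Pr(S+s z_j+t z_k\in K)
\]
on the segment \(s+t=\text{const}\). Exchangeability of \(z_j,z_k\) makes \(\phi\) symmetric under \(s\leftrightarrow t\). The aim is to show that \(\phi\) is unimodal along the segment, and in fact quasi-concave, with maximum at \(s=t\). Quasi-concavity would follow from Prékopa--Leindler. The set
\[
\{(x,y,u,s): u+sx+(c-s)y\in K\}
\]
is not jointly convex, so I would instead use the Olkin--Tong device. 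Substitute \(x=z_j+z_k\) and \(y=z_j-z_k\); because the density is common and symmetric, the pair \((x,y)\) has a density that is log-concave and symmetric in \(y\) given \(x\). On the segment, \(s z_j+t z_k=\tfrac{c}{2}x+\tfrac{s-t}{2}y\). Conditionally on \(x\) and \(S\), the task reduces to showing that
\[
\Pr\bigl(\tfrac{s-t}{2}y\in K-S-\tfrac{c}{2}x\bigr)
\]
decreases in \(|s-t|\) after integrating over the symmetric law of \((S,x)\). This is exactly Anderson's theorem~\cite{Anderson1955}, applied to the jointly symmetric log-concave vector \((S,x,y)\) and the symmetric convex set
\[
\{(u,x,y): u+\tfrac{c}{2}x+\lambda y\in K\},
\]
with scaling in the \(y\)-direction. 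Concretely, the map \(\lambda\mapsto \Pr(\lambda\text{-scaled})\) is nonincreasing in \(|\lambda|\) because the set is convex and symmetric and the density is symmetric log-concave, so that its superlevel sets are symmetric convex and Anderson's inequality applies along the ray.

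\textbf{Main obstacle.} I expect the difficulty to lie in Step 3, specifically in establishing that \((x,y)=(z_j+z_k,z_j-z_k)\), jointly with \(S\), has a density that is both symmetric and log-concave. Log-concavity holds because the map is linear, and central symmetry holds because the common density is symmetric, which gives the condition Anderson requires. Monotonicity in \(|\lambda|\) for a symmetric convex set then follows from the standard corollary that \(\lambda\mapsto\Pr(V\in A_\lambda)\) is nonincreasing for \(A_\lambda\) a family of symmetric convex sets shrinking in one direction. I would verify this by writing \(A_\lambda\) as a linear image and invoking Anderson's theorem.
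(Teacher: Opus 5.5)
Citing Olkin and Tong is exactly what the paper does. It states the lemma without proof, referring to Olkin--Tong and to Proposition~8.6 of Saumard--Wellner for the ordering convention. Your Steps~1 and~2 are sound: the T-transform reduction, transitivity, and keeping the other summands $S$ inside the probability. Your self-contained sketch, however, breaks at Step~3.

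\textbf{What fails in Step~3.}
\begin{itemize}
\item Conditioning on $(x,S)$ leaves $\lambda y$, with $\lambda=(s-t)/2$, aimed at the non-symmetric set $K-S-\tfrac c2x$, so no pointwise monotonicity is available.
\item Anderson's theorem does not cover the sets $A_\lambda=\{(u,x,y):u+\tfrac c2x+\lambda y\in K\}$ ``with scaling in the $y$-direction''. It controls integrals $k\mapsto\int_E\varphi(w+kv)\,dw$ over translates of one fixed symmetric convex set. The $A_\lambda$ are neither translates of each other nor nested.
\item Log-concavity plus central symmetry of $(S,x,y)$ do not imply your claim. Take $d=1$, $S=0$, $c=2$, and $(x,y)$ centered Gaussian with unit variances and correlation $\rho<0$. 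Then $\operatorname{Var}(x+\lambda y)=1+2\rho\lambda+\lambda^2<1$ for small $\lambda>0$, so $\Pr(|x+\lambda y|\le1)$ \emph{increases} near $\lambda=0$.
\end{itemize}
So your ``main obstacle'' paragraph targets the wrong point. Joint log-concavity and central symmetry are immediate but insufficient.

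\textbf{The missing ingredient} is a second symmetry coming from exchangeability. Swapping $z_j$ and $z_k$ fixes $S$ and $x$ and negates $y$. Hence the joint density $h$ of $(S,x,y)$ satisfies $h(u,x,-y)=h(u,x,y)$. Combined with central symmetry, this gives $h(-u,-x,y)=h(u,x,y)$.

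Consequently, conditionally on $y$, the vector $U=S+\tfrac c2x$ has a symmetric log-concave density $h_y$. To see this, let $f$ be the common density. The conditional density of $x$ given $y$ is proportional to $f(\tfrac{x+y}2)f(\tfrac{x-y}2)$, which is symmetric and log-concave in $x$. Convolving with the independent symmetric log-concave law of $S$ preserves both properties.

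So condition on $y$, not on $(x,S)$. Then $\Pr(U+\lambda y\in K\mid y)=\int_K h_y(w-\lambda y)\,dw$ is an integral over translates along the fixed direction $y$. By Anderson's theorem it is nonincreasing in $|\lambda|$. Integrating over $y$ and using $|b_j-b_k|=|2\theta-1|\,|a_j-a_k|$ completes the T-transform step.
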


This is the multivariate theorem of Olkin and Tong~\cite{OlkinTong1988}.
Proposition~8.6 of Saumard and Wellner~\cite{SaumardWellner2014} states the
same result with this ordering convention.  It extends the scalar
convex-combination result of Proschan~\cite{Proschan1965}.

\begin{proposition}
\label{prop:stationary-series}
The series
\begin{equation}
 x_\gamma=(1-\gamma)\sum_{j=0}^\infty\gamma^j p_j
 \label{eq:stationary-series}
\end{equation}
converges absolutely almost surely.  Its distribution is invariant under
orthogonal transformations, has support \(\B\), and is stationary for
Eq.\eqref{eq:recursion}.  It is the unique stationary distribution supported
on \(\B\) when \(0\leq\gamma<1\).
\end{proposition}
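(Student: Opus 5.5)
Absolute convergence comes first.  Since \(\norm{p_j}\leq1\) for every \(j\), the series in Eq.\eqref{eq:stationary-series} is dominated termwise by \((1-\gamma)\sum_j\gamma^j=1\) when \(0\leq\gamma<1\).  It therefore converges absolutely for every realization, not merely almost surely.  Its sum lies in \(\B\): it is a limit of partial sums \((1-\gamma)\sum_{j\leq n}\gamma^jp_j\), whose norm is at most \(1-\gamma^{n+1}\leq1\).  For \(\gamma=0\) everything is trivial (\(x_0=p_0\) is uniform), so we may assume \(0<\gamma<1\) where needed.  Orthogonal invariance holds because, for orthogonal \(Q\), the sequence \((Qp_j)\) is again i.i.d.\ uniform on \(\B\).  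Since \(Qx_\gamma\) is the same measurable function of that sequence, \(Qx_\gamma\) has the law of \(x_\gamma\).

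Stationarity is the standard backward-iteration argument.  Write \(x_\gamma=\gamma x'_\gamma+(1-\gamma)p_0\), where \(x'_\gamma=(1-\gamma)\sum_{j\geq0}\gamma^jp_{j+1}\).  The vector \(x'_\gamma\) has the same law as \(x_\gamma\) and is independent of \(p_0\).  Relabelling \(p_0\) as the fresh input shows that if \(x_{i-1}\) has the law of \(x_\gamma\) and \(p_i\) is independent uniform, then \(x_i=\gamma x_{i-1}+(1-\gamma)p_i\) has the same law.

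For uniqueness, couple two chains driven by the same inputs, started from arbitrary initial states \(y,y'\in\B\).  After \(t\) steps their difference is \(\gamma^t(y-y')\), of norm at most \(2\gamma^t\to0\).  So if \(\nu\) is any stationary law on \(\B\), start from \(y\sim\nu\) and compare with the forward iterate from \(y\).  The forward iterate from \(y\) equals \(\gamma^ty+(1-\gamma)\sum_{j=1}^t\gamma^{t-j}p_j\), and it has the same law as \(\gamma^ty+(1-\gamma)\sum_{j=0}^{t-1}\gamma^{j}p_j\) after reversing the order of the i.i.d.\ inputs.  That backward sum converges almost surely to \(x_\gamma\).  Hence the law at time \(t\), which is \(\nu\) for every \(t\), converges weakly to the law of \(x_\gamma\), so \(\nu\) equals that law.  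This argument even gives uniqueness among all laws supported on \(\B\) (indeed on any bounded set).

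The support claim is the only step needing care.  The inclusion \(\operatorname{supp}\subset\B\) was shown above.  For the reverse inclusion, take any \(y\) in the interior of \(\B\) and \(\varepsilon>0\).  Choose \(n\) with \(2\gamma^{n+1}<\varepsilon/2\).  With positive probability, every \(p_j\) for \(j\leq n\) lies within \(\varepsilon/2\) of \(y\); each such event has positive probability because the ball around an interior point meets \(\B\) in positive volume.  On that event the partial sum lies within \(\varepsilon/2\) of \((1-\gamma^{n+1})y\), and the tail has norm at most \(\gamma^{n+1}\).  So \(x_\gamma\) falls within roughly \(\varepsilon\) of \(y\) with positive probability (adjusting constants).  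Hence every interior point, and by closedness every point of \(\B\), lies in the support.

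The main obstacle is only bookkeeping: making the time reversal in the uniqueness argument explicit, and handling boundary points in the support argument.  Both are covered by the coupling bound \(2\gamma^t\) and by taking closures.
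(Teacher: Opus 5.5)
Your proposal is correct, and most of it coincides with the paper's proof. The shared steps are:
\begin{itemize}
\item the domination by $(1-\gamma)\sum_j\gamma^j=1$;
\item orthogonal invariance via the law of $(Qp_j)$;
\item the shift identity for stationarity (the paper prepends a fresh point $p$, you split off $p_0$, which is the same identity);
\item the support argument with $2\gamma^{n+1}<\varepsilon/2$, followed by taking closures.
\end{itemize}
Your ``adjusting constants'' is not needed: the distance is at most $\varepsilon/2+2\gamma^{n+1}<\varepsilon$.

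The one genuine difference is uniqueness. The paper couples two stationary laws $x_0,y_0$ through the same inputs and tests against bounded Lipschitz $f$. This gives $|\E f(x_0)-\E f(y_0)|=|\E f(x_t)-\E f(y_t)|\leq 2\operatorname{Lip}(f)\gamma^t$. You instead use backward iteration. Reversing the i.i.d.\ inputs, the time-$t$ law started from $\nu$ equals the law of $\gamma^ty+(1-\gamma)\sum_{j<t}\gamma^jp_j$, which converges almost surely to $x_\gamma$.

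Both arguments rest on the same geometric contraction. Your route identifies the limit directly with $x_\gamma$. It also gives weak convergence to the stationary law from an arbitrary initial distribution, and boundedness of that law is not needed, since $\gamma^ty\to0$ pointwise. The paper's route avoids the time-reversal step and compares the two stationary laws directly, at the cost of using the fact that bounded Lipschitz functions determine a law. The coupling bound $2\gamma^t$ you state at the start of your uniqueness paragraph is harmless but is not actually used in your backward argument.
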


\begin{proof}
Since \(\norm{p_j}\leq1\),
\[
 (1-\gamma)\sum_{j=0}^\infty\gamma^j\norm{p_j}\leq1,
\]
 which proves absolute convergence.  If \(Q\) is an orthogonal
 \(d\times d\) matrix, then
\((Qp_j)_{j\geq0}\) has the same joint distribution as
\((p_j)_{j\geq0}\), which proves rotational invariance.
 The triangle inequality gives \(\norm{x_\gamma}\leq1\).

 If \(\gamma=0\), then \(x_\gamma=p_0\), so the support assertion is immediate.
 Assume \(0<\gamma<1\) for the following support argument.
 Let \(z\) be in the interior of \(\B\), and let \(\varepsilon>0\).
Choose \(N\) so that \(2\gamma^{N+1}<\varepsilon/2\).  With positive
probability, each of \(p_0,\ldots,p_N\) belongs to a sufficiently small ball
about \(z\).  On this event, the corresponding finite weighted sum is within
\(\varepsilon/2\) of \((1-\gamma^{N+1})z\), and the remaining tail has norm at
most \(\gamma^{N+1}\).  Hence \(x_\gamma\) is within \(\varepsilon\) of \(z\)
with positive probability.  Taking closures gives support \(\B\).

Finally, let \(p\) be an additional uniform point, independent of
\((p_j)_{j\geq0}\).  The sequence \(p,p_0,p_1,\ldots\) is again independent
and uniformly distributed in \(\B\), and
\[
 \gamma x_\gamma+(1-\gamma)p
 =(1-\gamma)\left(p+\sum_{j=1}^\infty\gamma^jp_{j-1}\right).
\]
Thus the updated state has the distribution defined by
Eq.\eqref{eq:stationary-series}, which proves stationarity.

For uniqueness, let two stationary states supported on \(\B\) be coupled as
\(x_0\) and \(y_0\) and driven by the same independent input sequence.
Subtracting the two recursions and iterating gives
\[
 \norm{x_t-y_t}=\gamma^t\norm{x_0-y_0}\leq2\gamma^t.
\]
For every bounded Lipschitz function \(f\), stationarity therefore gives
\[
 |\E f(x_0)-\E f(y_0)|
 =|\E f(x_t)-\E f(y_t)|
 \leq2\operatorname{Lip}(f)\gamma^t.
\]
Letting \(t\) tend to infinity shows that the two laws agree.
\end{proof}

\begin{theorem}
\label{thm:stochastic-monotonicity}
Let \(p\) be a uniform point in \(\B\), independent of the stationary state,
and put
\[
 Z_\gamma=\norm{p-x_\gamma}.
\]
If \(0\leq\gamma_1<\gamma_2<1\), then
\begin{equation}
 \Pr(Z_{\gamma_2}>r)\leq\Pr(Z_{\gamma_1}>r)
 \quad\hbox{for every }r\geq0.
 \label{eq:stochastic-monotonicity}
\end{equation}
Consequently, for every \(\alpha>0\),
\begin{equation}
 \E Z_{\gamma_2}^\alpha
 \leq \E Z_{\gamma_1}^\alpha,
 \qquad
 \lim_{\gamma\to1^-}\E Z_\gamma^\alpha
 =c_{d,\alpha}.
 \label{eq:all-moment-monotonicity}
\end{equation}
\end{theorem}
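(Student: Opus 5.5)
The approach is to write \(p-x_\gamma\) as a single nonnegatively weighted sum of independent uniform points and then apply Lemma~\ref{lem:olkin-tong}. By symmetry of the uniform law, \(-p_j\) has the same distribution as \(p_j\), and \(p\) is independent of \((p_j)\). Hence, by Proposition~\ref{prop:stationary-series},
\[
 p-x_\gamma \stackrel{d}{=} p+(1-\gamma)\sum_{j=0}^\infty\gamma^jp_j .
\]
This is a weighted sum of independent uniform points with weight vector \(w(\gamma)=(1,(1-\gamma),(1-\gamma)\gamma,(1-\gamma)\gamma^2,\ldots)\). Its total is \(2\) for every \(\gamma\). The uniform density on \(\B\) is symmetric and log-concave, since the indicator of a convex set is log-concave. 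The event \(\{Z_\gamma\leq r\}\) is the event that the sum lies in the compact symmetric convex set \(B(O,r)\). So it suffices to show that \(w(\gamma_1)\) majorizes \(w(\gamma_2)\), up to a truncation.

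Key steps, in order. First, truncate to \(t+1\) terms: let \(S_t(\gamma)=p+(1-\gamma)\sum_{j=0}^{t-1}\gamma^jp_j+\gamma^tp_t\). This is exactly \(p-x_t\) for the finite recursion started at \(x_0=p_t\) (after relabelling and sign symmetry). Its weight vector \(a(\gamma)\) has \(t+2\) entries summing to \(2\), and \(S_t(\gamma)\to p-x_\gamma\) in distribution, even almost surely in the coupled series form, since the tails are bounded by \(2\gamma^t\).

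Second, verify majorization of the truncated vectors. For \(\gamma\) close to the endpoints this needs care, because \(\gamma^t\) may exceed \((1-\gamma)\). The partial sums of the decreasingly sorted entries of \(a(\gamma)\) are: \(1\) for the largest entry; then, among the remaining entries \((1-\gamma)\gamma^j\) and \(\gamma^t\), the sum of the \(k\) largest. The cleanest route is to avoid the awkward last entry. Instead I will compare the vector \((1,(1-\gamma)\gamma^j)_{0\le j<t}\) together with the remainder mass \(\gamma^t\) split into many tiny equal pieces (realized by a long further tail). Equivalently, I will work directly with the infinite series and apply the lemma to vectors where the tail mass is spread out. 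For a decreasing geometric sequence \((1-\gamma)\gamma^j\) with total \(1\), the partial sums are \(1-\gamma^k\), which decrease in \(\gamma\). So the smaller \(\gamma\) majorizes the larger, and the leading entry \(1\) is common to both and is the largest. Concretely, I will truncate both series at a large common length \(n\) and replace the tail mass \(\gamma^n\) by \(m\) equal entries \(\gamma^n/m\) with \(m\) huge. The resulting sum differs from the true one in distribution by a vanishing amount, because a sum of many tiny independent terms has small norm, and this sum is still a legitimate weighted sum of independent uniform points. Majorization then holds entrywise in partial sums: \(1\), then \(1+1-\gamma^k\), then a linear ramp in the tiny block. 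Applying Lemma~\ref{lem:olkin-tong} gives
\[
 \Pr(\lVert S\rVert\leq r)\text{ larger for }\gamma_2 .
\]

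Third, pass to the limit. Peakedness on closed balls survives weak convergence at continuity radii, and \(Z_\gamma\) has a continuous distribution because \(p\) has a density, so \eqref{eq:stochastic-monotonicity} follows. Moment monotonicity then comes from the layer-cake formula. For the limit \(\gamma\to1^-\), note that \(\E\norm{x_\gamma}^2=s_d(1-\gamma)/(1+\gamma)\to0\), so \(x_\gamma\to O\) in probability. Then \(\E Z_\gamma^\alpha=\E g_{d,\alpha}(\norm{x_\gamma})\to g_{d,\alpha}(0)=c_{d,\alpha}\) by the upper envelope \eqref{eq:best-quadratic-envelope}, which gives \(0\leq\E Z^\alpha_\gamma-c_{d,\alpha}\leq k^+_{d,\alpha}s_d(1-\gamma)/(1+\gamma)\).

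The main obstacle is the second step: handling the infinite and truncated weight vectors so that both share the same sum and the same length, and showing that the approximation error vanishes uniformly enough to preserve the inequality for every \(r\).
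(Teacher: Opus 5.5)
Your argument is correct, but it applies Lemma~\ref{lem:olkin-tong} to a different random vector than the paper does.

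The paper works with the state alone. It normalizes the truncated coefficient vectors $\bigl((1-\gamma)\gamma^j/(1-\gamma^{n+1})\bigr)_{0\le j\le n}$ to a common sum and removes the normalization through the inclusion $K/\lambda_{1,n}\subset K/\lambda_{2,n}$. Passing to the limit, it shows that $x_{\gamma_2}$ is more peaked than $x_{\gamma_1}$. Only then does it transfer this to $Z_\gamma$, using Lemma~\ref{lem:translated-ball-overlap}: the kernel $z\mapsto\operatorname{vol}(\B\cap B(z,r))$ is radial and nonincreasing, its superlevel sets are centered balls, and a layer-cake integration finishes the argument.

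You instead use sign symmetry to put the new point $p$ into the sum as a common largest weight $1$. The insertion vector is then itself a nonnegative combination of i.i.d.\ uniform points with total weight $2$, and $\{Z_\gamma\le r\}$ is just membership in the symmetric convex set $B(O,r)$. Your splitting of the tail mass into $m$ equal pieces equalizes sums and lengths without any rescaling. Any $m\ge\gamma_2/(1-\gamma_2)$ keeps the natural order sorted for both vectors. The approximation error is bounded deterministically by $2\gamma^n$, so it is the limit $n\to\infty$, not $m$, that makes it vanish.

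Your route is shorter and dispenses with the overlap lemma. It also yields a stronger fact: $p-x_{\gamma_2}$ is more peaked than $p-x_{\gamma_1}$ for every compact symmetric convex set, not only balls, since the limit laws are absolutely continuous. The paper's route instead gives the peakedness of the state $x_\gamma$ itself. It reuses this in Appendix~\ref{app:stationary-limits} to describe how the state concentrates about $O$.

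For the limit $\gamma\to1^-$, your use of the envelope \eqref{eq:best-quadratic-envelope} gives the explicit rate $k^+_{d,\alpha}s_d(1-\gamma)/(1+\gamma)$. The paper uses convergence in probability and bounded convergence instead. Both are valid.
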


\begin{proof}
For \(n\geq0\), consider the normalized coefficient vector
\[
 w^{(n)}(\gamma)
 =\left(\frac{(1-\gamma)\gamma^j}{1-\gamma^{n+1}}
   \right)_{0\leq j\leq n}.
\]
Its first \(m+1\) entries have sum
\[
 \frac{1-\gamma^{m+1}}{1-\gamma^{n+1}},
 \qquad 0\leq m<n,
\]
which is strictly decreasing in \(\gamma\).  Indeed, if
\(S_k(\gamma)=\sum_{j=0}^k\gamma^j\), then, for \(0<\gamma<1\),
\[
 \frac{d}{d\gamma}\frac{S_m(\gamma)}{S_n(\gamma)}
 =\frac{\displaystyle
   \sum_{i=0}^m\sum_{j=m+1}^n(i-j)\gamma^{i+j-1}}
  {S_n(\gamma)^2}<0.
\]
Continuity supplies the comparison at \(\gamma=0\).  Hence
\(w^{(n)}(\gamma_1)\) majorizes \(w^{(n)}(\gamma_2)\).  The uniform density on
\(\B\) is symmetric and log-concave in the extended-value sense.
Lemma~\ref{lem:olkin-tong} therefore shows that the sum with coefficients
\(w^{(n)}(\gamma_2)\) is more peaked than the sum with coefficients
\(w^{(n)}(\gamma_1)\).

The normalizing factors also preserve the required comparison.  Put
\[
 y_{i,n}=\sum_{j=0}^n w_j^{(n)}(\gamma_i)p_j,
 \qquad \lambda_{i,n}=1-\gamma_i^{n+1},
 \qquad i\in\{1,2\}.
\]
Then \(0<\lambda_{2,n}<\lambda_{1,n}\).  If \(K\) is compact, convex, and
symmetric about the origin, then
\(K/\lambda_{1,n}\subset K/\lambda_{2,n}\), and hence
\begin{align*}
 \Pr(\lambda_{2,n}y_{2,n}\in K)
 &=\Pr(y_{2,n}\in K/\lambda_{2,n})\\
 &\geq\Pr(y_{1,n}\in K/\lambda_{2,n})\\
 &\geq\Pr(y_{1,n}\in K/\lambda_{1,n})
 =\Pr(\lambda_{1,n}y_{1,n}\in K).
\end{align*}
Thus the unnormalized truncated sum for \(\gamma_2\) is more peaked than the
one for \(\gamma_1\).  Letting \(n\to\infty\) gives \(x_\gamma\) almost surely.
Each finite sum has an absolutely continuous law.  The limiting law is also absolutely continuous,
because its series contains the independent term \((1-\gamma)p_0\).
Boundaries of convex sets with nonempty interior have Lebesgue measure zero;
lower-dimensional convex sets have probability zero under all these laws.
Thus every compact symmetric convex set needed in the definition is a
continuity set.  The peakedness inequalities
therefore pass to the limit, and \(x_{\gamma_2}\) is more peaked than
\(x_{\gamma_1}\).

In particular,
\[
 \Pr(\norm{x_{\gamma_2}}\leq r)
 \geq\Pr(\norm{x_{\gamma_1}}\leq r)
 \quad(r\geq0).
\]
For fixed \(r\geq0\), the function
\[
 z\longmapsto\Pr(\norm{p-z}\leq r)
 =\frac{\operatorname{vol}(\B\cap B(z,r))}{\operatorname{vol}(\B)}
\]
 depends only on \(\norm z\) and is nonincreasing in that radius by
 Lemma~\ref{lem:translated-ball-overlap}.  Its superlevel sets are balls
 centered at the origin.  Every open superlevel ball is an increasing union of
 compact centered balls, so the peakedness comparison extends to it by
 continuity from below.  The layer-cake representation and the peakedness
 order therefore give
\begin{align*}
 \Pr(Z_{\gamma_2}\leq r)
 &=\E\frac{\operatorname{vol}(\B\cap B(x_{\gamma_2},r))}
                 {\operatorname{vol}(\B)}\\
 &\geq\E\frac{\operatorname{vol}(\B\cap B(x_{\gamma_1},r))}
                 {\operatorname{vol}(\B)}
 =\Pr(Z_{\gamma_1}\leq r).
\end{align*}
This is equivalent to Eq.\eqref{eq:stochastic-monotonicity}.  The moment inequality is
immediate.  Finally, Eq.\eqref{eq:x-second} below gives
\(x_\gamma\to O\) in probability as \(\gamma\to1^-\), and hence
\(Z_\gamma\to\norm p\) in probability.  The variables \(Z_\gamma^\alpha\)
are bounded by \(2^\alpha\) and therefore uniformly integrable.  Their
expectations consequently converge to
\(\E\norm p^\alpha=c_{d,\alpha}\), proving the limit in
Eq.\eqref{eq:all-moment-monotonicity}.
\end{proof}

\begin{proposition}
\label{prop:stationary-moment-bounds}
The stationary state satisfies
\begin{equation}
 \E\norm{x_\gamma}^2
 =s_d\frac{1-\gamma}{1+\gamma}.
 \label{eq:x-second}
\end{equation}
If \(p\) is an additional independent uniform point and
\(Z_\gamma=\norm{p-x_\gamma}\), then, for every \(\alpha>0\),
\begin{equation}
 c_{d,\alpha}
 +k^-_{d,\alpha}s_d\frac{1-\gamma}{1+\gamma}
 \leq\E Z_\gamma^\alpha\leq
 c_{d,\alpha}
 +k^+_{d,\alpha}s_d\frac{1-\gamma}{1+\gamma}.
 \label{eq:stationary-moment-envelope}
\end{equation}
For \(\alpha=2\), this becomes
\begin{equation}
 \E Z_\gamma^2
 =\frac{2d}{(d+2)(1+\gamma)}.
 \label{eq:exact-quadratic-stationary}
\end{equation}
\end{proposition}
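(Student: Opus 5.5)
The argument combines the series representation from Proposition~\ref{prop:stationary-series} with the conditional quadratic envelope Eq.\eqref{eq:best-quadratic-envelope}, exactly as in the finite-sample analysis, but with the stationary state in place of \(x_t\).

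\textbf{Step 1 (second moment).} Start from \(x_\gamma=(1-\gamma)\sum_{j\geq0}\gamma^jp_j\). Because the series converges absolutely and is bounded by one, \(\E\norm{x_\gamma}^2\) may be computed by expanding the square and passing the expectation inside; dominated convergence justifies this via the bound \(\norm{x_\gamma}\leq1\) and the \(L^2\) convergence of the partial sums. The \(p_j\) are independent and centered, so the cross terms vanish and each diagonal term contributes \(s_d=\E\norm{p_j}^2\). This gives
\[
 \E\norm{x_\gamma}^2=s_d(1-\gamma)^2\sum_{j\geq0}\gamma^{2j}
 =s_d\frac{(1-\gamma)^2}{1-\gamma^2}=s_d\frac{1-\gamma}{1+\gamma},
\]
which is Eq.\eqref{eq:x-second}. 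As a check, it is also the limit \(t\to\infty\) of Lemma~\ref{lem:finite-state-second}, consistent with the fact that the finite-time state converges in law to the stationary one.

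\textbf{Step 2 (moment envelope).} The point \(p\) is independent of \(x_\gamma\), and \(x_\gamma\in\B\) almost surely. Conditioning on \(x_\gamma\) and applying Proposition~\ref{prop:conditional-kernel} gives \(\E[Z_\gamma^\alpha\mid x_\gamma]=g_{d,\alpha}(\norm{x_\gamma})\). The quadratic envelope Eq.\eqref{eq:best-quadratic-envelope} sandwiches this quantity between \(c_{d,\alpha}+k^\mp_{d,\alpha}\norm{x_\gamma}^2\). Taking expectations and inserting Step 1 yields Eq.\eqref{eq:stationary-moment-envelope}. The only technical point is measurability of the conditional expectation, and Fubini's theorem handles it because all quantities are nonnegative and bounded by \(2^\alpha\).

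\textbf{Step 3 (\(\alpha=2\)).} Here \(g_{d,2}(r)=s_d+r^2\) exactly, since the cross term \(-2\langle p,x\rangle\) has mean zero. Equivalently, \(k^-_{d,2}=k^+_{d,2}=1\), because \(q_{d,2}\equiv1\) by Proposition~\ref{prop:radial-quotient}. Then
\[
 \E Z_\gamma^2=s_d\left(1+\frac{1-\gamma}{1+\gamma}\right)=\frac{2s_d}{1+\gamma}=\frac{2d}{(d+2)(1+\gamma)}.
\]
No step is a real obstacle. The only point needing care is the interchange of expectation and the infinite sum in Step 1, and the uniform bound on the partial sums settles it.
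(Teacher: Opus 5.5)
Your proposal is correct and follows essentially the same route as the paper. Both compute the second moment from the series by independence and centering with an \(L^2\) passage to the limit, condition on \(x_\gamma\) and apply the quadratic envelope Eq.\eqref{eq:best-quadratic-envelope} for the moment bounds, and use \(k^-_{d,2}=k^+_{d,2}=1\) for the exact quadratic case.
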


\begin{proof}
Apply the finite-sum identity for independent centered vectors to the first
\(n+1\) terms of Eq.\eqref{eq:stationary-series}.  Since the remaining tail has
norm at most \(\gamma^{n+1}\), passage in \(L^2\) gives
\[
 \E\norm{x_\gamma}^2
 =s_d(1-\gamma)^2\sum_{j\geq0}\gamma^{2j}
 =s_d\frac{1-\gamma}{1+\gamma}.
\]
Conditional on \(x_\gamma\), rotational invariance and
Eq.\eqref{eq:radial-mean-cost} give
\[
 \E\bigl[Z_\gamma^\alpha\mid x_\gamma\bigr]
 =g_{d,\alpha}(\norm{x_\gamma}).
\]
Apply Eq.\eqref{eq:best-quadratic-envelope}, take expectations, and use
Eq.\eqref{eq:x-second}.  This proves
Eq.\eqref{eq:stationary-moment-envelope}.  When \(\alpha=2\), both quadratic
envelope constants equal one, which gives
Eq.\eqref{eq:exact-quadratic-stationary}.
\end{proof}

Consequently, as \(\gamma\) tends to one, \(Z_\gamma\) converges in
distribution to the radius \(\rho=\norm p\), whose distribution function is
\(\Pr(\rho\leq r)=r^d\) on \([0,1]\).  The difference between the expected
insertion cost and \(d/(d+\alpha)\), the cost of the center-\(O\) star, is
bounded above and below by explicit multiples of
\((1-\gamma)/(1+\gamma)\).  Appendices~\ref{app:stationary-transforms} and
\ref{app:stationary-limits} give exact transforms, fourth moments, a Gaussian
limit for the rescaled state, and a Wasserstein bound for the insertion length.

Figure~\ref{fig:radial-laws} displays the radial law for dimensions from one to
ten and its concentration near the boundary.  The rescaling
\(Y_d=d(1-\rho)\) gives the nondegenerate limit
\[
 \Pr(Y_d\leq y)
 =1-\left(1-\frac{y}{d}\right)^d
 \longrightarrow 1-e^{-y}
\]
for \(0\leq y\leq d\), and hence for every fixed \(y\geq0\) as \(d\) tends
to infinity.

\begin{figure}[!ht]
\centering
\includegraphics[width=0.98\textwidth]{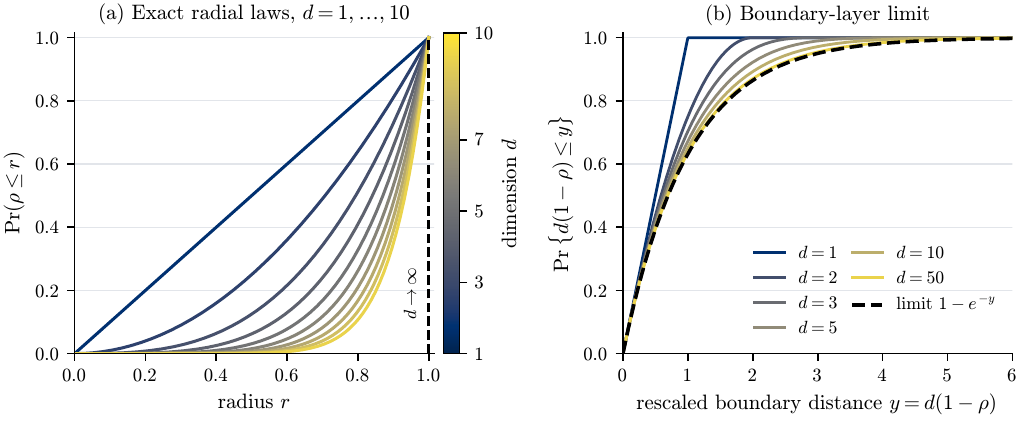}
\caption{Distribution of the distance from a uniform point in the unit ball to
its center.  Panel (a) superposes the exact distribution functions \(r^d\) for
\(d=1,\ldots,10\); the dashed vertical line is the degenerate limit at
\(r=1\).  Panel (b) shows the distribution functions of \(d(1-\rho)\) and
their limiting exponential distribution.}
\label{fig:radial-laws}
\end{figure}

\section{Adversarial sequences}
\label{sec:adversarial}

For \(0\leq\gamma<1\), define
\[
 M_{n,\alpha}^{(d)}(\gamma;x_0)
 =\sup_{p_1,\ldots,p_n\in\B}
   \sum_{i=1}^n\norm{p_i-x_{i-1}}^\alpha
\]
and
\begin{equation}
 \A_{d,\alpha}(\gamma)
 =\limsup_{n\to\infty}\frac1n
 M_{n,\alpha}^{(d)}(\gamma;x_0).
 \label{eq:adversarial-value}
\end{equation}
For each \(n\), the supremum over the first \(n\) input points is taken before
the limsup as \(n\) tends to infinity.  Eq.\eqref{eq:weighted-adversarial-value}
uses the same order of operations.

The following two estimates control the effect of the initial
state in this section.

\begin{lemma}
\label{lem:contraction}
Let \((x_i)\) and \((y_i)\) satisfy Eq.\eqref{eq:recursion} for the same input
sequence, with respective initial states \(x_0,y_0\in\B\).  Then
\[
 x_i-y_i=\gamma^i(x_0-y_0)
 \quad\hbox{and}\quad
 \norm{x_i-y_i}=\gamma^i\norm{x_0-y_0}.
\]
\end{lemma}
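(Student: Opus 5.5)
The proof is a direct induction on \(i\), using only that Eq.\eqref{eq:recursion} is affine in the state with the same input term for both sequences. Subtracting the two recursions gives
\[
 x_i-y_i=\bigl(\gamma x_{i-1}+(1-\gamma)p_i\bigr)-\bigl(\gamma y_{i-1}+(1-\gamma)p_i\bigr)=\gamma(x_{i-1}-y_{i-1}),
\]
because the common term \((1-\gamma)p_i\) cancels exactly.

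The base case \(i=0\) is the identity \(x_0-y_0=\gamma^0(x_0-y_0)\). If \(x_{i-1}-y_{i-1}=\gamma^{i-1}(x_0-y_0)\), the displayed relation gives \(x_i-y_i=\gamma^i(x_0-y_0)\). This proves the vector identity for every \(i\geq0\) and every input sequence, with no probabilistic assumption.

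The norm identity follows by taking norms, since \(\gamma^i\geq0\) gives \(\norm{\gamma^i(x_0-y_0)}=\gamma^i\norm{x_0-y_0}\). The case \(\gamma=0\) needs the convention \(0^0=1\): then \(x_i=y_i=p_i\) for \(i\geq1\), which agrees with the formula. The hypothesis \(x_0,y_0\in\B\) is not used in the identity itself. It only guarantees, through Proposition~\ref{prop:geometry}, that both trajectories stay in \(\B\), and hence that \(\norm{x_i-y_i}\leq2\gamma^i\) as used later.

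There is no real obstacle here. The only point needing care is stating the convention at \(\gamma=0\), \(i=0\).
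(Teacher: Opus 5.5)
Your proof is correct and is the paper's argument: subtracting the two recursions cancels the common term \((1-\gamma)p_i\), leaving \(x_i-y_i=\gamma(x_{i-1}-y_{i-1})\), and iterating and then taking norms gives both identities. One minor point: the later bound \(\norm{x_i-y_i}\leq2\gamma^i\) follows from the identity together with \(\norm{x_0-y_0}\leq2\), not from both trajectories staying in \(\B\).
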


\begin{proof}
Subtracting the two recursions gives
\(x_i-y_i=\gamma(x_{i-1}-y_{i-1})\).  Iteration proves both identities.
\end{proof}

\begin{lemma}
\label{lem:power-difference}
For \(a,b\in[0,2]\),
\[
 |a^\alpha-b^\alpha|
 \leq
 \begin{cases}
  |a-b|^\alpha, & 0<\alpha\leq1,\\
  \alpha2^{\alpha-1}|a-b|, & \alpha\geq1.
 \end{cases}
\]
\end{lemma}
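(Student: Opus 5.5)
Both inequalities reduce to elementary one-variable facts about \(t\mapsto t^\alpha\) on \([0,2]\). By symmetry we may assume \(a\geq b\), so that \(|a^\alpha-b^\alpha|=a^\alpha-b^\alpha\) and \(|a-b|=a-b\). The two regimes of \(\alpha\) are handled separately, using subadditivity in the concave range and the mean value theorem in the convex range.

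For \(0<\alpha\leq1\), we use subadditivity of \(t\mapsto t^\alpha\) on \([0,\infty)\), namely \((u+v)^\alpha\leq u^\alpha+v^\alpha\) for \(u,v\geq0\). To prove it, note that it is trivial if \(u+v=0\). Otherwise, put \(s=u/(u+v)\in[0,1]\). Since \(\alpha\leq1\), we have \(s^\alpha\geq s\) and \((1-s)^\alpha\geq 1-s\), so \(s^\alpha+(1-s)^\alpha\geq1\). Multiplying by \((u+v)^\alpha\) gives the claim. Applying this with \(u=b\) and \(v=a-b\) gives \(a^\alpha\leq b^\alpha+(a-b)^\alpha\), which is the first case. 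This case does not use the restriction to \([0,2]\).

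For \(\alpha\geq1\), the function \(t\mapsto t^\alpha\) is continuously differentiable on \([0,2]\) with derivative \(\alpha t^{\alpha-1}\). The mean value theorem gives some \(\xi\in[b,a]\subset[0,2]\) with \(a^\alpha-b^\alpha=\alpha\xi^{\alpha-1}(a-b)\). Since \(\alpha-1\geq0\), we have \(\xi^{\alpha-1}\leq2^{\alpha-1}\), which yields the bound \(\alpha2^{\alpha-1}|a-b|\). When \(a=b\) both sides vanish, and at \(\alpha=1\) the two cases agree.

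None of the steps is a real obstacle. The only point needing care is the case split. In the concave range, the Lipschitz argument fails near zero because the derivative blows up, which is why the Hölder-type bound is used there. In the convex range, the bound \([0,2]\) on the arguments is exactly what controls the derivative. This matches its intended use later with insertion lengths \(\norm{p_i-x_{i-1}}\leq2\) and Lemma~\ref{lem:contraction}.
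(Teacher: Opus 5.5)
Your proposal is correct and follows the paper's own proof: subadditivity of \(t^\alpha\) for \(0<\alpha\leq1\) (you also prove subadditivity, which the paper takes as known), and the mean value theorem on \([0,2]\) for \(\alpha\geq1\). Reducing to \(a\geq b\) by symmetry is equivalent to the paper's step of interchanging \(a\) and \(b\).
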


\begin{proof}
For \(0<\alpha\leq1\), subadditivity of \(t^\alpha\) gives
\(a^\alpha\leq b^\alpha+|a-b|^\alpha\), and the same argument with \(a,b\)
interchanged proves the claim.  For \(\alpha\geq1\), the result follows from
the mean value theorem on \([0,2]\).
\end{proof}

\begin{proposition}
\label{prop:initial-state}
For \(0\leq\gamma<1\), the value in Eq.\eqref{eq:adversarial-value} is independent of
\(x_0\in\B\).
\end{proposition}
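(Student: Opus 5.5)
The plan is a direct coupling argument based on Lemma~\ref{lem:contraction} and Lemma~\ref{lem:power-difference}. Fix two initial states \(x_0,y_0\in\B\) and an arbitrary input sequence \(p_1,\ldots,p_n\in\B\). Let \((x_i)\) and \((y_i)\) be the two state sequences driven by this same input. Every state lies in \(\B\) by Proposition~\ref{prop:geometry}, since the recursion is a convex combination. Hence every insertion length \(\norm{p_i-x_{i-1}}\) and \(\norm{p_i-y_{i-1}}\) lies in \([0,2]\), and Lemma~\ref{lem:power-difference} applies to each pair of lengths.

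By the triangle inequality and Lemma~\ref{lem:contraction},
\[
 \bigl|\norm{p_i-x_{i-1}}-\norm{p_i-y_{i-1}}\bigr|
 \leq\norm{x_{i-1}-y_{i-1}}
 =\gamma^{i-1}\norm{x_0-y_0}\leq2\gamma^{i-1}.
\]
Lemma~\ref{lem:power-difference} then bounds the difference of the \(\alpha\)-powers by \((2\gamma^{i-1})^\alpha\) when \(0<\alpha\leq1\), and by \(\alpha2^{\alpha}\gamma^{i-1}\) when \(\alpha\geq1\). Summing over \(i\), both bounds form convergent geometric series because \(\gamma<1\). This gives a constant \(C(\alpha,\gamma)\), independent of \(n\) and of the input, such that the two costs differ by at most \(C(\alpha,\gamma)\) for every sequence.

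The bound is uniform over inputs, so it passes through the supremum:
\[
 \bigl|M_{n,\alpha}^{(d)}(\gamma;x_0)-M_{n,\alpha}^{(d)}(\gamma;y_0)\bigr|\leq C(\alpha,\gamma).
\]
Dividing by \(n\) and taking the limsup then shows that the two values in Eq.\eqref{eq:adversarial-value} coincide. No step is a serious obstacle. The only points to check are that the difference bound holds sequence by sequence before the supremum is taken, and that the case \(\gamma=0\) is included: there only the \(i=1\) term survives, using the convention \(\gamma^0=1\), and the difference is bounded by \(2^\alpha\).
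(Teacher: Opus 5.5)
Your proposal is correct and follows essentially the same route as the paper. The paper's proof also drives both states with the same input, bounds the per-step cost differences with Lemma~\ref{lem:contraction} and Lemma~\ref{lem:power-difference}, sums the geometric series to an input-independent constant, and passes that bound through the supremum before dividing by \(n\).
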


\begin{proof}
Drive two states by the same input sequence.  Lemma~\ref{lem:contraction}
gives their distance at every step, and Lemma~\ref{lem:power-difference}
bounds the difference between the corresponding insertion costs by
\[
 \begin{cases}
  \gamma^{\alpha(i-1)}\norm{x_0-y_0}^\alpha,
       &0<\alpha\leq1,\\
  \alpha2^{\alpha-1}\gamma^{i-1}\norm{x_0-y_0},
       &\alpha\geq1.
 \end{cases}
\]
In either case, summation gives a constant \(D\), independent of \(n\) and of
the input sequence, such that the two accumulated costs differ by at most
\(D\).  Taking the supremum over the same set of length-\(n\) input sequences
in both directions gives
\[
 |M_{n,\alpha}^{(d)}(\gamma;x_0)
   -M_{n,\alpha}^{(d)}(\gamma;y_0)|\leq D.
\]
After division by \(n\), the bound tends to zero.
\end{proof}

\begin{proposition}
\label{prop:alternation}
For \(0\leq\gamma<1\) and every \(\alpha>0\),
\[
 \A_{d,\alpha}(\gamma)
 \geq\left(\frac2{1+\gamma}\right)^\alpha.
\]
\end{proposition}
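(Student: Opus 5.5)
The plan is to exhibit one explicit input sequence whose mean insertion cost converges to \((2/(1+\gamma))^\alpha\). The natural choice, already named in the introduction, is alternation between the endpoints of a diameter. Proposition~\ref{prop:initial-state} shows that \(\A_{d,\alpha}(\gamma)\) does not depend on \(x_0\), so we may take \(x_0\) as convenient, for instance \(x_0=O\). This requires only \(O\in\B\) and a fixed initial state, not an input point.

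Fix a unit vector \(u\) and set \(p_i=(-1)^iu\) for \(i\geq1\). Every state then lies on the line \(\R u\), so we write \(x_i=\xi_iu\) with the scalar recursion
\[
 \xi_i=\gamma\xi_{i-1}+(1-\gamma)(-1)^i .
\]
We compare this with the periodic solution \(\zeta_i=(-1)^i\frac{1-\gamma}{1+\gamma}\). A direct check shows that it satisfies the same recursion, since
\[
 -\gamma\frac{1-\gamma}{1+\gamma}+(1-\gamma)=\frac{1-\gamma}{1+\gamma}.
\]
Along the periodic orbit, \(\zeta_{i-1}\) has the sign opposite to \(p_i\), so the insertion length is exactly
\[
 1+\frac{1-\gamma}{1+\gamma}=\frac{2}{1+\gamma}.
\]

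Lemma~\ref{lem:contraction} gives \(|\xi_i-\zeta_i|=\gamma^i|\xi_0-\zeta_0|\leq2\gamma^i\). Lemma~\ref{lem:power-difference} then bounds the difference between the actual insertion cost and \((2/(1+\gamma))^\alpha\) by a geometrically summable quantity. All lengths lie in \([0,2]\), as that lemma requires. Alternatively, one can simply start at \(x_0=\zeta_0\), which lies in \(\B\); the cost is then exactly \(n(2/(1+\gamma))^\alpha\), and Proposition~\ref{prop:initial-state} transfers the bound to any initial state.

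Since \(M_{n,\alpha}^{(d)}\) is a supremum, it is at least the cost of this particular sequence. Therefore \(\frac1nM_{n,\alpha}^{(d)}\geq(2/(1+\gamma))^\alpha-D/n\), and taking the limsup gives the claim. There is no real obstacle here. The only point that needs care is the order of operations in the definition: the single fixed sequence must serve every \(n\) simultaneously, and its prefixes do exactly that.
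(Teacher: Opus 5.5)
Your proof is correct and takes essentially the same approach as the paper. Both alternate the endpoints of a diameter, identify the attracting two-periodic orbit with insertion length \(2/(1+\gamma)\), and bound the transient by Lemma~\ref{lem:contraction} and Lemma~\ref{lem:power-difference}. Your explicit check of the periodic solution and the optional shortcut (start at \(\zeta_0u\), then invoke Proposition~\ref{prop:initial-state}) are valid minor variations.
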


\begin{proof}
Fix a unit vector \(u\) and take \(p_i=(-1)^{i-1}u\).  On the attracting
two-periodic orbit, the state before the insertion of \(u\) is
\(-((1-\gamma)/(1+\gamma))u\).  After the insertion it is
\(((1-\gamma)/(1+\gamma))u\).  Thus every insertion on the
periodic orbit has length
\[
 1+\frac{1-\gamma}{1+\gamma}=\frac2{1+\gamma}.
\]
Lemma~\ref{lem:contraction}, followed by Lemma~\ref{lem:power-difference},
shows that the total difference between the transient insertion costs and
those on the attracting orbit is bounded.  Its contribution to the mean cost
therefore vanishes.
\end{proof}

Set
\[
 A_\gamma=\frac2{1+\gamma}.
\]

The next potential inequality proves the matching upper bound for
\(\alpha=2\).  It also provides the endpoint used for \(0<\alpha<2\).

\begin{lemma}
\label{lem:quadratic}
Let \(0\leq\gamma<1\).  If \(y=\gamma x+(1-\gamma)p\), with
\(x,p\in\B\), then
\begin{equation}
 \norm{p-x}^2
 +\frac2{1-\gamma^2}\bigl(\norm y^2-\norm x^2\bigr)
 \leq A_\gamma^2.
 \label{eq:quadratic-potential}
\end{equation}
\end{lemma}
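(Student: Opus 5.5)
The plan is to reduce Eq.\eqref{eq:quadratic-potential} to an elementary maximization of a quadratic function of \(x\) and \(p\). I will first eliminate \(y\) using the recursion, then maximize over \(p\) and over \(\norm x\) in turn; the constant \(A_\gamma^2\) should appear as the exact maximum. This also matches Proposition~\ref{prop:alternation}: equality should hold on the alternating two-periodic orbit, where \(\norm x=(1-\gamma)/(1+\gamma)\) and \(p=-x/\norm x\).

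\emph{Step 1: eliminate \(y\).} Put \(v=p-x\), so that \(y=x+(1-\gamma)v\). Then
\[
 \norm y^2-\norm x^2=2(1-\gamma)\langle x,v\rangle+(1-\gamma)^2\norm v^2 .
\]
Multiplying the left side of Eq.\eqref{eq:quadratic-potential} by \(1+\gamma\) therefore gives
\[
 (3-\gamma)\norm v^2+4\langle x,v\rangle .
\]
Substituting \(v=p-x\) and collecting terms, I expect this to reduce to
\[
 \Phi(x,p)=(3-\gamma)\norm p^2-(1+\gamma)\norm x^2-2(1-\gamma)\langle x,p\rangle .
\]
It then suffices to prove \(\Phi(x,p)\leq 4/(1+\gamma)\) for all \(x,p\in\B\).

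\emph{Step 2: maximize over \(p\) with \(x\) fixed.} The function \(\Phi(x,\cdot)\) is convex in \(p\), so its maximum over \(\B\) is attained on the unit sphere. On the sphere the \(p\)-dependent part is \(-2(1-\gamma)\langle x,p\rangle\). This is maximized by \(p=-x/\norm x\), or by any unit vector when \(x=O\). Writing \(r=\norm x\in[0,1]\), the bound becomes
\[
 \Phi(x,p)\leq \varphi(r)=(3-\gamma)+2(1-\gamma)r-(1+\gamma)r^2 .
\]

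\emph{Step 3: maximize over \(r\).} The function \(\varphi\) is a concave parabola, with vertex at \(r^*=(1-\gamma)/(1+\gamma)\in(0,1]\). Its maximal value is
\[
 \varphi(r^*)=(3-\gamma)+\frac{(1-\gamma)^2}{1+\gamma}
 =\frac{(3-\gamma)(1+\gamma)+(1-\gamma)^2}{1+\gamma}
 =\frac4{1+\gamma}.
\]
Dividing by \(1+\gamma\) gives exactly \(A_\gamma^2=4/(1+\gamma)^2\), which proves the lemma.

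The only point needing care is the algebra in Step 1, namely collecting the coefficients of \(\norm p^2\), \(\norm x^2\) and \(\langle x,p\rangle\) correctly after the substitution. Steps 2 and 3 are routine one-variable optimizations once \(\Phi\) is in this form.
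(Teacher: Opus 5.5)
Your proof is correct and follows essentially the same route as the paper: after multiplying by \(1+\gamma\) you get the same quadratic expression in \(\norm x^2\), \(\norm p^2\), \(\langle x,p\rangle\). You then take \(p\) to be the unit vector antipodal to \(x\) and maximize the concave quadratic in \(r=\norm x\) at \(r=(1-\gamma)/(1+\gamma)\). The only cosmetic difference is in how \(p\) reaches the sphere: you use convexity in \(p\), whereas the paper first sets \(\langle x,p\rangle=-\norm x\norm p\) and then uses monotonicity in \(\norm p\).
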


\begin{proof}
Let \(r=\norm x\), \(s=\norm p\), and \(c=\langle x,p\rangle\).
The left side of Eq.\eqref{eq:quadratic-potential} is
\[
 -r^2+\frac{3-\gamma}{1+\gamma}s^2
-\frac{2(1-\gamma)}{1+\gamma}c.
\]
For fixed \(r\) and \(s\), the expression is largest when \(p\) points in the
direction opposite to \(x\), so \(c=-rs\).  The resulting expression is
increasing in \(s\in[0,1]\), and its maximum therefore occurs at \(s=1\).  It
becomes
\[
 -r^2+\frac{3-\gamma}{1+\gamma}
 +\frac{2(1-\gamma)}{1+\gamma}r.
\]
This concave quadratic in \(r\in[0,1]\) is maximized at
\(r=(1-\gamma)/(1+\gamma)\).  Substitution gives
\(4/(1+\gamma)^2=A_\gamma^2\).
\end{proof}

\begin{theorem}
\label{thm:through-two}
For \(0\leq\gamma<1\) and \(0<\alpha\leq2\),
\[
 \A_{d,\alpha}(\gamma)=A_\gamma^\alpha.
\]
\end{theorem}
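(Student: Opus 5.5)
The lower bound \(\A_{d,\alpha}(\gamma)\geq A_\gamma^\alpha\) is Proposition~\ref{prop:alternation}, which holds for every \(\alpha>0\). It therefore remains to prove the matching upper bound for \(0<\alpha\leq2\). I would first settle \(\alpha=2\) by telescoping Lemma~\ref{lem:quadratic}, and then pass to \(\alpha<2\) by a power-mean (Jensen) argument.

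For \(\alpha=2\), fix any input \(p_1,\ldots,p_n\in\B\) and apply Eq.\eqref{eq:quadratic-potential} with \(x=x_{i-1}\), \(p=p_i\), \(y=x_i\) for \(i=1,\ldots,n\). Summing over \(i\), the potential terms telescope:
\[
 \sum_{i=1}^n\norm{p_i-x_{i-1}}^2
 \leq nA_\gamma^2+\frac{2}{1-\gamma^2}\bigl(\norm{x_0}^2-\norm{x_n}^2\bigr)
 \leq nA_\gamma^2+\frac{2}{1-\gamma^2},
\]
using \(x_0\in\B\) and \(\norm{x_n}^2\geq0\). The additive constant does not depend on \(n\) or on the input, so
\[
 \frac1nM_{n,2}^{(d)}(\gamma;x_0)\leq A_\gamma^2+\frac{2}{n(1-\gamma^2)},
\]
and hence \(\A_{d,2}(\gamma)\leq A_\gamma^2\). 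Proposition~\ref{prop:initial-state} is not needed here, because the bound is uniform in \(x_0\in\B\).

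For \(0<\alpha<2\), concavity of \(t\mapsto t^{\alpha/2}\) (equivalently, power-mean monotonicity) gives, for every input sequence,
\[
 \frac1n\sum_{i=1}^n\norm{p_i-x_{i-1}}^\alpha
 \leq\left(\frac1n\sum_{i=1}^n\norm{p_i-x_{i-1}}^2\right)^{\alpha/2}
 \leq\left(A_\gamma^2+\frac{2}{n(1-\gamma^2)}\right)^{\alpha/2}.
\]
Take the supremum over inputs and then the limsup in \(n\); the right side tends to \(A_\gamma^\alpha\). Combined with the lower bound, this gives equality.

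I do not expect a real obstacle, since the key inequality is already contained in Lemma~\ref{lem:quadratic}. The one point to watch is the order of operations in Eq.\eqref{eq:adversarial-value}: the supremum over the \(n\) inputs is taken first, so the telescoping remainder must be bounded uniformly over inputs. The bound \(2/(1-\gamma^2)\) has this property. The case \(\gamma=0\) needs no separate treatment, because \(A_0=2\) and the lemma holds there as well.
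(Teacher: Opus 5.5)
Your proof is correct and uses the paper's ingredients: Lemma~\ref{lem:quadratic} telescoped with a bounded potential, concavity of \(z\mapsto z^{\alpha/2}\), and Proposition~\ref{prop:alternation} for the lower bound. The only difference is the order of steps: the paper applies the tangent-line inequality at \(z=A_\gamma^2\) to each insertion before summing, while you apply power-mean monotonicity after summing; by that same tangent-line bound, your finite estimate \(n\bigl(A_\gamma^2+2/(n(1-\gamma^2))\bigr)^{\alpha/2}\) is at most the paper's \(nA_\gamma^\alpha+\alpha A_\gamma^{\alpha-2}/(1-\gamma^2)\), so it also supports Corollary~\ref{cor:same-N}.
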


\begin{proof}
By concavity of \(z^{\alpha/2}\), for every \(z\geq0\),
\[
 z^{\alpha/2}\leq A_\gamma^\alpha
 +\frac\alpha2 A_\gamma^{\alpha-2}(z-A_\gamma^2).
\]
Apply this inequality to \(z=\norm{p-x}^2\), and then use
Lemma~\ref{lem:quadratic}.  We obtain
\[
 \norm{p-x}^\alpha
 \leq A_\gamma^\alpha
 +\frac{\alpha A_\gamma^{\alpha-2}}{1-\gamma^2}
   \bigl(\norm x^2-\norm y^2\bigr).
\]
Summing over the first \(n\) insertions gives
\[
 \sum_{i=1}^n\norm{p_i-x_{i-1}}^\alpha
 \leq nA_\gamma^\alpha
 +\frac{\alpha A_\gamma^{\alpha-2}}{1-\gamma^2}
   \bigl(\norm{x_0}^2-\norm{x_n}^2\bigr).
\]
The final term is bounded independently of \(n\).  Division by \(n\) gives
\(\A_{d,\alpha}(\gamma)\leq A_\gamma^\alpha\).
Proposition~\ref{prop:alternation} gives the reverse inequality.
\end{proof}

\begin{lemma}
\label{lem:cubic}
Let \(0\leq\gamma<1\), and put
\[
 h_\gamma(x)
 =\frac{11+2\gamma-\gamma^2}
       {2(1-\gamma)(1+\gamma)^2}\norm x^2
  +\frac1{4(1-\gamma)}\norm x^4.
\]
For \(x,p\in\B\) and \(y=\gamma x+(1-\gamma)p\),
\begin{equation}
 \norm{p-x}^3+h_\gamma(y)-h_\gamma(x)
 \leq A_\gamma^3.
 \label{eq:cubic-potential}
\end{equation}
\end{lemma}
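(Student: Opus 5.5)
The plan follows the proof of Lemma~\ref{lem:quadratic}: first reduce the inequality to a collinear configuration with \(p\) on the boundary, then verify a one-variable polynomial inequality. Write \(r=\norm x\), \(s=\norm p\), \(c=\langle x,p\rangle\). Then
\[
 \norm{p-x}^2=r^2+s^2-2c,
 \qquad
 \norm y^2=\gamma^2r^2+2\gamma(1-\gamma)c+(1-\gamma)^2s^2 .
\]
Write \(h_\gamma(x)=\phi(\norm x^2)\) with \(\phi(u)=Au+Bu^2\), where \(A,B>0\). For fixed \(r,s\), the left side of Eq.\eqref{eq:cubic-potential} as a function of \(c\in[-rs,rs]\) is
\[
 (r^2+s^2-2c)^{3/2}+\phi(\norm y^2)-\phi(r^2).
\]
The first term is convex in \(c\). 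The second is a convex increasing function of \(\norm y^2\), and \(\norm y^2\) is affine in \(c\), so it is also convex in \(c\). Hence the maximum over \(c\) is attained at \(c=\pm rs\), that is, when \(p\) and \(x\) are collinear with the origin.

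This reduces the problem to \(d=1\), with \(x=a\in[-1,1]\) and \(p\in[-1,1]\). Consider
\[
 F(a,p)=|p-a|^3+\phi\bigl((\gamma a+(1-\gamma)p)^2\bigr)-\phi(a^2).
\]
For fixed \(a\), this is convex in \(p\). Indeed, \(|p-a|^3\) is convex, and \(t\mapsto At^2+Bt^4\) is convex, composed here with an affine map of \(p\). So the maximum is at \(p=\pm1\). By the symmetry \((a,p)\mapsto(-a,-p)\) we may take \(p=1\). It then remains to prove, for \(a\in[-1,1]\),
\[
 G(a)=(1-a)^3+h_\gamma(\gamma a+1-\gamma)-h_\gamma(a)\le A_\gamma^3 .
\]

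The point \(a^*=-(1-\gamma)/(1+\gamma)\) is the state on the alternating orbit of Proposition~\ref{prop:alternation}. There \(\gamma a^*+1-\gamma=-a^*\), so the potential terms cancel and \(G(a^*)=(1-a^*)^3=A_\gamma^3\). The coefficients of \(h_\gamma\) are presumably chosen so that \(G'(a^*)=0\); I will check this first by direct differentiation. Then \(A_\gamma^3-G(a)\) is a quartic polynomial in \(a\) with a double root at \(a^*\). Its leading coefficient is
\[
 \frac{1-\gamma^4}{4(1-\gamma)}>0,
\]
coming from the quartic terms of \(h_\gamma\), which enter as \(\gamma^4a^4-a^4\) with the sign flipped.

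Therefore \(A_\gamma^3-G(a)=(a-a^*)^2Q(a)\) with \(Q\) a quadratic having positive leading coefficient. The main obstacle is to show \(Q\ge0\) on \([-1,1]\), uniformly for \(\gamma\in[0,1)\). I would try first to compute \(Q\) explicitly, clearing the factor \(1-\gamma\) by multiplying through by \(4(1-\gamma)(1+\gamma)^2\). Then I would show either that its discriminant is negative, or that its vertex lies outside \([-1,1]\) and its values at \(a=\pm1\) are nonnegative. If the direct factorization is too messy, a fallback is the substitution \(a=a^*+t\) with the variable \(\gamma\), checking nonnegativity of the resulting polynomial coefficients on the relevant range.
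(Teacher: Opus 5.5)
Your plan is correct and is essentially the paper's proof. The paper also reduces by convexity to a collinear configuration with $p$ on the unit sphere, though in the reverse order: first convexity in $p$ over $\B$, then in $t=\langle x/\norm{x},p\rangle$, which is immaterial. It then shows that the one-variable quartic factors as $A_\gamma^3-G(a)=\frac{((1+\gamma)a+1-\gamma)^2}{4(1+\gamma)^3}Q_\gamma(a)$, which confirms your double root at $a^*$. Of your two proposed checks, only the discriminant one succeeds: $Q_\gamma$ has discriminant $-16(1-\gamma)^2(1+\gamma)^2<0$, whereas its vertex $-(1-\gamma^2)/(1+\gamma^2)$ lies in $[-1,1]$, so the vertex-outside alternative would not apply.
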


\begin{proof}
Fix \(x\).  The function
\[
 p\longmapsto \norm{p-x}^3+h_\gamma(\gamma x+(1-\gamma)p)
\]
is convex.  Its maximum over \(\B\) is therefore attained on the unit sphere.
If \(x=O\), rotational symmetry allows any diameter through \(p\), and the
one-dimensional argument below applies with \(x=0\).  Assume henceforth that
\(r=\norm x>0\), and set \(t=\langle x/r,p\rangle\).
The term
\[
 (1+r^2-2rt)^{3/2}
\]
is convex in \(t\in[-1,1]\).  Moreover,
\(\norm{\gamma x+(1-\gamma)p}^2\) is affine in \(t\), and the quadratic
coefficient produced by its square is nonnegative.  Thus the complete
expression is convex in \(t\), and its maximum is attained at \(t=-1\) or
\(t=1\).

Identify the chosen diameter with \([-1,1]\).  Both cases are covered by
taking \(p=1\) and \(x\in[-1,1]\).  Since
\(\lvert1-x\rvert^3=(1-x)^3\), direct expansion gives
\begin{align}
 &A_\gamma^3+h_\gamma(x)
 -h_\gamma(\gamma x+1-\gamma)-(1-x)^3 \notag\\
 &\quad=
 \frac{\bigl((1+\gamma)x+1-\gamma\bigr)^2}
 {4(1+\gamma)^3}Q_\gamma(x),
 \label{eq:cubic-factorization}
\end{align}
where
\begin{align*}
 Q_\gamma(x)={}&
 (\gamma^4+2\gamma^3+2\gamma^2+2\gamma+1)x^2\\
 &+(-2\gamma^4-4\gamma^3+4\gamma+2)x\\
 &+\gamma^4+2\gamma^3-2\gamma^2-6\gamma+5.
\end{align*}
The leading coefficient of \(Q_\gamma\) is positive, and its discriminant is
\[
 -16(1-\gamma)^2(1+\gamma)^2<0.
\]
 Hence \(Q_\gamma(x)>0\) for every real \(x\).  The right side of
 Eq.\eqref{eq:cubic-factorization} is nonnegative, which proves
 Eq.\eqref{eq:cubic-potential}.
 Equality occurs when \(p\) is antipodal to \(x\) and
 \(\norm x=(1-\gamma)/(1+\gamma)\), exactly the state on the attracting
 alternation orbit.
\end{proof}

\begin{theorem}
\label{thm:through-three}
For \(0\leq\gamma<1\) and \(0<\alpha\leq3\),
\begin{equation}
 \A_{d,\alpha}(\gamma)
 =\left(\frac2{1+\gamma}\right)^\alpha.
 \label{eq:main-value}
\end{equation}
\end{theorem}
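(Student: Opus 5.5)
The lower bound \(\A_{d,\alpha}(\gamma)\geq A_\gamma^\alpha\) is already Proposition~\ref{prop:alternation}, and the range \(0<\alpha\leq2\) is Theorem~\ref{thm:through-two}. So I only need the upper bound for \(2<\alpha\leq3\). The plan is to interpolate between the quadratic potential of Lemma~\ref{lem:quadratic} and the cubic potential of Lemma~\ref{lem:cubic}. The tool is the same kind of tangent-line argument used in the proof of Theorem~\ref{thm:through-two}, but now with a concave function of the cubic cost.

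Write \(z=\norm{p-x}\in[0,2]\) and \(\theta=\alpha/3\in(2/3,1]\). The map \(w\mapsto w^{\theta}\) is concave on \([0,\infty)\). Hence, for every \(w\geq0\),
\[
 w^\theta\leq A_\gamma^{\alpha}+\theta A_\gamma^{\alpha-3}\bigl(w-A_\gamma^3\bigr).
\]
Apply this with \(w=z^3\), then use Eq.\eqref{eq:cubic-potential} in the form \(z^3-A_\gamma^3\leq h_\gamma(x)-h_\gamma(y)\). This gives the one-step inequality
\[
 \norm{p-x}^\alpha\leq A_\gamma^\alpha+\tfrac{\alpha}{3}A_\gamma^{\alpha-3}\bigl(h_\gamma(x)-h_\gamma(y)\bigr).
\]
It is exactly parallel to the step in Theorem~\ref{thm:through-two}, with \(h_\gamma\) playing the role of \(2\norm{x}^2/(1-\gamma^2)\). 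The case \(\alpha=3\) is Lemma~\ref{lem:cubic} itself.

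Next I sum over \(i=1,\ldots,n\) with \(x=x_{i-1}\), \(p=p_i\), \(y=x_i\). The potential terms telescope to \(\tfrac{\alpha}{3}A_\gamma^{\alpha-3}(h_\gamma(x_0)-h_\gamma(x_n))\). By Proposition~\ref{prop:geometry} every state lies in \(\B\), and \(h_\gamma\) is bounded on \(\B\) for fixed \(\gamma<1\). So this term is bounded independently of \(n\) and of the input sequence, and the bound survives the supremum over input sequences in the definition of \(M_{n,\alpha}^{(d)}\). Dividing by \(n\) and taking the limsup gives \(\A_{d,\alpha}(\gamma)\leq A_\gamma^\alpha\). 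Combined with Proposition~\ref{prop:alternation}, this proves Eq.\eqref{eq:main-value}. Proposition~\ref{prop:initial-state} is not even needed, because the bound holds for every \(x_0\in\B\).

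The only point requiring care is the direction of the inequalities. The tangent-line bound needs only concavity of \(w^{\theta}\) for \(\theta\leq1\), and the coefficient \(\theta A_\gamma^{\alpha-3}\) is positive. Multiplying the cubic potential inequality by this coefficient therefore preserves its direction. The real analytic difficulty was the cubic potential inequality, which Lemma~\ref{lem:cubic} already supplies. The remaining argument is routine.
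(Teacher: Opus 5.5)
Your proposal is correct, and it differs from the paper's proof at the one step that matters, the range $2<\alpha<3$. The paper builds no potential there. It applies H\"older's inequality on the uniform probability measure on $\{1,\ldots,n\}$, with $1/\alpha=\theta/2+(1-\theta)/3$, to bound the empirical $\alpha$-mean of the insertion lengths by the empirical quadratic and cubic means. Those two means are controlled separately by Lemma~\ref{lem:quadratic} and Lemma~\ref{lem:cubic}, and both factors become $A_\gamma+o(1)$ uniformly over inputs.

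You instead compose the cubic potential with the tangent line of the concave map $w\mapsto w^{\alpha/3}$ at $A_\gamma^3$. This gives an explicit one-step potential $\tfrac{\alpha}{3}A_\gamma^{\alpha-3}h_\gamma$ and the additive finite-$n$ bound $M_{n,\alpha}^{(d)}(\gamma;x_0)\leq nA_\gamma^\alpha+\tfrac{\alpha}{3}A_\gamma^{\alpha-3}\max_{\B}h_\gamma$, uniformly in $x_0\in\B$.

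Your route buys three things:
\begin{enumerate}
\item It uses only the cubic lemma.
\item Concavity of $w^{\alpha/3}$ holds for every $\alpha\leq3$, so the same argument covers all $0<\alpha\leq3$ at once. Theorem~\ref{thm:through-two} is then logically unnecessary for the value, although you cite it anyway.
\item It gives the same-$N$ estimate of Corollary~\ref{cor:same-N} for $2<\alpha<3$ immediately, since $\max_{\B}h_{\gamma_N}=O(\sqrt N)$, without redoing the interpolation on finite estimates.
\end{enumerate}

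The paper's H\"older step buys modularity. It turns any pair of uniform bounds on the quadratic and cubic averages into bounds for every intermediate power, whatever their form. It would therefore apply unchanged if those bounds came from something other than a potential inequality, whereas the tangent-line argument requires a one-step potential inequality at the top exponent.
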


\begin{proof}
Theorem~\ref{thm:through-two} covers \(0<\alpha\leq2\).
For \(\alpha=3\), summing Eq.\eqref{eq:cubic-potential} over the first \(n\)
insertions gives
\[
 \sum_{i=1}^n\norm{p_i-x_{i-1}}^3
 \leq nA_\gamma^3+h_\gamma(x_0)-h_\gamma(x_n).
\]
The potential is bounded on \(\B\), so division by \(n\) yields the desired
upper bound.  Proposition~\ref{prop:alternation} gives equality.

It remains to consider \(2<\alpha<3\).  Choose \(\theta\in(0,1)\) such that
\[
 \frac1\alpha=\frac{\theta}{2}+\frac{1-\theta}{3}.
\]
For the insertion lengths \(\ell_i=\norm{p_i-x_{i-1}}\), H\"older's inequality,
applied to the uniform probability measure on \(\{1,\ldots,n\}\), gives
\[
 \left(\frac1n\sum_{i=1}^n\ell_i^\alpha\right)^{1/\alpha}
 \leq
 \left(\frac1n\sum_{i=1}^n\ell_i^2\right)^{\theta/2}
 \left(\frac1n\sum_{i=1}^n\ell_i^3\right)^{(1-\theta)/3}.
\]
 Since both potentials are bounded on \(\B\), there are constants
 \(C_2,C_3\), independent of the input sequence and \(n\), such that
 \[
  \frac1n\sum_{i=1}^n\ell_i^2
  \leq A_\gamma^2+\frac{C_2}{n},
  \qquad
  \frac1n\sum_{i=1}^n\ell_i^3
  \leq A_\gamma^3+\frac{C_3}{n}.
 \]
 Hence the two factors in the preceding inequality are
 \(A_\gamma+o(1)\), uniformly over the input sequence.  Thus the limsup of the
left side is at most \(A_\gamma\).  Proposition~\ref{prop:alternation} again
gives equality.
\end{proof}

\begin{corollary}
\label{cor:same-N}
Assume \(0<\alpha\leq3\) and either \(d+\alpha>2\) or
\((d,\alpha)=(1,1)\).  Fix \(\lambda>0\) and, for all sufficiently large
\(N\), set \(\gamma_N=1-\lambda N^{-1/2}\).  Then
\begin{align}
 \E L_{\alpha,N}^{(\gamma_N)}
 &=Nc_{d,\alpha}+K_{d,\alpha}(\lambda)\sqrt N+o(\sqrt N),
 \label{eq:same-N-uniform}\\
 \frac1N M_{N,\alpha}^{(d)}(\gamma_N;x_0)
 &=1+O(N^{-1/2})
 \label{eq:same-N-adversarial}
\end{align}
uniformly for \(x_0\in\B\).  Thus the same value of \(\gamma_N\) gives the
sharp uniform-input correction and a worst-case mean cost converging to the
unit scale for the same value of \(N\).
The constants implicit in the \(O\)-term may depend on
\(d,\alpha,\lambda\); the stated uniformity is with respect to \(x_0\in\B\).
\end{corollary}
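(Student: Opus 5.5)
The uniform-input expansion Eq.\eqref{eq:same-N-uniform} is just a restatement of Theorem~\ref{thm:square-root-constant}. Its limit Eq.\eqref{eq:tight-square-root-constant} says exactly that \(\E L_{\alpha,N}^{(\gamma_N)}-Nc_{d,\alpha}-K_{d,\alpha}(\lambda)\sqrt N=o(\sqrt N)\). The hypotheses \(d+\alpha>2\) or \((d,\alpha)=(1,1)\) are the same, so nothing new is needed there. All the work is in Eq.\eqref{eq:same-N-adversarial}. We must track the dependence on \(\gamma\) of the potential constants in Lemmas~\ref{lem:quadratic} and \ref{lem:cubic}, because \(\gamma=\gamma_N\) moves with \(N\).

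For the lower bound I would use a single explicit input valid for every \(x_0\), namely \(p_i=-x_{i-1}/\norm{x_{i-1}}\), or any unit vector when \(x_{i-1}=O\). Each insertion then has length \(1+\norm{x_{i-1}}\geq1\), so \(M_{N,\alpha}^{(d)}(\gamma_N;x_0)\geq N\). This holds uniformly in \(x_0\) with no error term, so the lower side of \(1+O(N^{-1/2})\) is immediate.

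For the upper bound with \(0<\alpha\leq2\), I would sum the potential inequality from the proof of Theorem~\ref{thm:through-two}. This gives
\[
 \sum_{i=1}^N\ell_i^\alpha\leq NA_{\gamma_N}^\alpha+\frac{\alpha A_{\gamma_N}^{\alpha-2}}{1-\gamma_N^2}\norm{x_0}^2 .
\]
Here \(A_{\gamma_N}=2/(2-\lambda N^{-1/2})=1+O(N^{-1/2})\), so \(A_{\gamma_N}^\alpha=1+O(N^{-1/2})\). Also \(1-\gamma_N^2\geq\lambda N^{-1/2}\), so the potential term is at most \(C\sqrt N/\lambda\) uniformly in \(x_0\in\B\). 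Dividing by \(N\) gives \(1+O(N^{-1/2})\). For \(\alpha=3\) I would do the same with Lemma~\ref{lem:cubic}. The potential \(h_\gamma\) has coefficients of order \((1-\gamma)^{-1}\), so \(\sup_{\B}|h_{\gamma_N}|=O(\sqrt N)\) and \(h_{\gamma_N}(x_0)-h_{\gamma_N}(x_N)=O(\sqrt N)\) uniformly. This yields \(\frac1N\sum\ell_i^3\leq A_{\gamma_N}^3+O(N^{-1/2})\). For \(2<\alpha<3\) I would repeat the H\"older interpolation of Theorem~\ref{thm:through-three}. The two averages are \(1+O(N^{-1/2})\), hence so is the interpolated power.

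The main point to check is that the potential constants blow up only like \((1-\gamma)^{-1}=\sqrt N/\lambda\), and not faster. This is the "potential terms that remain when \(\gamma\) varies with \(N\)" mentioned in the introduction. Once each bounded-potential remainder is shown to be \(O(\sqrt N)\) uniformly over \(\B\), the balance \(N^{-1}\cdot\sqrt N=N^{-1/2}\) matches the \(A_{\gamma_N}-1=O(N^{-1/2})\) deviation. The constants then depend only on \(d,\alpha,\lambda\), as stated.
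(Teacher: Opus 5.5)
Your proposal is correct. The uniform-input part and the upper bound match the paper's proof:
\begin{itemize}
\item for $0<\alpha\leq2$, the summed potential inequality from the proof of Theorem~\ref{thm:through-two}, with $(1-\gamma_N^2)^{-1}=O(\sqrt N)$;
\item for $\alpha=3$, the cubic potential $h_{\gamma_N}$, whose coefficients are $O((1-\gamma_N)^{-1})=O(\sqrt N)$;
\item for $2<\alpha<3$, the H\"older interpolation.
\end{itemize}

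The lower bound is where you genuinely depart from the paper. The paper uses the non-adaptive alternating diameter sequence and compares the trajectory started at $x_0$ with the attracting two-periodic orbit, via Lemmas~\ref{lem:contraction} and \ref{lem:power-difference}. The transient then costs $2^\alpha/(1-\gamma_N^\alpha)$ or $\alpha2^\alpha/(1-\gamma_N)$, both $O(\sqrt N)$. This gives $M_{N,\alpha}^{(d)}(\gamma_N;x_0)\geq NA_N^\alpha-O(\sqrt N)$.

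Your adaptive antipodal input gives $\norm{p_i-x_{i-1}}=1+\norm{x_{i-1}}\geq1$ at every step. Hence $M_{N,\alpha}^{(d)}(\gamma_N;x_0)\geq N$ exactly, for every $\gamma$, $N$ and $x_0$. There is no transient to control and no $\gamma_N$-dependent constant to track. This is simpler and fully sufficient, because the lower side of $1+O(N^{-1/2})$ only requires $M/N\geq1-CN^{-1/2}$.

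The paper's argument has the merit of tying the lower bound to the extremal alternation orbit of Theorem~\ref{thm:through-three}. At this scale, however, its $O(\sqrt N)$ transient error is of the same order as $N(A_N^\alpha-1)$. So it yields no sharper conclusion for the corollary than your bound.
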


\begin{proof}
Eq.\eqref{eq:same-N-uniform} is
Theorem~\ref{thm:square-root-constant}.  Put
\(A_N=2/(1+\gamma_N)\).  If \(0<\alpha\leq2\), the finite inequality in the
proof of Theorem~\ref{thm:through-two} gives
\[
 M_{N,\alpha}^{(d)}(\gamma_N;x_0)
 \leq N A_N^\alpha
 +\frac{\alpha A_N^{\alpha-2}}{1-\gamma_N^2}.
\]
Here \(A_N=1+O(N^{-1/2})\) and
\((1-\gamma_N^2)^{-1}=O(\sqrt N)\).

For \(\alpha=3\), summing Eq.\eqref{eq:cubic-potential} gives
\[
 M_{N,3}^{(d)}(\gamma_N;x_0)
 \leq N A_N^3+\max_{x\in\B}h_{\gamma_N}(x).
\]
The two positive coefficients defining \(h_{\gamma_N}\) are
\(O((1-\gamma_N)^{-1})\), so the last maximum is \(O(\sqrt N)\).  For
\(2<\alpha<3\), apply the power-mean interpolation used in the proof of
Theorem~\ref{thm:through-three} to these finite quadratic and cubic estimates.
This proves the upper bound in Eq.\eqref{eq:same-N-adversarial}.

For the lower bound, use the alternating diameter sequence and compare the
state from \(x_0\) with its attracting two-periodic state.  Lemmas
\ref{lem:contraction} and \ref{lem:power-difference} bound the difference of
the two accumulated costs by
\[
 \frac{2^\alpha}{1-\gamma_N^\alpha}
 \quad (0<\alpha\leq1),
 \qquad
 \frac{\alpha2^\alpha}{1-\gamma_N}
 \quad (\alpha\geq1).
\]
Both quantities are \(O(\sqrt N)\), while the periodic state has cost
\(N A_N^\alpha\).  Hence the supremum is at least
\(N A_N^\alpha-O(\sqrt N)\).  Since \(A_N^\alpha=1+O(N^{-1/2})\), the matching
lower estimate follows.
\end{proof}

The calibration in Corollary~\ref{cor:same-N} assumes that the total number
\(N\) of insertions is known.  Epochs of lengths \(1,2,4,\ldots\) can define a
horizon-free rule, with the state restarted at epoch boundaries.  The restart
costs and the distributional correction of that rule require a separate
analysis and are not asserted here.

The restriction \(0<\alpha\leq3\) is substantive.  For every fixed
\(0<\gamma<1\), Appendix~\ref{app:periodic-blocks} gives periodic endpoint
blocks whose adversarial mean cost is strictly larger than that of alternation
when \(\alpha\) is sufficiently large.

\subsection{Comparison with general weighted averages}
\label{sec:weighted-adversarial}

We now use a fixed sequence of weights, the same at every insertion, to average
the previously inserted points.  Let
\begin{equation}
 w_j\geq0,\qquad
 \sum_{j\geq0}w_j=1,\qquad
 a=\sum_{j\geq0}jw_j<\infty.
 \label{eq:weighted-mean-lag}
\end{equation}
The coefficient \(w_j\) is assigned to \(p_{t-1-j}\), the point inserted \(j\)
steps before the most recently inserted point \(p_{t-1}\).  Let \(J\) be an
integer-valued random variable with \(\Pr(J=j)=w_j\).  The index \(J\) is the
lag of the selected earlier point and \(a=\E J\) is the mean lag.  Equivalently,
a point selected according to the weights lies, on average, \(a+1\) insertions
before the incoming point \(p_t\).  The constraint therefore fixes the
temporal centroid of the coefficient profile.  Its support, oldest represented
point, and storage requirement remain unrestricted.  The class may have
infinite support and may require a growing state.  For a bi-infinite input
sequence in \(\B\), let
\[
 x_{t-1}^{w}=\sum_{j\geq0}w_jp_{t-1-j},
 \qquad
 v_t^{w}=p_t-x_{t-1}^{w}.
\]
Thus \(v_t^w\) is the insertion vector from the weighted point
\(x_{t-1}^w\) to the new point \(p_t\).
The corresponding asymptotic adversarial value is
\begin{equation}
 \A_{d,\alpha}[w]
 =\limsup_{n\to\infty}\sup_{(p_t)_{t\in\mathbb Z}\subset\B}
   \frac1n\sum_{t=1}^n\norm{v_t^w}^{\alpha}.
 \label{eq:weighted-adversarial-value}
\end{equation}
The bi-infinite formulation makes the coefficients independent of the starting
time.  As in Eq.\eqref{eq:adversarial-value}, for each \(n\) the supremum is
taken before the limsup.
Associate with the weights the functions
\begin{equation}
 W(z)=\sum_{j\geq0}w_jz^j,
 \qquad
 F_w(z)=1-zW(z),
 \qquad
 m(w)=\max_{|z|=1}|F_w(z)|.
 \label{eq:weighted-generating-functions}
\end{equation}
The series for \(W\) converges uniformly on the closed unit disk.  Hence
\(F_w\) is continuous on the unit circle and the maximum defining \(m(w)\) is
attained.  The first-moment assumption also gives
\[
 F_w(e^{\mathrm{i}\omega})=-\mathrm{i}(a+1)\omega+o(|\omega|)
 \qquad (\omega\to0).
\]
Thus comparison at the same mean lag holds fixed both the temporal centroid
and the first-order response of the insertion vector to a slowly rotating
input.  Unit-circle maxima and Parseval's identity provide the corresponding
\(\ell_2\) convolution norm; see Chapter~I, Section~5 of
Katznelson~\cite{Katznelson2004} for the Parseval isometry used in
Theorem~\ref{thm:weighted-unit-circle}.

Optimization of unit-circle maxima for finite coefficient sequences also
appears in frequency-response design.  Convex formulations for
magnitude constraints are developed by Wu, Boyd, and Vandenberghe~\cite{WuBoydVandenberghe1998},
and Liu and Bauer~\cite{LiuBauer2010} study the frequency-domain limitations imposed by
nonnegative coefficients.  Here the coefficients are additionally normalized
as an averaging rule, their first moment is prescribed, and the objective is
the adversarial Euclidean insertion cost.  The analytic lower bound and
approximation guarantee characterize the exponential profile within this
geometric model.
For the exponential weights \(w_j=(1-\gamma)\gamma^j\),
\[
 F_\gamma(z)=F_w(z)=\frac{1-z}{1-\gamma z},\qquad
 a=\frac{\gamma}{1-\gamma},\qquad
 m(w)=\frac2{1+\gamma}.
\]

The following lemma relates the bi-infinite definition to a one-sided
construction.

\begin{lemma}
\label{lem:weighted-startup}
Let \(T_t=\sum_{j\geq t}w_j\) and, for \(t\geq1\), define
\[
 \widehat x_{t-1}^{w}
 =\sum_{j=0}^{t-1}w_jp_{t-1-j}+T_t p_0,
 \qquad
 \widehat v_t^w=p_t-\widehat x_{t-1}^{w}.
\]
Then, for every \(\alpha>0\),
\[
 \A_{d,\alpha}[w]
 =\limsup_{n\to\infty}\sup_{p_0,\ldots,p_n\in\B}
   \frac1n\sum_{t=1}^n\norm{\widehat v_t^w}^\alpha.
\]
For \(w_j=(1-\gamma)\gamma^j\), the point
\(\widehat x_{t-1}^{w}\) is exactly the state \(x_{t-1}\) generated by
Eq.\eqref{eq:recursion} from \(x_0=p_0\).
\end{lemma}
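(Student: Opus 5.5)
We will prove the two inequalities between the bi-infinite value \(\A_{d,\alpha}[w]\) and the one-sided value separately. The bridge is a correspondence between finite one-sided inputs and bi-infinite inputs. Given \(p_0,\ldots,p_n\in\B\), extend it to a bi-infinite sequence by setting \(p_s=p_0\) for all \(s<0\) and choosing the \(p_s\) with \(s>n\) arbitrarily. Then, for \(1\leq t\leq n\),
\[
 x_{t-1}^w=\sum_{j=0}^{t-1}w_jp_{t-1-j}+\sum_{j\geq t}w_jp_0=\widehat x_{t-1}^w,
\]
so \(v_t^w=\widehat v_t^w\) exactly. Hence every one-sided sum is a bi-infinite sum, and the supremum over bi-infinite sequences is at least the one-sided supremum for each \(n\). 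Taking the limsup gives \(\A_{d,\alpha}[w]\geq\) the one-sided value.

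For the reverse inequality, fix a bi-infinite sequence and \(n\). We will compare it with a one-sided sequence on a longer window. Choose a burn-in length \(k\), put \(q_0=p_{-k}\), and let the subsequent \(q_i\) be the \(p\)'s shifted by \(k\). Then
\[
 \norm{x_{t-1}^w-\widehat x^w_{t-1+k}}\leq 2T_{t+k},
\]
because the two attachment points differ only in the weights of index \(j\geq t+k\), each weighted point lies in \(\B\), and so the discrepancy has norm at most \(2T_{t+k}\). By Lemma~\ref{lem:power-difference},
\[
 \bigl|\norm{v_t^w}^\alpha-\norm{\widehat v^w_{t+k}}^\alpha\bigr|
 \leq C\max\{T_{t+k},T_{t+k}^\alpha\}.
\]
\textbf{Main obstacle.} The error must be \(o(n)\) uniformly over inputs. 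Since \(a=\sum_t T_t<\infty\) by Eq.\eqref{eq:weighted-mean-lag}, the case \(\alpha\geq1\) already gives a bounded total error. For \(\alpha<1\), \(\sum_tT_t^\alpha\) may diverge, but \(T_t\to0\) suffices: the average \(\frac1n\sum_{t\leq n}T_t^\alpha\to0\) by Cesàro. Alternatively, we can take \(k=0\) and simply drop the early indices.

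With this estimate, the bi-infinite sum over \(1\leq t\leq n\) is at most the one-sided sum over \(n\) insertions of the \(q\)-sequence (discarding the nonnegative terms \(t\leq k\)) plus \(o(n)\), uniformly in the input. Taking the supremum and then the limsup gives the reverse inequality.

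For the exponential case \(w_j=(1-\gamma)\gamma^j\), we have \(T_t=\gamma^t\). Induction on \(t\) shows that
\[
 \widehat x_{t-1}^w=\gamma^{t-1}p_0+(1-\gamma)\sum_{j=1}^{t-1}\gamma^{t-1-j}p_j,
\]
which matches the iterated form of Eq.\eqref{eq:recursion} from \(x_0=p_0\) given in the proof of Lemma~\ref{lem:finite-state-second}. The weight of \(p_0\) is \((1-\gamma)\gamma^{t-1}+\gamma^t=\gamma^{t-1}\), as required.
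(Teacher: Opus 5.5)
Your proposal is correct and follows essentially the paper's proof. Extending by \(p_s=p_0\) for \(s<0\) gives one inequality exactly. Replacing the negative-index points by \(p_0\) (your case \(k=0\)) gives an input-independent discrepancy of at most \(2T_t\) per insertion, which is controlled by \(\sum_tT_t=a\) for \(\alpha\geq1\) and by Ces\`aro averaging for \(\alpha<1\).

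The burn-in \(k\) is unnecessary. If you keep it, the shifted \(q\)-sequence has \(n+k\) insertions, not \(n\), so you need \(k\) fixed (or \(k=o(n)\)) to absorb the factor \((n+k)/n\) when passing to the limsup.
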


\begin{proof}
Extend a one-sided sequence by setting \(p_t=p_0\) for \(t<0\).  Its
bi-infinite weighted point is exactly \(\widehat x_{t-1}^{w}\), so its
asymptotic value is at most \(\A_{d,\alpha}[w]\).

Conversely, start from any bi-infinite sequence and replace its negative-index
points by \(p_0\).  At insertion \(t\), the two weighted points differ by at
most \(2T_t\).  Lemma~\ref{lem:power-difference} bounds the corresponding
difference of costs by
\[
 \begin{cases}
  (2T_t)^\alpha, & 0<\alpha\leq1,\\
  \alpha2^\alpha T_t, & \alpha\geq1.
 \end{cases}
\]
Since \(T_t\) tends to zero, the Cesaro mean of \(T_t^\alpha\) tends to zero.
For \(\alpha\geq1\), the stronger identity
\(\sum_{t\geq1}T_t=\sum_{j\geq0}jw_j=a\) applies.  Thus the difference of the
 two mean costs over the first \(n\) insertions tends to zero uniformly over all inputs.  Taking
suprema and then limsups proves the equality.  For exponential weights,
\(T_t=\gamma^t\); direct expansion of Eq.\eqref{eq:recursion} gives the final
statement.
\end{proof}

The next result evaluates the adversarial cost of every weight sequence through
the second power.

\begin{theorem}
\label{thm:weighted-unit-circle}
Suppose that \(d\geq2\) and that \(w\) satisfies
Eq.\eqref{eq:weighted-mean-lag}.  Then
\begin{equation}
 \A_{d,\alpha}[w]=m(w)^\alpha,
 \qquad 0<\alpha\leq2.
 \label{eq:weighted-unit-circle-exact}
\end{equation}
For every \(\alpha>0\),
\begin{equation}
 \A_{d,\alpha}[w]\geq m(w)^\alpha.
 \label{eq:weighted-unit-circle-lower}
\end{equation}
\end{theorem}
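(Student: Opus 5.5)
The plan is to prove the lower bound \eqref{eq:weighted-unit-circle-lower} with a rotating input in a plane, which is where \(d\geq2\) enters. The upper bound for \(0<\alpha\leq2\) will come from Parseval's identity applied to a truncated input, followed by power-mean monotonicity.

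\emph{Lower bound.} Identify a two-dimensional coordinate plane of \(\R^d\) with \(\mathbb C\). For fixed \(\omega\in\R\), take the bi-infinite input \(p_t=e^{\mathrm i\omega t}\), which lies on the unit sphere.

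Since the weights are real and summable, the weighted point and the insertion vector are
\[
 x_{t-1}^w=\sum_{j\geq0}w_je^{\mathrm i\omega(t-1-j)}=e^{\mathrm i\omega t}e^{-\mathrm i\omega}W(e^{-\mathrm i\omega}),
 \qquad
 v_t^w=e^{\mathrm i\omega t}F_w(e^{-\mathrm i\omega}).
\]
Hence every insertion has the same length \(|F_w(e^{-\mathrm i\omega})|\). The mean cost is therefore exactly \(|F_w(e^{-\mathrm i\omega})|^\alpha\) for every \(n\).

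As \(\omega\) ranges over \(\R\), \(e^{-\mathrm i\omega}\) covers the whole unit circle. Taking \(\omega\) at a point where the maximum in Eq.\eqref{eq:weighted-generating-functions} is attained gives \(\A_{d,\alpha}[w]\geq m(w)^\alpha\) for every \(\alpha>0\).

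\emph{Upper bound for \(\alpha=2\).} Fix \(n\) and a bi-infinite input \((p_t)\) in \(\B\). Define the truncated sequence \(q_t=p_t\) for \(0\leq t\leq n\) and \(q_t=0\) otherwise. The map \(q\mapsto v^q\), with \(v_t^q=q_t-\sum_jw_jq_{t-1-j}\), is convolution with the \(\ell_1\) kernel \((1,-w_0,-w_1,\ldots)\), whose transfer function on the unit circle is \(F_w\) up to conjugation.

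Apply Parseval's identity coordinatewise and sum over the \(d\) coordinates. This gives
\[
 \sum_{t\in\mathbb Z}\norm{v_t^q}^2\leq m(w)^2\sum_t\norm{q_t}^2\leq m(w)^2(n+1).
\]

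For \(1\leq t\leq n\), the weighted points built from \(p\) and from \(q\) differ only through the terms with indices below \(0\), so \(\norm{v_t^p-v_t^q}\leq T_t=\sum_{j\geq t}w_j\). Two things remain.

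First, I would use the elementary bound \(\norm{a}^2\leq\norm{b}^2+4\norm{a-b}\) for \(\norm a,\norm b\leq2\). This is the same mean-value step as in Lemma~\ref{lem:power-difference}, and the bound \(\norm{v^q_t}\leq2\) holds because the truncated weighted point still lies in \(\B\). It yields
\[
 \sum_{t=1}^n\norm{v_t^p}^2\leq m(w)^2(n+1)+4\sum_{t\geq1}T_t=m(w)^2(n+1)+4a,
\]
using the identity \(\sum_{t\geq1}T_t=a\) from the proof of Lemma~\ref{lem:weighted-startup}.

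Second, the right side is independent of the input. So taking the supremum, dividing by \(n\), and letting \(n\to\infty\) gives \(\A_{d,2}[w]\leq m(w)^2\).

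\emph{The case \(0<\alpha<2\), and where the difficulty lies.} Power-mean monotonicity gives
\[
 \frac1n\sum_t\norm{v_t}^\alpha\leq\left(\frac1n\sum_t\norm{v_t}^2\right)^{\alpha/2},
\]
and the previous bound then gives \(\A_{d,\alpha}[w]\leq m(w)^\alpha\). Combined with the lower bound, this proves Eq.\eqref{eq:weighted-unit-circle-exact}.

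The only delicate step is the truncation. The Parseval bound must be applied to a square-summable sequence, and the boundary error must stay \(O(1)\) uniformly over inputs. The finite mean lag \(a\) controls exactly this through \(\sum_tT_t=a\); without it the error would only be \(o(n)\), which still suffices for the asymptotic value.
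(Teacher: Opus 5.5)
Your proof is correct and follows the paper's argument. The rotating planar input at a maximizing frequency gives the lower bound for every $\alpha>0$, and Parseval's identity with the unit-circle maximum, followed by power-mean monotonicity, gives the upper bound for $0<\alpha\leq2$.

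The only difference is where you truncate. The paper truncates the weights at lag $L$, so that $F_{w,L}$ is a polynomial, and then lets $n\to\infty$ before $L\to\infty$. You instead truncate the input in time and apply Parseval directly to the $\ell^1$ kernel with transfer function $F_w(e^{-\mathrm i\omega})$. You then absorb the boundary error through $\sum_{t\geq1}T_t=a$. This gives the explicit input-independent bound $m(w)^2(n+1)+4a$ without a double limit, whereas the paper's version avoids Fourier analysis of an infinite-support kernel and does not use the finite mean lag in this step.
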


\begin{proof}
For \(L\geq0\), truncate the weighted point and the insertion vector by setting
\[
 x_{t-1}^{w,L}=\sum_{j=0}^Lw_jp_{t-1-j},
 \qquad
 v_t^{w,L}=p_t-x_{t-1}^{w,L},
\]
and let
\[
 F_{w,L}(z)=1-\sum_{j=0}^Lw_jz^{j+1},
 \qquad
 m_L(w)=\max_{|z|=1}|F_{w,L}(z)|.
\]
Fix the points needed for the insertions from \(1\) to \(n\), and set all
points outside the index interval \([-L,n]\) equal to zero.  Define the
vector-valued trigonometric polynomial
\[
 P(\omega)=\sum_{t=-L}^n p_te^{-\mathrm{i}t\omega}.
\]
The bilateral sequence
\(v_t^{w,L}=p_t-\sum_{j=0}^Lw_jp_{t-1-j}\) has Fourier series
\[
 F_{w,L}(e^{-\mathrm{i}\omega})P(\omega).
\]
Parseval's identity, applied coordinatewise on the complete bilateral
sequence, therefore gives
\begin{align*}
 \sum_{t\in\mathbb Z}\norm{v_t^{w,L}}^2
 &=\frac1{2\pi}\int_{-\pi}^{\pi}
   |F_{w,L}(e^{-\mathrm{i}\omega})|^2\norm{P(\omega)}^2\,d\omega\\
 &\leq m_L(w)^2\sum_{t=-L}^n\norm{p_t}^2
 \leq(n+L+1)m_L(w)^2.
\end{align*}
Restricting the sum on the left to \(1\leq t\leq n\) gives the required
finite upper bound.
Moreover,
\[
 \norm{v_t^w-v_t^{w,L}}\leq\sum_{j>L}w_j.
\]
Both insertion vectors have norm at most two, so the difference between their
squared norms is at most four times the right-hand side.  Since \(w\) is
summable, \(F_{w,L}\) converges uniformly to \(F_w\) on the unit circle and
the omitted mass tends to zero.  First let \(n\) tend to infinity and then let
\(L\) tend to infinity.  We obtain
\[
 \limsup_{n\to\infty}\sup_{(p_t)}
 \frac1n\sum_{t=1}^n\norm{v_t^w}^2
 \leq m(w)^2.
\]
For \(0<\alpha\leq2\), monotonicity of power means gives the corresponding
upper bound in Eq.\eqref{eq:weighted-unit-circle-exact}.

For the reverse inequality, identify the first two coordinates with the
complex plane.  For a fixed \(\omega\), use the rotating input
\[
 p_t=(\cos(\omega t),\sin(\omega t),0,\ldots,0).
\]
Every insertion vector then has norm \(|F_w(e^{-i\omega})|\).  Taking a value
of \(\omega\) that maximizes this expression proves
 Eq.\eqref{eq:weighted-unit-circle-lower} for every positive \(\alpha\), and proves
 equality when combined with the upper bound for \(0<\alpha\leq2\).
\end{proof}

The assumption \(d\geq2\) is used for the rotating input in the lower bound.
The Parseval upper bound is valid in every dimension.

Two optimization problems are involved.  For independent uniform input and a
fixed number \(N\geq2\) of insertions, Theorem~\ref{thm:finite-quadratic} gives the
unique parameter \(\gamma_N^*\) minimizing the expected total quadratic cost
among all \(\gamma\)-strategies.  The present subsection considers adversarial
input and the broader comparison class of all fixed weighted averages with a
prescribed mean lag.  Set
\[
 \operatorname{OPT}_{d,\alpha}(a)
 =\inf\{\A_{d,\alpha}[w]:w\text{ satisfies Eq.\eqref{eq:weighted-mean-lag}
 with mean lag }a\}.
\]
\medskip

\begin{theorem}
\label{thm:weighted-constant-approximation}
Let \(d\geq2\), \(a\geq0\), and
\[
 \gamma=\frac{a}{a+1},\qquad
 \kappa=\frac{2}{\pi(1-\log(\pi/2))}<1.161.
\]
For every \(0<\alpha\leq3\),
\begin{equation}
 \A_{d,\alpha}(\gamma)
 \leq \kappa^\alpha\operatorname{OPT}_{d,\alpha}(a).
 \label{eq:weighted-constant-approximation}
\end{equation}
 Moreover, for each fixed \(0<\alpha\leq3\), as \(a\) tends to infinity,
\begin{equation}
 \frac{\A_{d,\alpha}(\gamma)}
 {\operatorname{OPT}_{d,\alpha}(a)}
 \leq
 1+\frac{\alpha(1/2-1/\pi)}a+O(a^{-2}).
 \label{eq:weighted-asymptotic-approximation}
\end{equation}
\end{theorem}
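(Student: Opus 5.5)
The plan is to reduce the statement to a purely analytic extremal inequality for the unit-circle maximum \(m(w)\). With \(\gamma=a/(a+1)\), Theorem~\ref{thm:through-three} gives, for every \(0<\alpha\leq3\),
\[
 \A_{d,\alpha}(\gamma)=\left(\frac2{1+\gamma}\right)^\alpha
 =\left(\frac{2(a+1)}{2a+1}\right)^\alpha .
\]
On the other side, since \(d\geq2\), the lower bound Eq.\eqref{eq:weighted-unit-circle-lower} of Theorem~\ref{thm:weighted-unit-circle} holds for every positive \(\alpha\). It gives \(\operatorname{OPT}_{d,\alpha}(a)\geq\mu(a)^\alpha\), where \(\mu(a)=\inf m(w)\) over nonnegative probability weights with mean lag \(a\). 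Both sides are \(\alpha\)-th powers, so Eq.\eqref{eq:weighted-constant-approximation} follows from the single scalar inequality
\[
 \frac{2(b)}{2b-1}\leq\kappa\,\mu(a),\qquad b=a+1,
\]
and Eq.\eqref{eq:weighted-asymptotic-approximation} follows from an expansion \(\mu(a)\geq1+1/(\pi b)+O(b^{-2})\). Indeed, \(2b/(2b-1)=1+1/(2b)+O(b^{-2})\), and raising to the power \(\alpha\) gives the factor \(\alpha(1/2-1/\pi)/a\). The parameter \(\alpha\) thus disappears from the problem, which explains why the constant appears as \(\kappa^\alpha\). The reduction uses the exact value only through \(\alpha\leq3\), and the lower bound on OPT even for the full range.

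For the scalar problem, write \(K=J+1\geq1\), so that \(\E K=b\) and \(F_w(e^{\mathrm i\omega})=1-\E e^{\mathrm iK\omega}\). I would bound the maximum from below by a dual (averaging) argument. For any finite positive measure \(\nu\) on \([0,\pi]\) and any measurable phase \(\theta(\omega)\),
\[
 m(w)\,\nu([0,\pi])\geq\int\operatorname{Re}\!\left(e^{\mathrm i\theta(\omega)}F_w(e^{\mathrm i\omega})\right)d\nu(\omega)=\E\,\psi(K),
\]
where
\[
 \psi(k)=\int\operatorname{Re}\!\left(e^{\mathrm i\theta}(1-e^{\mathrm ik\omega})\right)d\nu .
\]
This expression is linear in the law of \(K\). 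It therefore suffices to find an affine function \(\ell(k)=c_0+c_1k\) with \(\psi\geq\ell\) on the integers \(k\geq1\), which gives \(m(w)\geq(c_0+c_1b)/\nu([0,\pi])\).

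The natural choice scales with \(b\): take \(\nu\) supported on an \(\omega\)-interval of order \(1/b\), for instance \([\pi/(2b),\pi/b]\). I would take it with density proportional to \(1/\omega\), and use a phase mixing the real part \(1-\cos k\omega\) with the imaginary part \(-\sin k\omega\). The \(1/\omega\) density should produce the \(\log(\pi/2)\) in \(\kappa\), and the endpoint \(\pi/b\) should produce the \(1/\pi\) in the asymptotics. Since \(\psi\) is bounded while \(k\) is unbounded, the affine minorant must be a supporting line of the lower convex envelope of \(\psi\) at the point \(k=b\). I would first try checking it with a tangent at \(k=b\), combined with an elementary comparison of \(1-\cos\) and \(\sin\) against linear functions on \([0,\pi]\).

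The main obstacle is the choice of \(\nu\) and \(\theta\) and the verification that \(\psi\) dominates the affine function uniformly in \(b\geq1\) and in integer \(k\). The same construction must also give the sharp constant \(\kappa\) for all \(a\) and the second-order term \(1/(\pi b)\) as \(a\to\infty\). If a single test measure does not achieve both, I would use one scale-invariant test measure for the uniform bound Eq.\eqref{eq:weighted-constant-approximation}. For the asymptotic bound I would use a separate point evaluation near \(\omega=\pi/b\), together with Jensen-type control of \(\E\cos(K\omega)\) through the fixed mean. Finally, the small-\(a\) regime, including \(a=0\) where \(\gamma=0\) and \(2b/(2b-1)=2\), should be checked directly using \(m(w)\geq|F_w(-1)|\).
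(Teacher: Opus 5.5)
Your reduction is correct and matches the paper's. With \(s=a+1\) (your \(b\)), Theorem~\ref{thm:through-three} gives \(\A_{d,\alpha}(\gamma)=m_\gamma^\alpha\), where \(m_\gamma=2s/(2s-1)\). Eq.~\eqref{eq:weighted-unit-circle-lower} gives \(\operatorname{OPT}_{d,\alpha}(a)\geq\mu(a)^\alpha\). Everything therefore hinges on a lower bound for \(m(w)\) that holds uniformly over nonnegative weights of mean lag \(a\).

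\textbf{The gap.} That lower bound is the whole content of the theorem, and your proposal does not supply it: you leave the choice of \(\nu\), \(\theta\) and the affine minorant as ``the main obstacle.'' Moreover, the construction you sketch cannot work.
\begin{itemize}
\item Since \(|\psi(k)|\leq2\nu([0,\pi])\) for every \(k\), any affine minorant \(\ell(k)=c_0+c_1k\) must have \(c_1\leq0\). Hence \(\ell(b)\leq\ell(1)\leq\psi(1)\).
\item Your bound therefore never exceeds \(\psi(1)/\nu([0,\pi])\leq\sup_{\omega\in\operatorname{supp}\nu}|1-e^{\mathrm i\omega}|\). So any usable certificate must already beat the single-lag profile \(F(z)=1-z\), which means putting its mass where \(|\omega|>\pi/3\).
\item For \(\nu\) on \([\pi/(2b),\pi/b]\) the bound is at most \(\pi/b\to0\). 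You need at least \(1/\kappa>0.86\) for every \(a\), and \(1+1/(\pi b)+O(b^{-2})\) asymptotically.
\item Concretely, the weights \(w_0=1-\varepsilon\), \(w_J=\varepsilon\) with \(J\varepsilon=a\) satisfy \(|F_w(e^{\mathrm i\omega})|\leq|\omega|+2\varepsilon\). So neither averaging over that arc nor your fallback point evaluation near \(\omega=\pi/b\) can certify anything.
\item The fixed mean gives no lower control of \(\E(1-\cos K\omega)\). It only gives an \emph{upper} bound on \(|F_w|\) near \(\omega=0\).
\item Separately, your \(|F_w(-1)|\) check gives \(\inf_w|F_w(-1)|=2\max\{1-a,0\}\). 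This suffices only for \(a\) up to about \(0.3\).
\end{itemize}

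\textbf{The missing idea.} You need to convert that upper bound near \(\omega=0\) into a lower bound elsewhere, using analyticity.
\begin{itemize}
\item \(F_w\) is analytic in the unit disk with \(F_w(0)=1\), so Jensen's formula gives \(\frac1{2\pi}\int_{-\pi}^{\pi}\log|F_w(re^{\mathrm i\omega})|\,d\omega\geq0\).
\item On \(|\omega|\leq\beta\), the paper uses \(|F_w(re^{\mathrm i\omega})|\leq s(1-r+|\omega|)\); this is where the mean lag enters. On the remaining arc it uses \(|F_w|\leq M=m(w)\).
\item Letting \(r\to1\) gives \(\log M\geq\beta(1-\log(s\beta))/(\pi-\beta)\). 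Choosing \(\beta=\pi y\) with \(\pi sye^{-y}=1\) gives \(M\geq e^y\).
\item Then \(m_\gamma/M\leq2/(2e^y-\pi y)\leq\kappa\), with the minimum of \(2e^y-\pi y\) at \(y=\log(\pi/2)\). That is where \(\log(\pi/2)\) comes from, not from a \(1/\omega\) density.
\item Finally, \(e^y=1+1/(\pi s)+O(s^{-2})\) yields Eq.~\eqref{eq:weighted-asymptotic-approximation}.
\end{itemize}
The scale \(1/s\) is where \(|F_w|\) is forced to be \emph{small}. The certificate that \(|F_w|\) is large lives on the complementary arc. The argument is uniform in \(a\geq0\), so no separate small-\(a\) check is needed.
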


For \(\alpha=2\), the exponential rule is therefore a
\(\kappa^2\)-approximation, with \(\kappa^2<1.348\), for every mean lag \(a\).
Because the exponential rule is one of the admissible weighted averages, its
ratio to \(\operatorname{OPT}_{d,2}(a)\) is at least one.  The second conclusion
of the theorem shows that this ratio converges to one as \(a\) tends to
infinity.  Hence the exponential rule is asymptotically optimal among all
nonnegative weight sequences with the same mean lag.
The constant \(\kappa\) follows from the uniform analytic bound in the proof.
Determining the best possible factor for finite \(a\) remains open.

\begin{proof}[Proof of Theorem~\ref{thm:weighted-constant-approximation}]
Lemma~\ref{lem:weighted-startup} identifies the \(\gamma\)-strategy with the
one-sided implementation of the fixed weights
\(w_j=(1-\gamma)\gamma^j\).  Their mean lag is
\(\gamma/(1-\gamma)=a\), which gives \(\gamma=a/(a+1)\).

We first derive a lower bound that holds for every weight sequence of mean lag
\(a\).  Let \(s=a+1\).  Since the incoming point \(p_t\) is one insertion step
after \(p_{t-1}\), \(s\) is the weighted mean number of insertion steps from a
 contributing point to \(p_t\).  Let \(M=m(w)\).  Since \(F_w(0)=1\), the
 submean inequality for the subharmonic function \(\log|F_w|\), equivalently
 Jensen's formula~\cite[Sec.~9.1.2]{Krantz1999}, gives, for \(0<r<1\),
\begin{equation}
 0\leq\frac1{2\pi}\int_{-\pi}^{\pi}
       \log|F_w(re^{i\omega})|\,d\omega.
 \label{eq:jensen-average}
\end{equation}
Zeros of \(F_w\) on the circle cause only integrable logarithmic singularities.
The displayed inequality follows by taking radii with no boundary zero and
passing to the limit, which is also the standard subharmonic interpretation of
the circular mean.
The identity
\[
 F_w(re^{i\omega})
 =\sum_{j\geq0}w_j
   \bigl(1-r^{j+1}e^{i(j+1)\omega}\bigr)
\]
and the inequalities
\[
 1-r^{j+1}\leq(j+1)(1-r),
 \qquad
 |1-e^{i(j+1)\omega}|\leq(j+1)|\omega|
\]
show that
\[
 |F_w(re^{i\omega})|
 \leq s(1-r+|\omega|).
\]
The maximum-modulus principle~\cite[Secs.~5.4.1--5.4.2]{Krantz1999} also gives
\(|F_w(re^{i\omega})|\leq M\).  Split the integral in
Eq.\eqref{eq:jensen-average} at an arbitrary \(b\in(0,\pi)\), use the first bound
on \(|\omega|\leq b\), and use the second bound on the remaining arc.  Letting
\(r\) tend to one is justified directly by
\[
 \int_0^b\log\bigl(s(1-r+\omega)\bigr)\,d\omega
 \longrightarrow b\bigl(\log(sb)-1\bigr).
\]
Indeed, the integral on the left equals
\[
 (b+1-r)\log\bigl(s(b+1-r)\bigr)
 -(1-r)\log\bigl(s(1-r)\bigr)-b.
\]
The resulting inequality is
\begin{equation}
 \log M
 \geq\frac{b[1-\log(sb)]}{\pi-b}.
 \label{eq:jensen-lower-bound}
\end{equation}

There is a unique \(y\in(0,1)\) satisfying
\[
 \pi sy e^{-y}=1.
\]
To verify existence and uniqueness, let \(h_s(y)=\pi sye^{-y}\).  This
function is strictly increasing on \((0,1)\), with \(h_s(0)=0\) and
\(h_s(1)=\pi s/e>1\).
Choose \(b=\pi y\) in Eq.\eqref{eq:jensen-lower-bound}.  Since
\(s\pi y=e^y\), this gives \(M\geq e^y\).  For exponential weighting with
mean lag \(a\), let
\[
 m_\gamma
 =1+\frac1{2a+1}=\frac{2s}{2s-1}.
\]
Using \(s=e^y/(\pi y)\), we obtain
\begin{equation}
 \frac{m_\gamma}{M}
 \leq\frac2{2e^y-\pi y}.
 \label{eq:gamma-weight-ratio}
\end{equation}
The denominator on the right is minimized at
\(y=\log(\pi/2)\), and its minimum is
\(\pi(1-\log(\pi/2))\).  Therefore
\[
 \frac{m_\gamma}{M}\leq \kappa.
\]

Theorem~\ref{thm:weighted-unit-circle} gives
\(\A_{d,\alpha}[w]\geq M^\alpha\) for every \(\alpha>0\), while
Theorem~\ref{thm:through-three} gives
\(\A_{d,\alpha}(\gamma)=m_\gamma^\alpha\) through
\(\alpha=3\).  Taking the infimum over \(w\) proves
Eq.\eqref{eq:weighted-constant-approximation}.

Finally, the defining equation for \(y\) gives, as \(s\) tends to infinity,
\[
 y=\frac1{\pi s}+O(s^{-2}),
 \qquad
 e^y=1+\frac1{\pi s}+O(s^{-2}),
 \qquad
 m_\gamma
 =1+\frac1{2s}+O(s^{-2}).
\]
Since \(s=a+1\), expanding \((m_\gamma/e^y)^\alpha\) gives
\[
 \left(\frac{m_\gamma}{e^y}\right)^\alpha
 =1+\frac{\alpha(1/2-1/\pi)}a+O(a^{-2}).
\]
The lower bound \(M\geq e^y\), followed by the two adversarial cost
identities used above, proves
Eq.\eqref{eq:weighted-asymptotic-approximation}.
\end{proof}

\section{Numerical evaluation}
\label{sec:computation}

This section evaluates the constants, the \(N\)-dependent calibrations, the
stationary radial laws, and the periodic adversarial constructions that enter
the theoretical results.  Online Resource~1 contains table-ready and
plot-ready CSV files for the numerical tables and figures, together with
recorded seeds and compact validation summaries.  Source code, build files,
complete calibration and adversarial grids, and regeneration instructions are
in the versioned repository cited in the Data and code availability statement.
Deterministic Gauss quadrature computes the radial and angular
integrals.  Monte Carlo experiments use recorded seeds, and a dense unit-circle
calculation evaluates the finite-support weight optimization, with a certified
continuous-maximum error bound.  The theorems in
the preceding sections establish the stated guarantees analytically.  The
calculations below quantify them and test their implementation over a broader
range of dimensions and parameters.

The C++ experiments and geometric validation were run using CGAL~6.1.1,
installed through vcpkg, in C++17 mode.  The versioned repository records the
build configuration, seed families, parameter grids, sample sizes, and
regeneration commands.

\subsection{Optimizer constants and calibration by input size}

Table~\ref{tab:optimizer-constants} evaluates Eq.\eqref{eq:H-constant}, the
optimizer in Eq.\eqref{eq:optimal-square-root-constant}, and the closed bounds
in Eq.\eqref{eq:closed-lambda-bounds}.  The values of \(\alpha\) range from
\(0.1\) to \(3\), with four representatives in the sublinear regime.  For
\(d=1\) and \(\alpha<1\), the hypothesis \(d+\alpha>2\) fails, so the
corresponding square-root calibrations are omitted.  The expanded dimension range shows that
\(\lambda^*_{d,1}\) decreases slowly toward its high-dimensional value, whereas
\(\lambda^*_{d,3}\) increases slowly.  The identity
\(\lambda^*_{d,2}=1\) is recovered to numerical precision for every displayed
dimension.  The complete grid in the repository contains
\(d=1,\ldots,100\), 195 values of
\(\alpha\) from \(0.01\) to \(20\), and 70 input sizes through \(10^7\).

For \(\alpha=2\), the numerical data in Online Resource~1 compare the unique exact
optimizer from Eq.\eqref{eq:optimal-gamma} with the first two terms of
Eq.\eqref{eq:optimal-gamma-asymptotic}.  The second-order approximation is
already accurate at \(N=16\).  These values provide a direct numerical check of
the scaled parameter and cost corrections in
Eqs.\eqref{eq:optimal-gamma-asymptotic} and
\eqref{eq:optimal-cost-asymptotic}.

The table also shows that the square-root calibration is numerically stable
across dimension.  Over the displayed range, dimension changes the optimal
coefficient much less than changing the power.  The exact quadratic identity
therefore separates a structural case from the two monotone trends seen for
\(\alpha=1\) and \(\alpha=3\).

Each displayed value of \(\lambda^*_{d,\alpha}\) gives the calibration
\(\gamma_N=1-\lambda^*_{d,\alpha}/\sqrt N\) directly.  Values below one make
the geometric weights decay more slowly than in the quadratic case, while
values above one make them decay faster.  Thus the table records both the
asymptotic coefficient and the parameter adjustment required at a specified
input size.

\begin{table}[!h]
\caption{Square-root calibration across dimensions.  The table gives the
coefficient \(H_{d,\alpha}\) in Eq.\eqref{eq:H-constant} and the minimizing
constant \(\lambda^*_{d,\alpha}\) in
Eq.\eqref{eq:optimal-square-root-constant}.  Dashes in the row \(d=1\) mark
sublinear parameter pairs outside the hypotheses of the square-root theorem.}
\label{tab:optimizer-constants}
\centering
\normalsize
\setlength{\tabcolsep}{3.0pt}
\begin{tabular*}{\textwidth}{@{\extracolsep{\fill}}rrrrrrrrrrr@{}}
\toprule
& \multicolumn{10}{c}{$H_{d,\alpha}$} \\
\cmidrule(l){2-11}
$d$ & $\alpha=0.1$ & $0.25$ & $0.5$ & $0.75$ & $1$ & $1.4$ & $1.8$ & $2$ & $2.6$ & $3$ \\
\midrule
1 & 0.0176 & 0.0434 & 0.0852 & 0.1261 & 0.1667 & 0.2318 & 0.2987 & 0.3333 & 0.4438 & 0.5250 \\
2 & 0.0249 & 0.0619 & 0.1229 & 0.1836 & 0.2445 & 0.3437 & 0.4466 & 0.5000 & 0.6709 & 0.7962 \\
3 & 0.0288 & 0.0720 & 0.1438 & 0.2161 & 0.2893 & 0.4093 & 0.5347 & 0.6000 & 0.8092 & 0.9627 \\
5 & 0.0329 & 0.0826 & 0.1663 & 0.2514 & 0.3385 & 0.4828 & 0.6348 & 0.7143 & 0.9698 & 1.1573 \\
10 & 0.0368 & 0.0927 & 0.1879 & 0.2861 & 0.3875 & 0.5574 & 0.7383 & 0.8333 & 1.1403 & 1.3659 \\
25 & 0.0395 & 0.0998 & 0.2036 & 0.3115 & 0.4239 & 0.6140 & 0.8181 & 0.9259 & 1.2755 & 1.5332 \\
50 & 0.0405 & 0.1025 & 0.2093 & 0.3209 & 0.4375 & 0.6354 & 0.8486 & 0.9615 & 1.3283 & 1.5989 \\
100 & 0.0410 & 0.1038 & 0.2123 & 0.3258 & 0.4446 & 0.6467 & 0.8648 & 0.9804 & 1.3564 & 1.6341 \\
\addlinespace[4pt]
& \multicolumn{10}{c}{$\lambda^*_{d,\alpha}$} \\
\cmidrule(l){2-11}
$d$ & $\alpha=0.1$ & $0.25$ & $0.5$ & $0.75$ & $1$ & $1.4$ & $1.8$ & $2$ & $2.6$ & $3$ \\
\midrule
1 & $--$ & $--$ & $--$ & $--$ & 1.0000 & 0.9967 & 0.9979 & 1.0000 & 1.0120 & 1.0247 \\
2 & 0.9977 & 0.9949 & 0.9914 & 0.9895 & 0.9890 & 0.9910 & 0.9962 & 1.0000 & 1.0159 & 1.0303 \\
3 & 0.9802 & 0.9795 & 0.9792 & 0.9800 & 0.9820 & 0.9872 & 0.9951 & 1.0000 & 1.0186 & 1.0342 \\
5 & 0.9602 & 0.9617 & 0.9649 & 0.9689 & 0.9736 & 0.9827 & 0.9937 & 1.0000 & 1.0220 & 1.0393 \\
10 & 0.9395 & 0.9432 & 0.9497 & 0.9568 & 0.9643 & 0.9775 & 0.9921 & 1.0000 & 1.0259 & 1.0453 \\
25 & 0.9236 & 0.9288 & 0.9378 & 0.9472 & 0.9569 & 0.9733 & 0.9908 & 1.0000 & 1.0294 & 1.0507 \\
50 & 0.9176 & 0.9233 & 0.9332 & 0.9434 & 0.9540 & 0.9716 & 0.9903 & 1.0000 & 1.0308 & 1.0529 \\
100 & 0.9144 & 0.9204 & 0.9307 & 0.9414 & 0.9524 & 0.9707 & 0.9900 & 1.0000 & 1.0316 & 1.0541 \\
\bottomrule
\end{tabular*}

\end{table}
\FloatBarrier

\FloatBarrier
\subsection{Uniform input}

Figure~\ref{fig:uniform-comparison} reports mean insertion cost for six
combinations of dimension and power, including \(\alpha<1\).  The exponential
rule uses \(\gamma_N=1-\lambda^*_{d,\alpha}/\sqrt N\), except that the exact
finite-\(N\) optimizer is used for \(\alpha=2\).  Before \(p_i\) is inserted,
the prefix-centroid rule uses
\(\bar p_{i-1}=i^{-1}\sum_{j=0}^{i-1}p_j\) as its attachment point.  Each point
in the figure averages between 30 and 6,000 independent trials, according to
\(N\); the shaded regions are normal-approximation 95 percent Monte Carlo
intervals.  Horizontal
lines give the exact expected costs of the center star and the input-order
path.  The separate vertical scales use the plotting area to expose the
finite-size differences in each parameter regime.

\begin{figure}[!h]
\centering
\includegraphics[width=\textwidth]{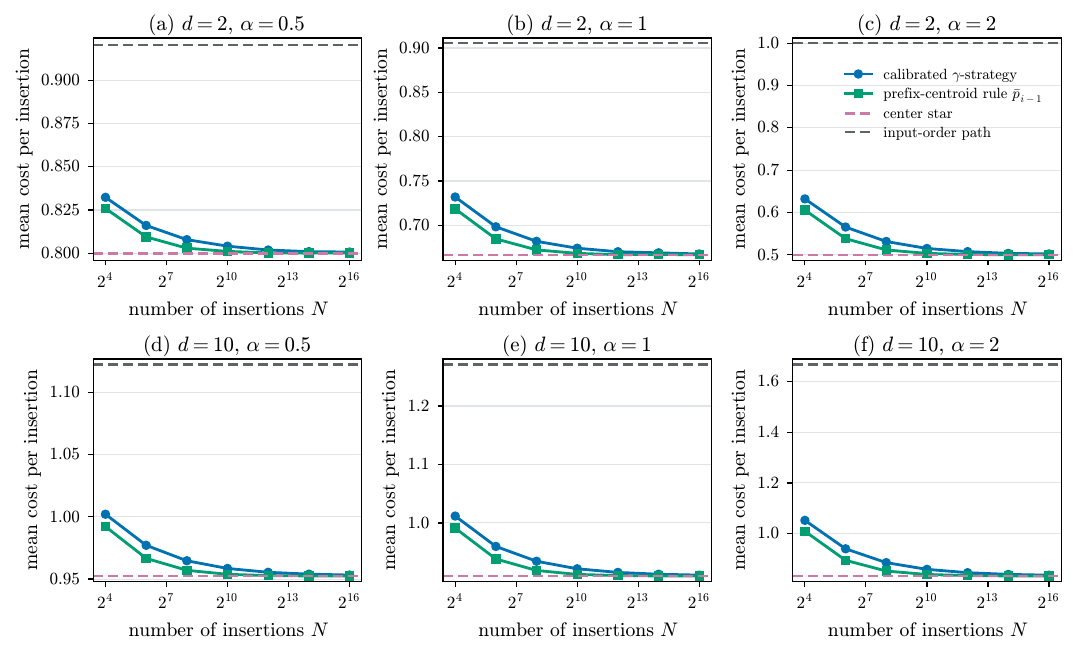}
\caption{Uniform-input comparison for \(d\in\{2,10\}\) and
\(\alpha\in\{0.5,1,2\}\).  The upper and lower rows correspond to \(d=2\)
and \(d=10\), respectively.  Points are Monte Carlo means, shaded regions are
normal-approximation 95 percent Monte Carlo intervals, and horizontal dashed
lines are exact expected costs.  The
prefix centroid before insertion \(i\) is
\(\bar p_{i-1}=i^{-1}\sum_{j=0}^{i-1}p_j\).}
\label{fig:uniform-comparison}
\end{figure}
\FloatBarrier

The comparison exhibits the intended interpolation.  At the displayed input
sizes, the calibrated exponential rule approaches the center-star level from
above and stays well below the input-order path, while retaining a
constant-memory update.  The finite-size decrease is most pronounced for
\(\alpha=2\), where the exact calibration is available, and it remains visible
for ordinary Euclidean length and for the sublinear power \(\alpha=0.5\).  The
prefix centroid is a useful numerical comparator because it uses the entire
preceding sequence; its different memory scale is reflected in the separation
between its curve and the exponential curve.

\FloatBarrier
\subsection{Stationary radius across dimensions}
\label{subsec:numerical-radius}

Radial symmetry permits an exact one-dimensional simulation of the stationary
state radius in every dimension.  Conditional on \(R=\norm{x_i}\), let
\(\rho=\norm{p_{i+1}}\) and let \(T\) be the scalar product between independent
uniform directions on the unit sphere.  The update satisfies
\begin{equation}
 R_{\mathrm{next}}^2
 =\gamma^2R^2+(1-\gamma)^2\rho^2
  +2\gamma(1-\gamma)R\rho T.
 \label{eq:radial-simulation}
\end{equation}
Here \(\Pr(\rho\leq r)=r^d\).  For \(d\geq2\),
\((T+1)/2\) has the beta distribution with both parameters equal to
\((d-1)/2\); for \(d=1\), \(T\) takes the values \(-1\) and \(1\) with equal
probability.  Equation~\eqref{eq:radial-simulation} therefore avoids storing
\(d\)-dimensional vectors while preserving the exact transition law.

For fixed \(\gamma\), the stationary distribution does not depend on
\(\alpha\).  Figure~\ref{fig:stationary-radius-dimension} therefore holds
\(\gamma\) fixed within each panel and varies it across nine panels.  It
contains 5 million retained observations for each of the 450 displayed pairs,
giving 2.25 billion observations over \(d=1,\ldots,50\).  Each horizontal row
is divided into 500 radial bins.  Its color is the empirical density divided
by the maximum density in that row, so the blue-to-red scale runs from zero to
one and reveals the location of the radial mass at every dimension.  The
superposed curve is the exact root mean square radius
\[
 \left(\frac{d}{d+2}\frac{1-\gamma}{1+\gamma}\right)^{1/2}
\]
from Eq.\eqref{eq:x-second}.  The simulation is checked against the exact
second moment for every displayed row.  The largest absolute discrepancy is
\(1.71\times10^{-4}\), and the largest relative discrepancy is \(0.214\%\).

\begin{figure}[!h]
\centering
\includegraphics[width=0.98\textwidth]{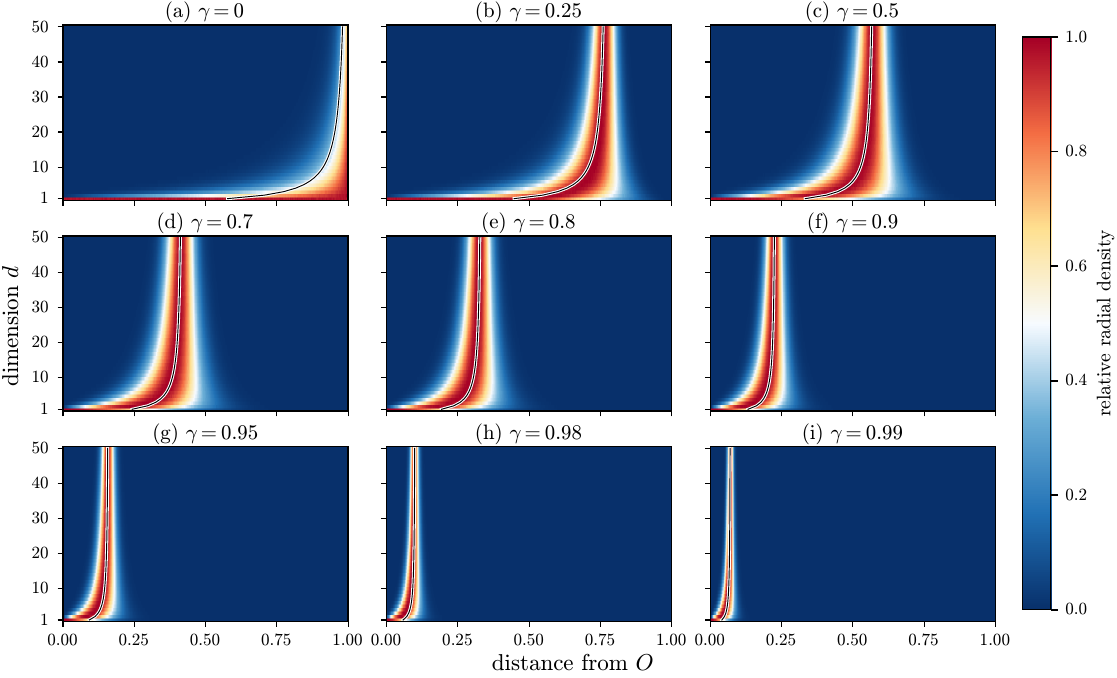}
\caption{Empirical stationary distribution of \(\norm{x_\gamma}\) for
\(\gamma\in\{0,0.25,0.5,0.7,0.8,0.9,0.95,0.98,0.99\}\).  The color in each
row is its radial density normalized to \([0,1]\), and the thin curve gives
the exact root mean square radius.  Each panel uses a fixed value of
\(\gamma\), and the dimension increases from 1 to 50 along its vertical
axis.}
\label{fig:stationary-radius-dimension}
\end{figure}
\FloatBarrier

For each fixed \(\gamma\), the high-density band becomes narrower as the
dimension increases, consistently with concentration of the uniform input
toward the boundary of the ball.  Increasing \(\gamma\) shifts the entire band
toward the center, and the root mean square curve follows that shift throughout
the displayed range.  These two effects are distinct: dimension sharpens the
radial profile, whereas the memory parameter controls its location.

\FloatBarrier
\subsection{Adversarial periodic sequences}
\label{subsec:numerical-adversarial}

Theorem~\ref{thm:through-three} proves, in every dimension, that antipodal
alternation attains the adversarial value for \(0<\alpha\leq3\).  We checked the
periodic-orbit implementation on 3.6 million seeded random sequences with periods
in \(\{2,3,4,6,8\}\), dimensions \(1,2,3,10,100\), four values of \(\gamma\),
and six values of \(\alpha\).  Through \(\alpha=3\), the largest observed ratio
to the alternating cost was \(1+4\times10^{-16}\).  This is a numerical
consistency check of the implementation over the sampled periodic families.

The block family in Proposition~\ref{prop:blocks} gives a more informative
diagnostic beyond the proved range.  Here a block of length \(m\) means \(m\)
successive copies of a unit vector followed by \(m\) successive copies of its
antipode, with this pattern repeated periodically.  Since the sequence
lies on a diameter, it is available in every dimension.  Figure~\ref{fig:block-transition}
maximizes its exact cost over \(1\leq m\leq250\), for 99 values of \(\gamma\)
and 4,000 values of \(\alpha\).  The ratio remains one throughout the range
\(0<\alpha\leq3\).  Above three, the length-two block is the first competitor
to exceed alternation for each value in Table~\ref{tab:block-transition};
longer blocks subsequently give much larger ratios.  For example, the crossing
occurs at \(\alpha\simeq3.2556\) when \(\gamma=0.5\), and at
\(\alpha\simeq3.0056\) when \(\gamma=0.9\).  This calculation illustrates the
change of worst-case behavior beyond the cubic range and leaves the complete
optimization problem for \(\alpha>3\) open.

\begin{figure}[!h]
\centering
\includegraphics[width=0.98\textwidth]{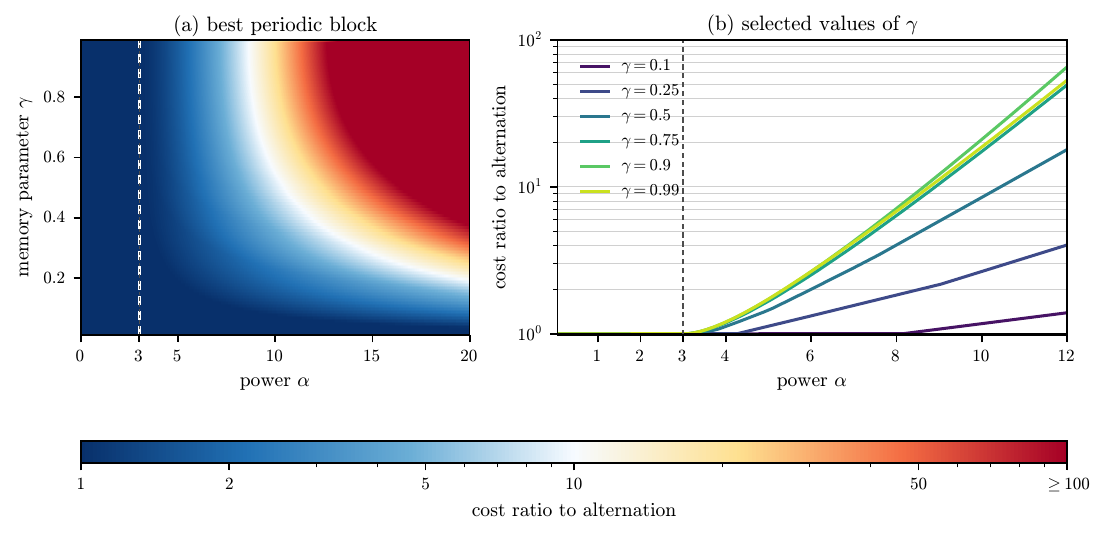}
\caption{Periodic block costs relative to antipodal alternation.  For the same
\((\alpha,\gamma)\), the plotted ratio is the mean cost of the best periodic
block with \(1\leq m\leq250\), divided by the mean cost of the alternating
sequence \(m=1\).  Panel (a) shows the full parameter grid; the dashed line
marks \(\alpha=3\).  Panel (b) shows sections for selected values of
\(\gamma\).  Every sequence lies on a diameter, so the calculation is
dimension-independent.}
\label{fig:block-transition}
\end{figure}
\FloatBarrier

The left panel makes the boundary at \(\alpha=3\) visible over the full
parameter range.  The sections in the right panel show that the departure from
alternation is continuous in cost and can be rapid in the maximizing block
length.  The transition also moves toward three as \(\gamma\) increases, a
pattern consistent with the critical values reported below.

\begin{table}[!h]
\caption{Transition within the periodic block family.  The second column is
the first power at which a block with \(m\leq250\) has larger mean cost than
alternation.  For \(a\in\{4,6,8,10\}\), \(m^*(a)\) is the maximizing block length
and \(R^*(a)\) is its mean cost divided by the alternating mean cost.}
\label{tab:block-transition}
\centering
\normalsize
\setlength{\tabcolsep}{3.8pt}
\begin{tabular*}{\textwidth}{@{\extracolsep{\fill}}rrrrrrrrrr@{}}
\toprule
$\gamma$ & transition $\alpha$ & $m^*(4)$ & $R^*(4)$ & $m^*(6)$ & $R^*(6)$ & $m^*(8)$ & $R^*(8)$ & $m^*(10)$ & $R^*(10)$ \\
\midrule
0.10 & 8.1203 & 1 & 1.0000 & 1 & 1.0000 & 1 & 1.0000 & 2 & 1.1740 \\
0.25 & 4.2480 & 1 & 1.0000 & 2 & 1.3261 & 2 & 1.8350 & 3 & 2.6586 \\
0.50 & 3.2556 & 3 & 1.1235 & 4 & 2.0107 & 5 & 4.0230 & 5 & 8.4865 \\
0.75 & 3.0418 & 7 & 1.1870 & 10 & 2.5153 & 11 & 6.3826 & 12 & 17.4155 \\
0.90 & 3.0056 & 18 & 1.2027 & 26 & 2.6524 & 30 & 7.1309 & 34 & 21.0437 \\
0.99 & 3.0001 & 187 & 1.2049 & 250 & 2.6594 & 250 & 6.8259 & 250 & 18.6879 \\
\bottomrule
\end{tabular*}

\end{table}
\FloatBarrier

Across the reported memory parameters, the first improving member of the block
family has length two.  At larger powers, the maximizing length grows and the
ratio can become substantially larger than one.  Thus the transition column
locates the loss of optimality of alternation within this family, while the
remaining columns quantify how quickly the competing periodic structure
strengthens.

\FloatBarrier
\subsection{Fixed-weight comparison}

The last calculation restricts Eq.\eqref{eq:weighted-mean-lag} to finite-support
weights.  We compute 24 values of the prescribed mean lag between 0.1 and 12
and display 12 representative values in Table~\ref{tab:finite-support-weights}.
The support endpoint is \(\max\{12,\lceil3a\rceil\}\).  For each prescribed mean
lag, \(|F_w|\) is minimized on a grid of 1,025 frequencies and evaluated on a
grid of 65,537 equally spaced frequencies in \([0,\pi]\).  Let
\(a_w=\sum_jjw_j\) for a computed profile.  If \(\widehat m\) is the largest
evaluated value and \(h=\pi/65{,}536\), then
\[
 \widehat m\leq m(w)\leq \widehat m+\frac{a_w+1}{2}h.
\]
Indeed, \(|dF_w(e^{\mathrm{i}\omega})/d\omega|\leq a_w+1\), and every frequency
lies within \(h/2\) of the grid.  The maximum discrepancy between \(a_w\) and
the prescribed value \(a\) is \(1.7\times10^{-13}\).  Table~\ref{tab:finite-support-weights}
reports this interval and the corresponding interval
\([b_-(a),b_+(a)]\) for \(b(a)=m_\gamma/m(w)\).  The last column is the
quantity to inspect: values above one exhibit a finite-support profile with
smaller adversarial cost.  For \(d\geq2\) and \(0<\alpha\leq2\), feasibility
of this profile and Theorem~\ref{thm:weighted-constant-approximation} bracket
the unrestricted approximation ratio between \(b_-(a)^\alpha\) and
\(\kappa^\alpha<1.161^\alpha\).

\begin{table}[!h]
\caption{Numerical finite-support comparison under the prescribed mean-lag
constraint.  The third column is a certified interval for
\(m(w)=\max_{|z|=1}|1-z\sum_jw_jz^j|\) for the computed profile.  The
exponential maximum uses weights with the same mean lag.  The last column is
the corresponding interval for \(b(a)=m_\gamma/m(w)\).  Values above one
measure a concrete improvement over exponential weights; for
\(0<\alpha\leq2\), \(b_-(a)^\alpha\) is a certified lower bound on the
approximation ratio, whose analytic upper bound is
\(\kappa^\alpha<1.161^\alpha\).}
\label{tab:finite-support-weights}
\centering
\normalsize
\setlength{\tabcolsep}{6.0pt}
\begin{tabular*}{\textwidth}{@{\extracolsep{\fill}}ccccc@{}}
\toprule
mean lag $a$ & support & certified $m(w)$ & exponential max. & factor \\
\midrule
0.1 & 0--12 & [1.79999, 1.80003] & 1.83333 & [1.01850, 1.01852] \\
0.25 & 0--12 & [1.56135, 1.56139] & 1.66667 & [1.06742, 1.06745] \\
0.5 & 0--12 & [1.39615, 1.39620] & 1.50000 & [1.07435, 1.07438] \\
0.75 & 0--12 & [1.30756, 1.30761] & 1.40000 & [1.07066, 1.07070] \\
1 & 0--12 & [1.25186, 1.25192] & 1.33333 & [1.06503, 1.06508] \\
1.5 & 0--12 & [1.18543, 1.18550] & 1.25000 & [1.05440, 1.05447] \\
2 & 0--12 & [1.14702, 1.14711] & 1.20000 & [1.04611, 1.04619] \\
3 & 0--12 & [1.10444, 1.10455] & 1.14286 & [1.03468, 1.03478] \\
4 & 0--12 & [1.09275, 1.09287] & 1.11111 & [1.01669, 1.01681] \\
6 & 0--18 & [1.06408, 1.06426] & 1.07692 & [1.01190, 1.01207] \\
8 & 0--24 & [1.04895, 1.04918] & 1.05882 & [1.00919, 1.00941] \\
12 & 0--36 & [1.03328, 1.03361] & 1.04000 & [1.00619, 1.00650] \\
\bottomrule
\end{tabular*}

\end{table}
\FloatBarrier

The largest base factor occurs at \(a=0.5\), with interval
\([1.07435,1.07438]\).  For quadratic cost, its lower endpoint gives a
\(15.4\%\) improvement over exponential weights.  The upper endpoint is below
\(1.01\) by \(a=8\) and decreases further at \(a=12\).  Thus the table locates
the largest short-memory gain and shows its decay toward the asymptotic
near-optimality in Eq.\eqref{eq:weighted-asymptotic-approximation}.  Online
Resource~1 contains all 24 rows and the computed weight vectors.

\FloatBarrier
\section{Concluding remarks}

The \(\gamma\)-strategy combines two properties of the endpoint
constructions.  Under uniform input, increasing \(\gamma\) improves every
positive moment of the stationary insertion length.  Under arbitrary input
sequences, for each fixed \(0\leq\gamma<1\), its exact asymptotic mean cost for
\(0<\alpha\leq3\) approaches the
 unit worst-case scale of the center-\(O\) star as \(\gamma\) increases to one.  The
attachment point remains a geometrically weighted average of the observed
points.  These results formalize the two properties that motivate the rule:
 its expected cost under uniform sampling retains the influence of the observed
 sequence, while its adversarial cost remains explicitly controlled.

For ordinary Euclidean length, the optimized rule has the same leading
 expected cost as the center-\(O\) star and a strictly smaller leading constant than
the input-order path and the star centered at \(p_0\).  For \(N\) insertions,
Theorems~\ref{thm:square-root-constant} and
\ref{thm:global-finite-optimizer} determine the asymptotic location of every
minimizing constant parameter and its \(\sqrt N\) correction for general
\(\alpha\) under the stated regularity condition.  The quadratic case has a
unique minimizer and an exact expression for every \(N\).
Corollary~\ref{cor:same-N} shows that the same square-root calibration has
the sharp distributional correction and adversarial mean cost
\(1+O(N^{-1/2})\) for the same value of \(N\).  Propositions~\ref{prop:running-average} and
\ref{prop:running-average-adversarial} clarify the role of the
constant-parameter restriction: averaging all preceding points gives a
\(\log N\) uniform-input correction and adversarial value one through the
quadratic range, with a mean lag that grows with the sequence.  The fixed
exponential rule supplies a time-homogeneous coefficient profile and the exact
cubic-range guarantee.

The comparison with general fixed nonnegative weighted averages provides a
second justification for the exponential weights.  Among all fixed
nonnegative weight sequences with the same mean lag, their adversarial cost is
within the factor \(\kappa^\alpha\) of
Theorem~\ref{thm:weighted-constant-approximation}, and the
sharper ratio in Eq.\eqref{eq:weighted-asymptotic-approximation} converges to
one as the mean lag grows.  This comparison class may use unbounded support and
 storage, as explained in Section~\ref{sec:weighted-adversarial}.  The
 reproducible computations in Section~\ref{sec:computation} quantify the
 calibration, the stationary radial concentration across dimensions, the
 transition within the periodic block family, and the finite-support
 comparison.  The appendices
 collect the stationary transform, higher moments, and endpoint limits that
 support the distributional analysis without obscuring the three main results.

 For every fixed \(0<\gamma<1\), periodic endpoint blocks produce a strictly
 larger adversarial mean cost than alternation at sufficiently high powers.
 Determining the complete adversarial value for \(\alpha>3\), including the
 smallest power at which alternation ceases to be optimal, remains open.
 Further questions include other convex bodies,
 nonuniform input distributions, and time-varying weighting rules with
 controlled mean lag.

\section*{Data and code availability}

Table-ready and plot-ready data for the numerical tables and figures, recorded
seeds, and compact validation summaries are provided in the supplementary
archive distributed with the versioned release of the project repository.
Source code, build files, and complete calibration and adversarial grids are
archived in the
\href{https://github.com/tashimir/sequential-euclidean-connections-exponential-memory}{versioned project repository}.
The study uses no external datasets.

\appendix

\section{Stochastic comparison for the running mean}
\label{app:running-mean}

The cost estimate in Proposition~\ref{prop:running-average} is sufficient for
the comparison in the main text.  The next result records a stronger property
of the same attachment point.

\begin{proposition}
\label{prop:sample-mean-peakedness}
Fix \(t\geq0\), let \(\lambda_0,\ldots,\lambda_t\geq0\) satisfy
\(\sum_{j=0}^t\lambda_j=1\), and let
\[
 y_\lambda=\sum_{j=0}^t\lambda_jp_j.
\]
If \(p\) is an independent uniform point in \(\B\), then
\begin{equation}
 \Pr(\norm{p-\bar x_t}>r)
 \leq \Pr(\norm{p-y_\lambda}>r)
 \qquad(r\geq0).
 \label{eq:sample-mean-stochastic}
\end{equation}
Consequently, the expected \(\alpha\)-power of the insertion length is no
larger for \(\bar x_t\) than for \(y_\lambda\), for every \(\alpha>0\).
\end{proposition}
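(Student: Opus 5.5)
The plan is to reuse the two-step argument from the proof of Theorem~\ref{thm:stochastic-monotonicity}. First we show that the state \(\bar x_t\) is more peaked than \(y_\lambda\). Then we transfer peakedness of the state to a stochastic comparison of the insertion length, using the radial monotonicity of Lemma~\ref{lem:translated-ball-overlap}.

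For the first step, note that the uniform coefficient vector \((1/(t+1),\ldots,1/(t+1))\) is majorized by every probability vector \(\lambda\in[0,\infty)^{t+1}\). This is the standard fact that the uniform vector is the minimum of the majorization order among vectors with the same sum. Concretely, the sum of the \(k\) largest entries of \(\lambda\) is at least \(k/(t+1)\), since the largest \(k\) entries average at least the overall mean. Both vectors have sum one, and \(p_0,\ldots,p_t\) are independent with the common symmetric log-concave uniform density on \(\B\). Lemma~\ref{lem:olkin-tong} then shows directly that \(\bar x_t=\sum_j(t+1)^{-1}p_j\) is more peaked than \(y_\lambda\). Unlike in Theorem~\ref{thm:stochastic-monotonicity}, no renormalization step is needed, because both sums already have total weight one. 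The case \(t=0\) is trivial, since then \(\lambda=(1)\) and the two points coincide.

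For the second step, fix \(r\ge0\) and condition on the state. Since \(p\) is independent of both states,
\[
 \Pr(\norm{p-z}\le r)=\frac{\operatorname{vol}(\B\cap B(z,r))}{\operatorname{vol}(\B)}=:\phi_r(\norm z).
\]
By Lemma~\ref{lem:translated-ball-overlap}, \(\phi_r\) is nonincreasing, and \(\phi_r(\norm z)=0\) when \(r=0\). We write
\[
 \E\phi_r(\norm{y})=\int_0^1\Pr\bigl(\phi_r(\norm y)>s\bigr)\,ds
\]
by the layer-cake formula. Each superlevel set \(\{z:\phi_r(\norm z)>s\}\) is a centered ball, either open or closed, or is empty or all of \(\R^d\). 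Peakedness gives the inequality for closed centered balls, which are compact, convex and symmetric. For open centered balls we pass to the limit through an increasing union of compact balls, by continuity from below. Hence \(\Pr(\norm{p-\bar x_t}\le r)\ge\Pr(\norm{p-y_\lambda}\le r)\), which is the complement of Eq.\eqref{eq:sample-mean-stochastic}.

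The moment statement follows from the identity \(\E Z^\alpha=\int_0^\infty\alpha r^{\alpha-1}\Pr(Z>r)\,dr\). The step I expect to need the most care is the measure-theoretic handling of the superlevel sets in the second step. Unlike the stationary case, no limit in \(n\) is required here, so the boundary-continuity issue from Theorem~\ref{thm:stochastic-monotonicity} disappears. The only remaining subtleties are the open-versus-closed balls, handled by monotone limits, and degenerate \(\lambda\) such as a point mass. In the degenerate case \(y_\lambda\) is a single uniform point, but Lemma~\ref{lem:olkin-tong} still applies because it allows zero coefficients.
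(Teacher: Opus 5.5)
Your proposal is correct and follows the paper's proof essentially step for step. Every probability vector majorizes the equal vector, and Lemma~\ref{lem:olkin-tong} then makes \(\bar x_t\) more peaked than \(y_\lambda\). The layer-cake representation of the radial, nonincreasing function \(z\mapsto\operatorname{vol}(\B\cap B(z,r))/\operatorname{vol}(\B)\), with open centered balls handled by continuity from below, transfers this to the insertion length, and the tail-integral formula gives the moments.
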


\begin{proof}
Every probability vector \((\lambda_0,\ldots,\lambda_t)\) majorizes the equal
vector \((1/(t+1),\ldots,1/(t+1))\).  The uniform density on \(\B\) is symmetric
and log-concave in the extended-value sense.  Lemma~\ref{lem:olkin-tong} gives
\[
 \Pr(\bar x_t\in K)\geq\Pr(y_\lambda\in K)
\]
for every compact convex set \(K\) symmetric about the origin.  For fixed
\(r\), let
\[
 h_r(z)=
 \frac{\operatorname{vol}(\B\cap B(z,r))}{\operatorname{vol}(\B)}
\]
for \(z\in\B\).  Lemma~\ref{lem:translated-ball-overlap} shows that \(h_r\) is
radial and nonincreasing, so its superlevel sets are balls centered at the
 origin.  Since \(0\leq h_r\leq1\), for any random vector \(y\) in \(\B\),
 \[
  \E h_r(y)=\int_0^1\Pr\bigl(h_r(y)>s\bigr)\,ds.
 \]
 Each set \(\{z:h_r(z)>s\}\) is either empty or a ball centered at the origin.
 Such an open ball is an increasing union of compact balls, so the peakedness
 inequality extends to it by continuity from below.  Applying that inequality
 inside the integral gives
 \(\E h_r(\bar x_t)\geq\E h_r(y_\lambda)\).  Averaging over the independent
point \(p\) gives Eq.\eqref{eq:sample-mean-stochastic}.  The moment comparison
follows from
\[
 \E X^\alpha
 =\int_0^\infty\alpha r^{\alpha-1}\Pr(X>r)\,dr.
\]
\end{proof}

\section{Exact transforms and fourth moments}
\label{app:stationary-transforms}

This appendix gives an exact transform description of the stationary state and
records the fourth moments used to check the high-memory limit in
Appendix~\ref{app:stationary-limits}.

\begin{lemma}
\label{lem:ball-transform}
If \(p\) is uniform in \(\B\), its characteristic function is
\[
 \psi_d(\norm\xi)
 =\E e^{\,\mathrm{i}\langle\xi,p\rangle}
 =2^{d/2}\Gamma\left(\frac d2+1\right)
   \frac{J_{d/2}(\norm\xi)}{\norm\xi^{d/2}},
 \qquad \psi_d(0)=1,
\]
where \(J_\nu\) denotes the Bessel function of the first kind and order
\(\nu\).
\end{lemma}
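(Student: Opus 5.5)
Since the uniform law on \(\B\) is invariant under orthogonal maps, \(\E e^{\mathrm{i}\langle\xi,p\rangle}\) depends only on \(\norm\xi\).  We may therefore take \(\xi=\rho e_1\) with \(\rho=\norm\xi\).  The computation then reduces to the one-dimensional marginal of the first coordinate.  With \(\omega_{d-1}\) the volume of the unit ball in \(\R^{d-1}\), the first coordinate \(p_1\) of a uniform point in \(\B\) has density
\[
 \frac{\omega_{d-1}}{\operatorname{vol}(\B)}(1-s^2)^{(d-1)/2},
 \qquad -1\leq s\leq1,
\]
because the slice \(\{p_1=s\}\) is a \((d-1)\)-ball of radius \((1-s^2)^{1/2}\).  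The normalizing constant equals
\[
 \frac{\Gamma(d/2+1)}{\sqrt\pi\,\Gamma((d+1)/2)}.
\]
For \(d=1\) the same formula holds with \(\omega_0=1\), and the density is \(1/2\) on \([-1,1]\).

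Next we apply Poisson's integral representation of the Bessel function.  For \(\nu>-1/2\),
\[
 J_\nu(\rho)=\frac{(\rho/2)^\nu}{\sqrt\pi\,\Gamma(\nu+1/2)}
 \int_{-1}^1 e^{\mathrm{i}\rho s}(1-s^2)^{\nu-1/2}\,ds .
\]
Take \(\nu=d/2\), so that \(\nu-1/2=(d-1)/2\) and \(\Gamma(\nu+1/2)=\Gamma((d+1)/2)\).  Then
\[
 \E e^{\mathrm{i}\rho p_1}
 =\frac{\Gamma(d/2+1)}{\sqrt\pi\,\Gamma((d+1)/2)}
  \cdot\frac{\sqrt\pi\,\Gamma((d+1)/2)}{(\rho/2)^{d/2}}J_{d/2}(\rho)
 =2^{d/2}\Gamma\left(\frac d2+1\right)\frac{J_{d/2}(\rho)}{\rho^{d/2}} .
\]
This is the claimed formula for \(\rho>0\).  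The value at \(\rho=0\) follows from the small-argument behaviour \(J_\nu(\rho)\sim(\rho/2)^\nu/\Gamma(\nu+1)\), which makes the right side tend to one, in agreement with \(\psi_d(0)=1\).

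The only step needing care is matching the normalizing constants.  This uses \(\operatorname{vol}(\B)=\pi^{d/2}/\Gamma(d/2+1)\) and \(\omega_{d-1}=\pi^{(d-1)/2}/\Gamma((d+1)/2)\).  The case \(d=1\) can also be checked directly: \(\sin\rho/\rho=\sqrt{\pi/2}\,J_{1/2}(\rho)/\rho^{1/2}\), and \(2^{1/2}\Gamma(3/2)=\sqrt{\pi/2}\).
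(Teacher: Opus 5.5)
Your proposal is correct and follows the paper's proof essentially step for step. Both arguments reduce by rotation invariance to the first-coordinate marginal obtained by slicing, apply Poisson's integral with \(\nu=d/2\), and recover \(\psi_d(0)=1\) from the small-argument asymptotics of \(J_\nu\). Your explicit matching of the normalizing constant and your direct \(d=1\) check are small additions that the paper leaves implicit.
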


\begin{proof}
Rotation invariance makes the characteristic function depend only on
\(r=\norm\xi\).  Slicing the ball orthogonally to the first coordinate gives
\[
 \psi_d(r)
 =\frac{\Gamma(d/2+1)}
        {\sqrt{\pi}\,\Gamma((d+1)/2)}
   \int_{-1}^1 e^{\mathrm{i}rt}
        (1-t^2)^{(d-1)/2}\,dt .
\]
Poisson's integral representation for the Bessel function
\cite[Eq.~10.9.4]{DLMF} gives
\[
 J_\nu(r)
 =\frac{(r/2)^\nu}
        {\sqrt{\pi}\,\Gamma(\nu+1/2)}
   \int_{-1}^1 e^{\mathrm{i}rt}
        (1-t^2)^{\nu-1/2}\,dt
 .
\]
With \(\nu=d/2\), this yields the stated expression for \(\psi_d(r)\).  The
value at the origin follows from
\(J_\nu(r)\sim(r/2)^\nu/\Gamma(\nu+1)\).
\end{proof}

\begin{theorem}
\label{thm:product}
The characteristic function of \(x_\gamma\) is
\begin{equation}
 \E e^{\,\mathrm{i}\langle\xi,x_\gamma\rangle}
 =\prod_{j=0}^\infty
 \psi_d\bigl((1-\gamma)\gamma^j\norm\xi\bigr),
 \label{eq:bessel-product}
\end{equation}
and the product converges locally uniformly in \(\xi\).
If \(p\) is an additional independent uniform point, then
\begin{equation}
 \E e^{\,\mathrm{i}\langle\xi,p-x_\gamma\rangle}
 =\psi_d(\norm\xi)
  \prod_{j=0}^\infty
  \psi_d\bigl((1-\gamma)\gamma^j\norm\xi\bigr).
 \label{eq:edge-bessel-product}
\end{equation}
\end{theorem}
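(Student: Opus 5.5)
The plan is to work from the series representation in Proposition~\ref{prop:stationary-series} and the independence of its summands. For \(n\geq0\), put \(S_n=(1-\gamma)\sum_{j=0}^n\gamma^jp_j\). Since \(p_0,\ldots,p_n\) are independent, the characteristic function of \(S_n\) factors:
\[
 \E e^{\,\mathrm{i}\langle\xi,S_n\rangle}
 =\prod_{j=0}^n\E e^{\,\mathrm{i}\langle(1-\gamma)\gamma^j\xi,p_j\rangle}
 =\prod_{j=0}^n\psi_d\bigl((1-\gamma)\gamma^j\norm\xi\bigr).
\]
The second equality uses Lemma~\ref{lem:ball-transform} with the argument \((1-\gamma)\gamma^j\xi\). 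By Proposition~\ref{prop:stationary-series}, \(S_n\to x_\gamma\) almost surely, and the integrands \(e^{\,\mathrm{i}\langle\xi,S_n\rangle}\) have modulus one. Dominated convergence then gives pointwise convergence of the partial products to \(\E e^{\,\mathrm{i}\langle\xi,x_\gamma\rangle}\), which proves Eq.\eqref{eq:bessel-product} as a limit of partial products.

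For locally uniform convergence, I would use the tail bound \(\norm{x_\gamma-S_n}\leq\gamma^{n+1}\). From \(|e^{\mathrm{i}u}-e^{\mathrm{i}v}|\leq|u-v|\), this gives
\[
 \bigl|\E e^{\,\mathrm{i}\langle\xi,x_\gamma\rangle}-\E e^{\,\mathrm{i}\langle\xi,S_n\rangle}\bigr|
 \leq\norm\xi\gamma^{n+1}.
\]
The bound is uniform for \(\norm\xi\leq R\), and it needs no information about the factors themselves. This step is where the main care is required. Because \(\psi_d\) has zeros (Bessel zeros), one cannot take logarithms of the whole product. The estimate above avoids this by giving the convergence of the partial products directly.

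For Eq.\eqref{eq:edge-bessel-product}, I would use that \(p\) is independent of \(x_\gamma\). The characteristic function of \(p-x_\gamma\) is therefore the product of \(\E e^{\,\mathrm{i}\langle\xi,p\rangle}\) and \(\E e^{-\mathrm{i}\langle\xi,x_\gamma\rangle}\). Since \(\psi_d\) is real and even, \(x_\gamma\) is symmetric by rotational invariance, and the second factor equals Eq.\eqref{eq:bessel-product}. The first factor is \(\psi_d(\norm\xi)\) by Lemma~\ref{lem:ball-transform}, which completes the proof.
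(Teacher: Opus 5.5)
Your proof is correct. It differs from the paper's only in how you obtain the locally uniform convergence.

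The paper works with the factors themselves. From the expansion \(\psi_d(r)=1-r^2/(2(d+2))+O(r^4)\) it bounds \(\sum_j|1-\psi_d((1-\gamma)\gamma^j\norm\xi)|\) by a geometric series, uniformly on compact sets of \(\xi\). This gives locally uniform convergence of the partial products. The paper then identifies the limit in a separate step, using the almost sure and \(L^q\) convergence of the partial sums together with convergence of characteristic functions.

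You merge these two steps into the single estimate \(|\E e^{\mathrm{i}\langle\xi,x_\gamma\rangle}-\E e^{\mathrm{i}\langle\xi,S_n\rangle}|\leq\norm\xi\gamma^{n+1}\). It follows from the deterministic tail bound and the Lipschitz property of \(u\mapsto e^{\mathrm{i}u}\). This one inequality identifies the limit and gives uniformity on \(\norm\xi\leq R\) with an explicit geometric rate. It uses nothing about \(\psi_d\) beyond its being the characteristic function in Lemma~\ref{lem:ball-transform}. Your dominated-convergence step is therefore redundant once this bound is in hand.

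The paper's route gives something different: convergence of the product in the classical absolute sense, \(\sum_j|1-a_j|<\infty\) uniformly on compacts. That is a statement about the factors alone, independent of the probabilistic interpretation. The same small-argument expansion of \(\psi_d\) is also reused in the proof of Theorem~\ref{thm:gaussian-state}. Neither feature is needed for the statement as posed.

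Your derivation of Eq.~\eqref{eq:edge-bessel-product} is essentially the paper's. You use independence of \(p\) together with invariance of the law of \(x_\gamma\) under \(x\mapsto-x\), or equivalently the fact that the formula depends only on \(\norm\xi\).
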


\begin{proof}
For
\[
 x_\gamma^{(n)}
 =(1-\gamma)\sum_{j=0}^n\gamma^jp_j,
\]
independence gives
\[
 \E e^{\,\mathrm{i}\langle\xi,x_\gamma^{(n)}\rangle}
 =\prod_{j=0}^n
 \psi_d\bigl((1-\gamma)\gamma^j\norm\xi\bigr).
\]
At the origin,
\[
 \psi_d(r)=1-\frac{r^2}{2(d+2)}+O(r^4).
\]
On every compact set of values of \(\xi\), the sum of
\(\lvert1-\psi_d((1-\gamma)\gamma^j\norm\xi)\rvert\) is bounded by a convergent
geometric series, after increasing the bound to cover finitely many initial
terms.  The finite products therefore converge locally uniformly.  Moreover,
\(\norm{x_\gamma-x_\gamma^{(n)}}\leq\gamma^{n+1}\), so the partial sums converge
almost surely and in every \(L^q\), \(q>0\).  Their characteristic functions
converge to that of \(x_\gamma\), which proves
Eq.\eqref{eq:bessel-product}.  The additional point is independent, and
\(\psi_d\) is even.  Multiplying the characteristic functions proves
Eq.\eqref{eq:edge-bessel-product}.
\end{proof}

For \(r>0\), let
\[
 s_r(\gamma)=\frac{(1-\gamma)^r}{1-\gamma^r}.
\]

\begin{proposition}
\label{prop:fourth-moments}
The fourth moments of the stationary state and insertion vector are
\begin{align}
 \E\norm{x_\gamma}^4
 &=\frac{d}{d+2}s_2(\gamma)^2
   -\frac{2d}{(d+2)(d+4)}s_4(\gamma),
 \label{eq:x-fourth}\\
 \E\norm{p-x_\gamma}^4
 &=\frac{d}{d+2}\bigl(1+s_2(\gamma)\bigr)^2
   -\frac{2d}{(d+2)(d+4)}\bigl(1+s_4(\gamma)\bigr),
 \label{eq:e-fourth}
\end{align}
where \(p\) is an additional independent uniform point.
\end{proposition}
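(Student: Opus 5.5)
The plan is to compute fourth moments of a sum of independent, centered, rotationally invariant vectors by expanding \(\norm{\sum_j b_jp_j}^4\). For independent isotropic centered vectors \(y_j\), write \(\norm{\sum_j y_j}^2=\sum_j\norm{y_j}^2+2\sum_{i<j}\langle y_i,y_j\rangle\) and square. Terms with an odd number of appearances of a single index vanish by symmetry. This leaves
\[
 \E\Bigl\lVert\sum_j y_j\Bigr\rVert^4
 =\sum_j\E\norm{y_j}^4+2\sum_{i\neq j}\E\norm{y_i}^2\E\norm{y_j}^2
 \cdot\tfrac12\Bigl(1+\tfrac{2}{d}\Bigr)\cdot\ldots
\]
The cleaner route is the identity \(\E\norm{\sum y_j}^4=\bigl(\sum\E\norm{y_j}^2\bigr)^2+\frac2d\sum_{i\neq j}\E\norm{y_i}^2\E\norm{y_j}^2+\sum_j\bigl(\E\norm{y_j}^4-(\E\norm{y_j}^2)^2\bigr)\). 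The cross term comes from \(\E\langle y_i,y_j\rangle^2=\E\norm{y_i}^2\E\norm{y_j}^2/d\) by isotropy, since \(\E y_iy_i^{\mathsf T}=(\E\norm{y_i}^2/d)I\). Rewriting \(\sum_{i\ne j}=(\sum)^2-\sum_j\), this becomes
\[
 \Bigl(1+\tfrac2d\Bigr)S_2^2+\sum_j\Bigl(\E\norm{y_j}^4-\bigl(1+\tfrac2d\bigr)(\E\norm{y_j}^2)^2\Bigr),
 \qquad S_2=\sum_j\E\norm{y_j}^2 .
\]

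Next, I would insert \(y_j=b_jp_j\) with \(\E\norm{p}^2=d/(d+2)\) and \(\E\norm p^4=d/(d+4)\). The per-term bracket is then \(b_j^4\bigl[\frac d{d+4}-\frac{(d+2)}{d}\frac{d^2}{(d+2)^2}\bigr]=b_j^4\bigl[\frac d{d+4}-\frac d{d+2}\bigr]=-\frac{2d}{(d+2)(d+4)}b_j^4\). The leading term is \(\frac{d+2}{d}\cdot\frac{d^2}{(d+2)^2}\bigl(\sum b_j^2\bigr)^2=\frac d{d+2}\bigl(\sum b_j^2\bigr)^2\). So for any square-summable coefficients,
\[
 \E\Bigl\lVert\sum_jb_jp_j\Bigr\rVert^4=\frac d{d+2}\Bigl(\sum_jb_j^2\Bigr)^2-\frac{2d}{(d+2)(d+4)}\sum_jb_j^4 .
\]

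For \(x_\gamma\), take \(b_j=(1-\gamma)\gamma^j\). Then \(\sum b_j^2=(1-\gamma)^2/(1-\gamma^2)=s_2(\gamma)\) and \(\sum b_j^4=(1-\gamma)^4/(1-\gamma^4)=s_4(\gamma)\), which is Eq.\eqref{eq:x-fourth}. For \(p-x_\gamma\), use the symmetry of \(p_j\) to replace \(-x_\gamma\) by \(x_\gamma\) in law. This gives the same form with one extra coefficient equal to one, so \(\sum b^2=1+s_2\) and \(\sum b^4=1+s_4\), which is Eq.\eqref{eq:e-fourth}.

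The only analytic step is passing from the truncated sums to the infinite series. As in Proposition~\ref{prop:stationary-moment-bounds}, the tail has norm at most \(\gamma^{n+1}\) and all variables are bounded by \(2\), so the fourth moments converge by bounded convergence. The main point needing care is the isotropy identity \(\E\langle y_i,y_j\rangle^2\), together with the fact that the mixed terms \(\E\norm{y_i}^2\langle y_i,y_j\rangle\) vanish because \(y_j\) is centered and independent. The cross-term count \(4\sum_{i<j}=2\sum_{i\ne j}\) also has to be checked.
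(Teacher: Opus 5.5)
Your proposal is correct and is essentially the paper's proof. You expand the fourth power of a finite weighted sum and keep only terms in which every index appears at least twice. You evaluate the cross term by isotropy, $\E pp^{\mathsf T}=I/(d+2)$, which gives $\frac{d}{d+2}\bigl(\sum_j b_j^2\bigr)^2-\frac{2d}{(d+2)(d+4)}\sum_j b_j^4$, then specialize to $b_j=(1-\gamma)\gamma^j$, adding the coefficient $1$ for $p$, and pass to the limit using the deterministic tail bound $\gamma^{n+1}$. Your general identity for isotropic centered summands just repackages the paper's pairwise count. The symmetry step for $p-x_\gamma$ is not needed, since the formula depends only on $b_j^2$ and $b_j^4$.
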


\begin{proof}
For a uniform point \(p\) in \(\B\),
\[
 \E p=0,\qquad
 \E\norm p^2=\frac{d}{d+2},\qquad
 \E\norm p^4=\frac{d}{d+4},
\]
and isotropy gives \(\E pp^{\mathsf T}=I/(d+2)\).
First take a finite sum \(z_n=\sum_{j=0}^n b_jp_j\).  Expanding its fourth
power, only terms in which every index occurs at least twice survive.  For
\(j\neq k\),
\[
 \E\langle p_j,p_k\rangle^2
 =\operatorname{tr}\left(\frac{I}{d+2}\frac{I}{d+2}\right)
 =\frac{d}{(d+2)^2}.
\]
The contribution of a pair \(j<k\) is
\[
 b_j^2b_k^2
 \left\{
 2\left(\frac{d}{d+2}\right)^2
 +\frac{4d}{(d+2)^2}
 \right\}
 =\frac{2d}{d+2}b_j^2b_k^2.
\]
Adding the single-index terms gives
\begin{equation}
 \E\norm{z_n}^4
 =\frac{d}{d+2}\left(\sum_{j=0}^n b_j^2\right)^2
  -\frac{2d}{(d+2)(d+4)}\sum_{j=0}^n b_j^4.
 \label{eq:general-fourth}
\end{equation}
For \(x_\gamma^{(n)}\), take
\(b_j=(1-\gamma)\gamma^j\).  The deterministic tail bound
\(\norm{x_\gamma-x_\gamma^{(n)}}\leq\gamma^{n+1}\) gives convergence in
\(L^4\).  Letting \(n\) tend to infinity in
Eq.\eqref{eq:general-fourth} proves Eq.\eqref{eq:x-fourth}.
For \(p-x_\gamma\), first add the independent coefficient \(1\) to the finite
sum and change the other signs.  The same \(L^4\) limit proves
Eq.\eqref{eq:e-fourth}.
\end{proof}

\section{Endpoint limits of the stationary state and insertion length}
\label{app:stationary-limits}

For \(0\leq\gamma<1\), the stationary state has the series representation
\[
 x_\gamma=(1-\gamma)\sum_{j=0}^{\infty}\gamma^j p_j
\]
from Eq.\eqref{eq:stationary-series}.  This representation makes both endpoint
regimes explicit.  At \(\gamma=0\), \(x_0=p_0\) is uniform in \(\B\).  Under
the coupling provided by the same sequence \((p_j)_{j\geq0}\),
\begin{equation}
 \E\norm{x_\gamma-p_0}^2
 =\frac{2d}{d+2}\frac{\gamma^2}{1+\gamma}.
 \label{eq:low-memory-coupling}
\end{equation}
Indeed, the coefficients of \(x_\gamma-p_0\) are \(-\gamma\) for \(p_0\) and
\((1-\gamma)\gamma^j\) for \(p_j\), \(j\geq1\).  Independence and centering
therefore give
\[
 \E\norm{x_\gamma-p_0}^2
 =\frac{d}{d+2}\left(\gamma^2
 +(1-\gamma)^2\sum_{j\geq1}\gamma^{2j}\right),
\]
which is Eq.\eqref{eq:low-memory-coupling}.  In particular,
\(x_\gamma\to p_0\) in \(L^2\) as \(\gamma\to0^+\), and
\[
 x_\gamma=p_0+\gamma(p_1-p_0)+O_{L^2}(\gamma^2).
\]
For intermediate values of \(\gamma\), Theorem~\ref{thm:product} gives the
exact characteristic function, Proposition~\ref{prop:stationary-moment-bounds}
gives its second moment, and the peakedness comparison in the proof of
Theorem~\ref{thm:stochastic-monotonicity} shows that the state becomes more
concentrated about \(O\) as \(\gamma\) increases.  As \(\gamma\to1^-\), the
unscaled state converges to \(O\), while the following theorem identifies the
nondegenerate limit after rescaling.  The endpoint \(\gamma=1\) is distinct:
the recursion preserves its initial state and hence admits every initial law as
a stationary law.  The stationary distribution selected by
Eq.\eqref{eq:stationary-series} is consequently defined only for
\(\gamma<1\).

\begin{theorem}
\label{thm:gaussian-state}
As \(\gamma\) tends to one,
\begin{equation}
 \sqrt{\frac{1+\gamma}{1-\gamma}}\,x_\gamma
 \ \Longrightarrow\
 \mathcal N\left(0,\frac1{d+2}I_d\right).
 \label{eq:gaussian-state}
\end{equation}
Moreover,
\begin{equation}
 \frac{\E\norm{x_\gamma}^4}
      {(\E\norm{x_\gamma}^2)^2}
 =\frac{d+2}{d}
 -\frac{2(d+2)}{d(d+4)}\frac{1-\gamma^2}{1+\gamma^2}.
 \label{eq:radial-fourth-ratio}
\end{equation}
Thus the standardized radial fourth moment converges to
\((d+2)/d\), the value for an isotropic Gaussian vector.
\end{theorem}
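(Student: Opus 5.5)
The plan is to prove the Gaussian limit with the Lindeberg--Feller central limit theorem for triangular arrays. Two facts come first. By Eq.\eqref{eq:x-second}, the scaling factor \(\sigma_\gamma^{-1}=\sqrt{(1+\gamma)/(1-\gamma)}\) normalizes the second moment: \(\E\norm{x_\gamma/\sigma_\gamma}^2=s_d\). By isotropy, the covariance is then \(I/(d+2)\). Write
\[
 \frac{x_\gamma}{\sigma_\gamma}=\sum_{j\geq0}c_j p_j,
 \qquad
 c_j=\sqrt{1-\gamma^2}\,\gamma^j,
 \qquad
 \sum_j c_j^2=1 .
\]
The summands are independent, centered, isotropic, and bounded by \(c_j\leq\max_jc_j=\sqrt{1-\gamma^2}\to0\).

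For a sequence \(\gamma_n\to1\), truncate the series at an index \(n'\) with \(\gamma_n^{2n'}\to0\). The truncation error has second moment \(s_d\gamma_n^{2(n'+1)}\to0\), so it is negligible in probability. The truncated sums form a triangular array of independent vectors. Their covariances sum to \((1-\gamma_n^{2(n'+1)})I/(d+2)\to I/(d+2)\).

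The Lindeberg condition holds trivially. Each summand has norm at most \(\sqrt{1-\gamma_n^2}\), which is eventually below any \(\varepsilon>0\), so the truncated second moments are eventually zero. The multivariate Lindeberg--Feller theorem, obtained via Cramér--Wold by applying the scalar version to \(\langle u,\cdot\rangle\) for each fixed \(u\), then gives Eq.\eqref{eq:gaussian-state}. Since the limit does not depend on the chosen sequence, the convergence holds as \(\gamma\to1^-\).

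An alternative route is Theorem~\ref{thm:product}. The logarithm of the product is \(\sum_j\log\psi_d(c_j\norm\xi)\). Using \(\psi_d(r)=1-r^2/(2(d+2))+O(r^4)\) and \(\sum_jc_j^4\to0\), this tends to \(-\norm\xi^2/(2(d+2))\). Lévy's continuity theorem then finishes the argument. I would mention this as a check, but use Lindeberg as the main argument.

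For Eq.\eqref{eq:radial-fourth-ratio}, combine Eq.\eqref{eq:x-fourth} with \(\E\norm{x_\gamma}^2=s_d s_2(\gamma)\). Dividing gives
\[
 \frac{\E\norm{x_\gamma}^4}{(\E\norm{x_\gamma}^2)^2}
 =\frac{d+2}{d}-\frac{2(d+2)}{d(d+4)}\frac{s_4(\gamma)}{s_2(\gamma)^2}.
\]
Then simplify:
\[
 \frac{s_4}{s_2^2}
 =\frac{(1-\gamma)^4/(1-\gamma^4)}{(1-\gamma)^4/(1-\gamma^2)^2}
 =\frac{(1-\gamma^2)^2}{1-\gamma^4}
 =\frac{1-\gamma^2}{1+\gamma^2}.
\]
This tends to zero, which gives the limit \((d+2)/d\). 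As a check, if \(G\sim\mathcal N(0,\tau^2I_d)\), then \(\E\norm G^4=\tau^4d(d+2)\) and \((\E\norm G^2)^2=\tau^4d^2\), so the ratio is \((d+2)/d\).

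The main obstacle is bookkeeping rather than substance. The array has infinitely many terms, so it must be truncated or the infinite-sum version of Lindeberg invoked. Beyond that, one only needs to confirm the normalization constant and that convergence along every sequence yields convergence as \(\gamma\to1^-\). The fourth-moment identity is a direct algebraic consequence of Proposition~\ref{prop:fourth-moments}.
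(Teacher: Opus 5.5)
Your argument is correct, but your main route differs from the paper's. The paper proves the Gaussian limit by the argument you list only as a check. It writes the characteristic function of the rescaled state as the Bessel product $\prod_j\psi_d(b_j(\gamma)\norm\xi)$ from Theorem~\ref{thm:product}, with $b_j(\gamma)=\sqrt{1-\gamma^2}\,\gamma^j$. It then applies $\log\psi_d(r)=-r^2/(2(d+2))+O(r^4)$ uniformly in $j$, which is allowed because $\max_jb_j(\gamma)\to0$. The remainder is bounded by $O\bigl(\norm\xi^4\sum_jb_j(\gamma)^4\bigr)$ with $\sum_jb_j(\gamma)^4=(1-\gamma^2)/(1+\gamma^2)$, and the continuity theorem in $\R^d$ finishes the proof.

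Your Lindeberg--Feller argument via Cram\'er--Wold uses only independence, centering, the covariance $I/(d+2)$, and the uniform bound $\norm{c_jp_j}\leq\sqrt{1-\gamma^2}$. It never needs the explicit transform, and it would apply unchanged to any bounded centered input law. The cost is the truncation, the Slutsky step for the tail, and the passage from sequences to $\gamma\to1^-$.

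The paper's route needs no truncation, because the infinite product is already established. It is also quantitative: the same quantity $\sum_jb_j(\gamma)^4=s_4(\gamma)/s_2(\gamma)^2$ controls both the characteristic-function error and the deviation of the standardized fourth moment from $(d+2)/d$. This links the two halves of the theorem. Your treatment of Eq.~\eqref{eq:radial-fourth-ratio} is the same as the paper's.
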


\begin{proof}
After multiplication by the factor in Eq.\eqref{eq:gaussian-state}, the
coefficient of \(p_j\) is
\[
 b_j(\gamma)=\sqrt{1-\gamma^2}\,\gamma^j.
\]
These coefficients satisfy
\[
 \sum_{j\geq0}b_j(\gamma)^2=1,
 \qquad
 \sum_{j\geq0}b_j(\gamma)^4
 =\frac{1-\gamma^2}{1+\gamma^2}.
\]
Let \(y_\gamma\) denote the vector on the left of
Eq.\eqref{eq:gaussian-state}.  For fixed \(\xi\in\R^d\), independence and
Lemma~\ref{lem:ball-transform} give
\[
 \E e^{\,\mathrm{i}\langle\xi,y_\gamma\rangle}
 =\prod_{j=0}^\infty\psi_d\bigl(b_j(\gamma)\norm\xi\bigr).
\]
The expansion at the origin is
\[
 \log\psi_d(r)=-\frac{r^2}{2(d+2)}+O(r^4).
\]
Because \(\max_j b_j(\gamma)=\sqrt{1-\gamma^2}\) tends to zero, this expansion
applies uniformly to all factors for each fixed \(\xi\) when \(\gamma\) is
sufficiently close to one.  Consequently,
\begin{align*}
 \log\E e^{\,\mathrm{i}\langle\xi,y_\gamma\rangle}
 &=-\frac{\norm\xi^2}{2(d+2)}
   +O\left(\norm\xi^4\sum_{j\geq0}b_j(\gamma)^4\right)\\
 &=-\frac{\norm\xi^2}{2(d+2)}+o(1).
\end{align*}
The limiting function is the characteristic function of
\(\mathcal N(0,I_d/(d+2))\).  The continuity theorem for characteristic
functions in \(\R^d\) \cite[Sec.~29]{Billingsley1995} proves
Eq.\eqref{eq:gaussian-state}.  Dividing
Eq.\eqref{eq:x-fourth} by the square of Eq.\eqref{eq:x-second}, and using
\[
 \frac{s_4(\gamma)}{s_2(\gamma)^2}
 =\frac{1-\gamma^2}{1+\gamma^2},
\]
gives Eq.\eqref{eq:radial-fourth-ratio}.
\end{proof}

The upper block of Figure~\ref{fig:stationary-endpoint-interpolation}
illustrates both endpoint regimes in dimension two.  It displays one coordinate of
\(Y_\gamma=\sqrt{(1+\gamma)/(1-\gamma)}\,x_\gamma\).  As
\(\gamma\to0^+\), this variable converges to one coordinate of a uniform point
in the unit disk, with density
\begin{equation}
 f_0(y)=\frac{2}{\pi}\sqrt{1-y^2},\qquad |y|\leq1.
 \label{eq:disk-coordinate-density}
\end{equation}
As \(\gamma\to1^-\), Theorem~\ref{thm:gaussian-state} gives the density of
\(\mathcal N(0,1/4)\).  These two fixed curves make visible which endpoint
better describes each intermediate distribution.  The C++/CGAL implementation
retains 150 million observations after burn-in for each value of \(\gamma\).
The display histograms use 2,400 bins and light smoothing.  The diagnostics
\(D_0\) and \(D_1\) are binned CDF distances to the low-memory and high-memory
limits, respectively, evaluated on a separate partition with 1,048,576 bins.
They are trajectory diagnostics rather than iid test statistics or confidence
bounds.  The lower block uses the same numerical scale to display the distinct
endpoint interpolation of the insertion length.

We now turn from the state \(x_\gamma\) to the insertion length
\(Z_\gamma=\norm{p-x_\gamma}\), where \(p\) is an additional independent
uniform point in \(\B\).  Its two endpoints differ from those of the state.
Let \(\rho=\norm p\), where \(p\) is a uniform point in \(\B\).  Its density is
\(d r^{d-1}\) on \([0,1]\), and its distribution function is
\(\Pr(\rho\leq r)=r^d\).  For real-valued random variables \(X\) and \(Y\),
\(W_1(\mathcal L(X),\mathcal L(Y))\) denotes the infimum of
\(\E|X'-Y'|\) over all couplings \((X',Y')\) with the prescribed marginal
laws.

\begin{theorem}
\label{thm:radial-approximation}
The Wasserstein distance between the laws of \(Z_\gamma\) and \(\rho\)
satisfies
\begin{equation}
 W_1\bigl(\mathcal L(Z_\gamma),\mathcal L(\rho)\bigr)
 \leq
 \sqrt{\frac{d}{d+2}\frac{1-\gamma}{1+\gamma}}.
 \label{eq:wasserstein-radial}
\end{equation}
\end{theorem}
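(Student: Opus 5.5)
The plan is to bound the Wasserstein distance by an explicit coupling.  The quantities \(Z_\gamma=\norm{p-x_\gamma}\) and \(\rho=\norm p\) are defined on the same probability space, with \(p\) independent of \(x_\gamma\).  This pair is therefore an admissible coupling of the two laws.  The definition of \(W_1\) as an infimum over couplings then gives
\[
 W_1\bigl(\mathcal L(Z_\gamma),\mathcal L(\rho)\bigr)
 \leq\E\bigl|\norm{p-x_\gamma}-\norm p\bigr|.
\]

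Next, the reverse triangle inequality gives, pointwise,
\[
 \bigl|\norm{p-x_\gamma}-\norm p\bigr|\leq\norm{x_\gamma}.
\]
Hence the right-hand side above is at most \(\E\norm{x_\gamma}\).  The Cauchy--Schwarz inequality, or equivalently Jensen's inequality, then gives \(\E\norm{x_\gamma}\leq(\E\norm{x_\gamma}^2)^{1/2}\).

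It remains to insert the exact second moment from Eq.\eqref{eq:x-second} in Proposition~\ref{prop:stationary-moment-bounds}:
\[
 \E\norm{x_\gamma}^2=s_d\frac{1-\gamma}{1+\gamma},
 \qquad s_d=\frac d{d+2}.
\]
Taking the square root yields exactly the right-hand side of Eq.\eqref{eq:wasserstein-radial}.

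No step is a genuine obstacle.  The only point needing care is that the coupling uses the same \(p\) in both variables, so that the pointwise triangle bound applies.  This is legitimate because \(\rho\) has the law of \(\norm p\) regardless of \(x_\gamma\).
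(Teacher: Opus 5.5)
Your proposal is correct and matches the paper's proof step for step. Both use the coupling through the shared point \(p\), the reverse triangle inequality \(|Z_\gamma-\rho|\leq\norm{x_\gamma}\), Cauchy--Schwarz, and the exact stationary second moment in Eq.~\eqref{eq:x-second}.
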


\begin{proof}
Use the coupling in which \(Z_\gamma\) and \(\rho\) share the point \(p\).
The reverse triangle inequality gives
\[
 |Z_\gamma-\rho|\leq\norm{x_\gamma}.
\]
The definition of \(W_1\), followed by Cauchy--Schwarz and
Eq.\eqref{eq:x-second}, gives
\[
 W_1\bigl(\mathcal L(Z_\gamma),\mathcal L(\rho)\bigr)
 \leq \E\norm{x_\gamma}
 \leq \sqrt{\E\norm{x_\gamma}^2}
 =\sqrt{\frac{d}{d+2}\frac{1-\gamma}{1+\gamma}}.
\]
\end{proof}

The opposite endpoint also has an explicit distribution.  For \(a,b>0\), let

\[
 I_t(a,b)=\frac{\int_0^t s^{a-1}(1-s)^{b-1}\,ds}
                 {\int_0^1 s^{a-1}(1-s)^{b-1}\,ds}
\]

denote the regularized incomplete beta function.

\begin{proposition}
\label{prop:zero-memory-law}
At \(\gamma=0\), \(Z_0\) is the distance between two independent uniform
points in \(\B\).  Its density is
\begin{equation}
 f_{Z_0}(r)
 =d r^{d-1} I_{1-r^2/4}\left(\frac{d+1}{2},\frac12\right),
 \qquad 0<r<2,
 \label{eq:zero-memory-density}
\end{equation}
and its distribution function is
\begin{equation}
 \Pr(Z_0\leq r)
 =r^d I_{1-r^2/4}\left(\frac{d+1}{2},\frac12\right)
  +I_{r^2/4}\left(\frac{d+1}{2},\frac{d+1}{2}\right),
 \qquad 0\leq r\leq2.
 \label{eq:zero-memory-cdf}
\end{equation}
In particular, when \(d=1\), the density is \(1-r/2\) and the distribution
function is \(r-r^2/4\) on \([0,2]\).
\end{proposition}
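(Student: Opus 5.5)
At \(\gamma=0\) the stationary state is \(x_0=p_0\) (Proposition~\ref{prop:stationary-series}), so \(Z_0=\norm{p-p_0}\) is the distance between two independent uniform points. I will compute its law by conditioning on the position of the second point. Conditional on \(\norm{p_0}=s\), Lemma~\ref{lem:translated-ball-overlap}'s overlap gives
\[
 \Pr(Z_0\leq r\mid \norm{p_0}=s)=\frac{\operatorname{vol}(\B\cap B(sv,r))}{\operatorname{vol}(\B)}.
\]
Differentiating in \(r\) is easier than working with the lens volume directly. The \(r\)-derivative of the overlap volume is the \((d-1)\)-dimensional area of the part of the sphere \(\partial B(sv,r)\) lying inside \(\B\). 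Hence the density is
\[
 f_{Z_0}(r)=\frac{r^{d-1}\,\sigma_{d-1}}{\operatorname{vol}(\B)}\,\E\bigl[\Pr(\norm{p_0+r\omega}\leq1\mid p_0)\bigr],
\]
where \(\omega\) is uniform on the sphere and \(\sigma_{d-1}/\operatorname{vol}(\B)=d\). Swapping the roles of \(\omega\) and \(p_0\), this is \(d r^{d-1}\Pr(p_0\in B(r\omega,1))\), the normalized volume of the intersection of two unit balls whose centers are \(r\) apart. That volume equals twice a cap of height \(1-r/2\).

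The cap volume of the unit ball, cut at distance \(h=r/2\) from the center, is a standard incomplete beta expression. Slicing orthogonally to the axis, as in the proof of Lemma~\ref{lem:ball-transform}, the cap fraction is \(\tfrac12 I_{1-h^2}\bigl(\tfrac{d+1}2,\tfrac12\bigr)\), after substituting \(u=1-t^2\) in \(\int_h^1(1-t^2)^{(d-1)/2}dt\). Doubling gives \(I_{1-r^2/4}\bigl(\tfrac{d+1}{2},\tfrac12\bigr)\), which is \eqref{eq:zero-memory-density}.

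For the distribution function, I will integrate the density from \(0\) to \(r\) by parts, writing \(d t^{d-1}=(t^d)'\). The boundary term is \(r^dI_{1-r^2/4}(\frac{d+1}2,\frac12)\). The remaining term is
\[
 -\int_0^r t^d\,\frac{d}{dt}I_{1-t^2/4}\Bigl(\tfrac{d+1}2,\tfrac12\Bigr)\,dt,
\]
and I will expand it with the explicit beta density. Substituting \(u=t^2/4\), the integrand becomes \(u^{d/2}(1-u)^{(d-1)/2}u^{-1/2}\,du\) times a constant, that is \(u^{(d-1)/2}(1-u)^{(d-1)/2}\). This is a \(\mathrm{Beta}(\frac{d+1}2,\frac{d+1}2)\) kernel, so the term is \(I_{r^2/4}(\frac{d+1}2,\frac{d+1}2)\) up to a constant.

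The main obstacle is verifying that this normalizing constant is exactly one. It requires the identity
\[
 2^{d}\,B\bigl(\tfrac{d+1}2,\tfrac{d+1}2\bigr)=B\bigl(\tfrac{d+1}2,\tfrac12\bigr),
\]
which follows from the Legendre duplication formula. As an independent check, the total mass at \(r=2\) should come out as \(0+1=1\).

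For \(d=1\), \(I_{1-r^2/4}(1,\tfrac12)=1-(r^2/4)^{1/2}=1-r/2\), giving density \(1-r/2\) and distribution function \(r-r^2/4\), which integrates correctly.
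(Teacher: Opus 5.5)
Your proposal is correct and takes essentially the paper's approach: conditioning on \(p_0\), using polar coordinates about it, and swapping the order of integration gives the same expression \(d r^{d-1}\operatorname{vol}(\B\cap B(r\omega,1))/\operatorname{vol}(\B)\) that the paper obtains from the convolution density \(\operatorname{vol}(\B\cap(\B+u))/\operatorname{vol}(\B)^2\) of the difference vector together with the radial Jacobian. You then evaluate the lens volume by the same slicing, and your integration by parts closes the distribution function with the same duplication identity \(2^dB(\tfrac{d+1}2,\tfrac{d+1}2)=B(\tfrac{d+1}2,\tfrac12)\) that the paper uses.
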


\begin{proof}
Let \(v_d\) be the volume of the unit ball.  If \(u\in\R^d\) and
\(r=\norm u\leq2\), the density of the difference of two independent uniform
points, evaluated at \(u\), is
\[
 \frac{\operatorname{vol}(\B\cap(\B+u))}{v_d^2}.
\]
Slicing the intersection orthogonally to the line through the two centers
gives
\[
 \frac{\operatorname{vol}(\B\cap(\B+u))}{v_d}
 =I_{1-r^2/4}\left(\frac{d+1}{2},\frac12\right).
\]
Multiplication by the radial Jacobian \(d v_d r^{d-1}\) proves
Eq.\eqref{eq:zero-memory-density}; this is the ball-distance distribution of
Hammersley~\cite{Hammersley1950}.  Integrating the density gives
Eq.\eqref{eq:zero-memory-cdf}.  Equivalently, differentiation of the
right-hand side of Eq.\eqref{eq:zero-memory-cdf}, together with the beta
duplication identity
\[
 2^d B\left(\frac{d+1}{2},\frac{d+1}{2}\right)
 =B\left(\frac12,\frac{d+1}{2}\right),
\]
recovers Eq.\eqref{eq:zero-memory-density}, and the value at \(r=0\) fixes the
integration constant.
\end{proof}

For \(d=2\), the two limiting densities used in the lower block of
Figure~\ref{fig:stationary-endpoint-interpolation} have simple forms.  At the
low-memory endpoint,
\begin{equation}
 f_{Z_0}(r)=\frac{4r}{\pi}\left(
 \arccos\frac r2-\frac r2\sqrt{1-\frac{r^2}{4}}
 \right),\qquad 0\leq r\leq2,
 \label{eq:disk-distance-density}
\end{equation}
whereas the high-memory limit \(\rho\) has density \(2r\) on \([0,1]\).
The same simulations used in the upper block
record \(Z_\gamma\) before every state update.  The figure uses the same eight
values of \(\gamma\), sample size, display resolution, and independent CDF
partition.  Thus its \(D_0\) and \(D_1\) values measure the approach to the
disk-distance and uniform-radius laws on a common numerical scale.

\begin{figure}[!htbp]
\centering
\includegraphics[width=\textwidth]{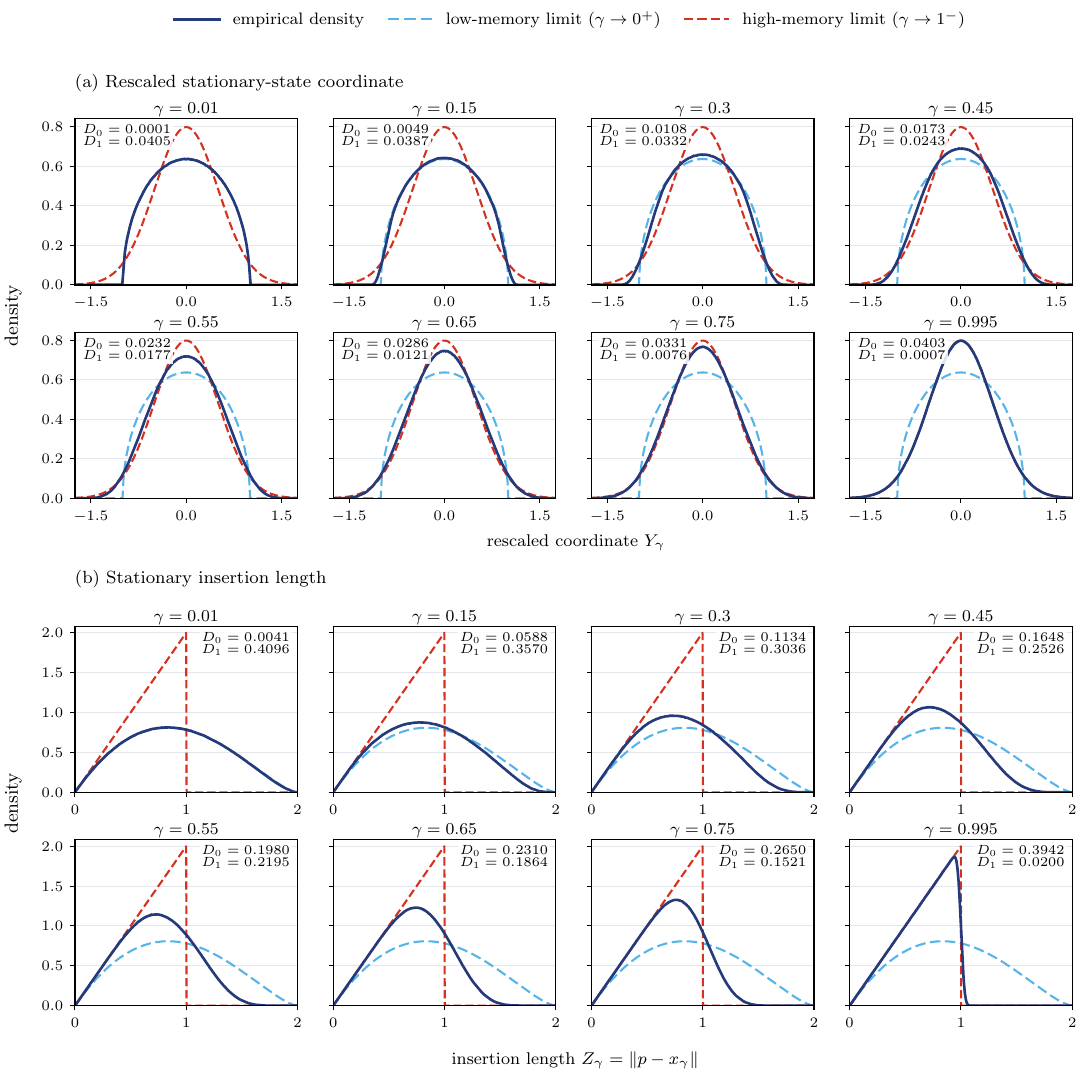}
\caption{Endpoint interpolation in dimension two, based on 150 million
retained observations for each displayed value of \(\gamma\).  The upper
block shows one-coordinate densities of
\(Y_\gamma=\sqrt{(1+\gamma)/(1-\gamma)}\,x_\gamma\), compared with the
uniform-disk coordinate law and the Gaussian limit.  The lower block shows
densities of \(Z_\gamma=\norm{p-x_\gamma}\), compared with the distance between
two independent uniform points in the disk and the uniform-radius limit.
Dark-blue solid curves are empirical densities, light-blue dashed curves are the
low-memory limits, and red dashed curves are the high-memory limits.  Each
panel reports the binned CDF distances \(D_0\) and \(D_1\) to these reference
laws.  The common ordering of \(\gamma\) makes the two interpolations directly
comparable.}
\label{fig:stationary-endpoint-interpolation}
\end{figure}

Proposition~\ref{prop:zero-memory-law},
Theorem~\ref{thm:stochastic-monotonicity}, and
Theorem~\ref{thm:radial-approximation} now describe the interpolation without
identifying the endpoint distributions with those of \(x_\gamma\).  At
\(\gamma=0\), \(Z_0\) has the ball-distance law in
Eq.\eqref{eq:zero-memory-cdf}.  As \(\gamma\) increases, its tail probabilities
decrease.  For intermediate values, the exact characteristic function of the
insertion vector is given by Eq.\eqref{eq:edge-bessel-product}.  Finally,
\[
 Z_\gamma\ \Longrightarrow\ \rho,
 \qquad
 \Pr(\rho\leq r)=r^d\quad(0\leq r\leq1),
 \qquad \gamma\to1^-.
\]
This last limit is one-sided.  At \(\gamma=1\), the state remains equal to its
initial value, so the corresponding insertion-length distribution depends on
that initialization.  In particular, initialization at \(O\) gives the radial
law immediately, while a uniformly distributed initial state gives the
ball-distance law at every step.

\section{Periodic endpoint blocks beyond the cubic range}
\label{app:periodic-blocks}

Theorem~\ref{thm:through-three} covers \(0<\alpha\leq3\).  The following family
shows that Eq.\eqref{eq:main-value} cannot remain valid for every
\(\alpha>3\).

\begin{proposition}
\label{prop:blocks}
Let \(0<\gamma<1\), and fix a unit vector \(u\).  Repeat \(m\) copies of
\(u\), followed by \(m\) copies of \(-u\), periodically.  The mean insertion
cost on the attracting periodic orbit is
\begin{equation}
 C_{m,\alpha}(\gamma)
 =\frac{2^\alpha}{m}
  \frac{1-\gamma^{\alpha m}}
  {(1-\gamma^\alpha)(1+\gamma^m)^\alpha}.
 \label{eq:block-cost}
\end{equation}
The case \(m=1\) is antipodal alternation.  For every fixed \(m\geq2\), the
inequality
\(C_{m,\alpha}(\gamma)>C_{1,\alpha}(\gamma)\) holds for all sufficiently large
\(\alpha\).  Thus the \(m\)-block construction produces a strictly larger
adversarial mean cost than alternation in that range.
\end{proposition}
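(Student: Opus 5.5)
The plan is to reduce everything to one dimension along the diameter through \(u\), solve for the periodic orbit explicitly, and then compare the closed form with alternation as \(\alpha\to\infty\). Since every input lies on the line \(\R u\), a state started on that line stays there. By Proposition~\ref{prop:initial-state} and Lemma~\ref{lem:contraction}, I may take the initial state on the line and write \(x_i=\xi_iu\) with \(\xi_i\in[-1,1]\). The period is \(2m\): \(m\) copies of \(u\), then \(m\) copies of \(-u\). By the symmetry \(\xi\mapsto-\xi\), which exchanges the two blocks, I look for a periodic orbit whose state just before a \(u\)-block is \(-c\) and just before a \(-u\)-block is \(c\), for some \(c\in[0,1)\).

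\textbf{Step 1 (solving the orbit).} During a \(u\)-block the recursion \(\xi_k=\gamma\xi_{k-1}+(1-\gamma)\) gives \(1-\xi_k=\gamma^k(1+c)\) for \(k=0,\ldots,m\). The \(k\)-th insertion length in the block is therefore
\[
 1-\xi_{k-1}=\gamma^{k-1}(1+c),\qquad k=1,\ldots,m.
\]
The consistency condition \(\xi_m=c\) reads \(1-\gamma^m(1+c)=c\). Hence \(c=(1-\gamma^m)/(1+\gamma^m)\) and \(1+c=2/(1+\gamma^m)\). Summing the \(\alpha\)-powers of the insertion lengths over one block gives
\[
 \Bigl(\frac{2}{1+\gamma^m}\Bigr)^\alpha\sum_{k=0}^{m-1}\gamma^{\alpha k}
 =\Bigl(\frac{2}{1+\gamma^m}\Bigr)^\alpha\frac{1-\gamma^{\alpha m}}{1-\gamma^\alpha}.
\]
The \(-u\)-block contributes the same amount by symmetry. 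Dividing the total by the period length \(2m\) yields Eq.\eqref{eq:block-cost}. For \(m=1\) the formula reduces to \((2/(1+\gamma))^\alpha\), in agreement with Proposition~\ref{prop:alternation}.

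\textbf{Step 2 (attraction and relevance to the adversarial value).} Lemma~\ref{lem:contraction} shows that the state started from any \(x_0\) approaches this orbit geometrically. Lemma~\ref{lem:power-difference} then bounds the total cost difference from the periodic orbit by a constant independent of \(n\), exactly as in the proof of Proposition~\ref{prop:alternation}. Consequently the long-run mean cost of this input equals \(C_{m,\alpha}(\gamma)\), and \(C_{m,\alpha}(\gamma)\) is a lower bound for \(\A_{d,\alpha}(\gamma)\).

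\textbf{Step 3 (comparison for large \(\alpha\)).} Take the ratio
\[
 \frac{C_{m,\alpha}(\gamma)}{C_{1,\alpha}(\gamma)}
 =\frac1m\,\frac{1-\gamma^{\alpha m}}{1-\gamma^\alpha}
  \Bigl(\frac{1+\gamma}{1+\gamma^m}\Bigr)^{\alpha}.
\]
As \(\alpha\to\infty\), the middle factor tends to \(1\), because \(0<\gamma<1\). For \(m\geq2\) we have \(\gamma^m<\gamma\), so the base \((1+\gamma)/(1+\gamma^m)\) exceeds one and the last factor grows exponentially in \(\alpha\). The ratio therefore tends to infinity, and in particular exceeds one for all sufficiently large \(\alpha\).

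The only point requiring care is Step 2, namely justifying that the orbit's mean cost is what the adversary achieves regardless of the starting state. This is handled by the same contraction-plus-bounded-transient argument already used in the proof of Proposition~\ref{prop:alternation}, so I do not expect a genuine obstacle.
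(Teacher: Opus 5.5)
Your proposal is correct and follows essentially the same route as the paper. Both reduce to the diameter through \(u\) and find the periodic state: you use the symmetric ansatz \(\xi_m=c\), while the paper takes the fixed point of the full-period map \(z_0\mapsto\gamma^{2m}z_0-(1-\gamma^m)^2\). Both then sum the geometric list of insertion lengths \(2\gamma^{k-1}/(1+\gamma^m)\) and show that \(C_{m,\alpha}/C_{1,\alpha}\to\infty\) as \(\alpha\to\infty\), with your Step~2 making explicit the contraction and bounded-transient argument that the paper leaves implicit in the phrase ``attracting periodic orbit.''
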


\begin{proof}
Represent the state as a scalar multiple of \(u\), and let \(z_0\) be its value
immediately before a block of \(m\) copies of \(u\).  After this block and the
following block of \(m\) copies of \(-u\), the state is
\[
 \gamma^{2m}z_0-(1-\gamma^m)^2.
\]
Periodicity gives
\[
 z_0=-\frac{1-\gamma^m}{1+\gamma^m}.
\]
During the first half of the period, the successive insertion lengths are
\[
 \frac{2}{1+\gamma^m},
 \frac{2\gamma}{1+\gamma^m},
 \ldots,
 \frac{2\gamma^{m-1}}{1+\gamma^m}.
\]
The second half has the same list.  Summing the \(\alpha\)-powers and dividing
by \(2m\) proves Eq.\eqref{eq:block-cost}.

Moreover,
\[
 \frac{C_{m,\alpha}(\gamma)}{C_{1,\alpha}(\gamma)}
 =\frac1m\frac{1-\gamma^{\alpha m}}{1-\gamma^\alpha}
  \left(\frac{1+\gamma}{1+\gamma^m}\right)^\alpha.
\]
For \(m\geq2\), the base in the last factor is larger than one.  The fraction
involving the powers of \(\gamma\) tends to one, so the complete expression
tends to infinity as \(\alpha\) tends to infinity.
\end{proof}

\end{document}